\documentclass[preprint,review,3p]{elsarticle}

\usepackage{amssymb}
\usepackage{amsmath}
\usepackage{amsthm}
\theoremstyle{plain}
\newtheorem{theorem}{Theorem}[section]
\newtheorem{corollary}[theorem]{Corollary}
\newtheorem{lemma}[theorem]{Lemma}
\newtheorem{proposition}[theorem]{Proposition}
\theoremstyle{definition}
\newtheorem{definition}{Definition}[section]
\theoremstyle{remark}
\newtheorem{remark}{Remark}[section]
\theoremstyle{definition}
\newtheorem{assumption}{Assumption}
\usepackage{algorithm}
\usepackage{algpseudocode}
\algrenewcommand\algorithmicrequire{\textbf{Input:}}
\algrenewcommand\algorithmicensure{\textbf{Output:}} 
\usepackage{mathrsfs}
\usepackage{graphicx}
\usepackage{caption}
\usepackage{subcaption}
\usepackage{float}
\usepackage{tikz}
\usetikzlibrary{
    decorations.pathreplacing,
    arrows.meta,      
    positioning,      
    calc,             
    shapes.geometric, 
    shapes.misc,      
    fit,              
    backgrounds        
}

\begin{document}

\begin{frontmatter}



\title{Direct Band-Energy Estimation from Finite Samples via Continuous Slepian Multitapers}

\author[ntu]{Chun-Jen Shih} 
\ead{f10942045@ntu.edu.tw}
\author[ntu]{Jian-Jiun Ding\corref{cor1}}
\cortext[cor1]{Corresponding author.}
\ead{jjding@ntu.edu.tw}

\address[ntu]{Graduate Institute of Communication Engineering, National Taiwan University, No. 1, Sec. 4, Roosevelt Rd., Taipei 106319, Taiwan (R.O.C.)}

\begin{abstract}
In this work, a direct band-energy estimation approach using finite sampled data under explicit time- and frequency-domain regularity and concentration assumptions is proposed. Standard two-stage methods always estimate the power spectral density and then integrate it over the target band, thereby accumulating error through spectral smearing, leakage, and numerical band integration. Instead, we formulate the band-energy estimation as a finite-sample approximation problem. We then construct a computable direct estimator based on sampled continuous prolate spheroidal wave functions (PSWFs) in a multitaper-type framework, which links continuous-time/frequency assumptions to discrete observations. The estimator is obtained by first constructing a band $L^2$-norm estimator and then squaring it to estimate the band energy. We derive deterministic and non-asymptotic error bounds that explicitly separate four sources of approximation error: time limitation, sampling, band limitation, and finite-term truncation. In addition, several theories are derived to show how the accuracy of the estimation depends on the length, bandwidth, sampling step, time-frequency smoothness, and concentration. We also establish deterministic stability under additive noise, analyze computational complexity, and show that the method admits an efficient multiple-band implementation through offline reuse of PSWF-related computations. Numerical experiments across multiple observation lengths, bands of interest, and noise levels are conducted to show that the proposed approach achieves excellent performance compared to integrated-periodogram-type methods.
\end{abstract}


\begin{keyword}
band energy estimation \sep direct functional estimation \sep prolate spheroidal wave functions \sep multitaper methods \sep deterministic non-asymptotic error bounds \sep finite-sample approximation
\end{keyword}

\end{frontmatter}



\section{Introduction}
{Approximating spectral functionals from finite sampled data is a critical task in computational spectral analysis. In many applications, the quantity of interest is not the full power spectral density (PSD), but an integrated band quantity such as the band energy over a prescribed frequency interval. This occurs, for example, in physiological band-energy computation \cite{Application1, Application2, Application3}, subband/filter-bank analysis \cite{Application4}, and spectrum sensing \cite{Application5}, where the target is an aggregate band quantity rather than fine spectral-line resolution. These settings motivate the study of \emph{direct} finite-sample approximation procedures for spectral functionals.\\
\smallskip
\noindent
\textbf{Direct band-energy estimation.} A common way for band-energy estimation is the plug-in approach: one first estimates the PSD and then numerically integrates the estimated spectrum over the band of interest \cite{Integratedperiodogram1,Integratedperiodogram4,Integratedperiodogram6}. This is inherently a two-stage procedure for a one-stage target. Under finite observation, the estimation of the full spectrum is affected by time truncation, spectral smearing, leakage, and taper choice \cite{Window, Spectralanalysis, Welch2}. When the quantity of interest is the band functional itself, rather than the full spectrum, fine spectral-line resolution is unnecessary; what matters is the aggregate energy over the band of interest. From this perspective, direct band-functional estimation is attractive because it avoids the additional error accumulated through full-spectrum estimation followed by numerical band integration.\\
\smallskip
\noindent
\textbf{Addressed research problems.} In this paper, we study direct band-energy estimation from finite uniformly sampled data. More precisely, we address the following questions:
\begin{enumerate}
    \item Can one construct a direct estimator for the band energy from finite discrete samples of an underlying continuous-time signal?
    \item Can one derive deterministic, non-asymptotic error bounds that explicitly separate the effects of time limitation, sampling, band limitation, and finite-term truncation?
    \item Can the estimation procedure be implemented efficiently, with the principled choices of observation length, sampling step, and estimator parameters?
\end{enumerate}
\smallskip
\noindent
\textbf{The proposed approach.} Motivated by the optimal joint time-frequency energy concentration of continuous prolate spheroidal wave functions (PSWFs) \cite{PSWF1,PSWF2,PSWF4} and by the multitaper principle, we construct a multitaper-type estimator for band energy using \emph{sampled continuous PSWFs} as tapers. Classical multitaper methodology is usually formulated with discrete prolate spheroidal sequences (DPSS) \cite{PSWF3} and is primarily used for PSD estimation \cite{Multitaper1,Multitaper2,Multitaper3,Multitaper4}. Here, instead, we start from continuous PSWFs and use their samples as taper sequences, thereby linking discrete observations to continuous-time and continuous-frequency assumptions. The estimator is constructed by first estimating the band $L^2$-norm and then squaring it to obtain the band energy.\\
\smallskip
\noindent
\textbf{Position relative to existing work and novelty.} Much of the existing theory for spectral-functional estimation is stochastic in nature and follows a PSD-integration strategy, as in integrated-periodogram-type estimators \cite{Integratedperiodogram1,Integratedperiodogram4,Integratedperiodogram2,Integratedperiodogram3,Integratedperiodogram7} and their tapered variants \cite{Integratedperiodogram6,Integratedperiodogram5}. In these works, the data are modeled as stationary or locally stationary random processes, and performance is assessed through bias, variance, and asymptotic distributions. Nearby deterministic band-centric methodologies are more commonly developed in filter-bank/subband analysis \cite{Application4,Filterbank} and energy detection \cite{Energydetection}, rather than band-energy estimation problems under explicit continuous-time and continuous-frequency assumptions. Similarly, PSWF-related studies have been developed for spectral-estimation windows \cite{PSWF6}, sampling-error reduction \cite{PSWF7}, approximation of bandlimited or Sobolev-regular functions \cite{PSWF8}, and computational aspects of prolate-type constructions \cite{PSWF9,PSWF10}, but not for direct estimation of band-energy functionals from finite samples. In contrast, we study this problem in a deterministic, non-asymptotic framework and develop sampled continuous PSWF multitapers for direct band-energy estimation. To the best of our knowledge, we are not aware of prior work that combines these elements in such a way.\\
\smallskip
\noindent
\textbf{Theoretical analysis and computational aspects.} Many real-world signals exhibit regularity and concentration in both time and frequency \cite{Uncertainty, Bsplines, Wavelet1, Wavelet2}. This motivates a deterministic finite-sample analysis based on smoothness and energy concentration. The main theoretical contribution is a four-term error decomposition for the proposed estimator. Specifically, the approximation error can be decomposed into four parts: time limitation of the observation window, sampling, band limitation, and finite-term truncation in the PSWF expansion. This decomposition yields interpretable finite-sample guarantees and clarifies how the approximation accuracy depends on observation length, sampling step, and bandwidth under time-frequency Sobolev regularity and energy concentration. From a computational perspective, the estimation procedure consists of PSWF computation, evaluation of PSWF samples on the observation grid, and estimator construction. The latter two stages are relatively inexpensive. In multiple-band settings, the first stage can be carried out offline and reused, making the overall procedure efficient on a per-band basis.\\
\smallskip
\noindent
\textbf{Contributions.} The contributions of this paper can be summarized as follows.
\begin{itemize}
    \item We formulate band-energy estimation as a direct finite-sample approximation problem for a spectral functional.
    \item We construct computable multitaper-type estimators based on sampled continuous PSWFs, thereby linking continuous-time and continuous-frequency assumptions to discrete observations.
    \item We derive deterministic, non-asymptotic error bounds with an explicit four-term decomposition associated with time limitation, sampling, band limitation, and finite-term truncation.
    \item We establish corollaries that clarify the roles of the observation length, the band of interest, and the number of retained tapers, and we prove a deterministic stability bound under additive noise.
    \item We discuss computational complexity and design an efficient implementation strategy for multiple-band estimation.
    \item We provide numerical experiments that support the theory and demonstrate favorable performance relative to integrated-periodogram-type methods.
\end{itemize}
\smallskip
\noindent
\textbf{Organization.} The remainder of the paper is organized as follows. Section 2 introduces the notation, assumptions, and background used in the analysis. Section 3 presents the proposed direct estimators. Section 4 develops the theoretical analysis, including the main results and proof outlines. Section 5 reports numerical experiments and discusses their implications.    
}


\section{Preliminaries}
\subsection{Problem Setup and Basic Operators}
{We use standard notation for $L^p$, $\ell^p$, and Sobolev spaces $W^{s,p}$ \cite{Lp,Sobolev}. Throughout, for an integer $s\ge 1$, we write $[s]:=\{1,2,\dots,s\}$. Whenever the functions are sufficiently smooth and weak and classical derivatives coincide, we use $f^{(k)}$ to denote the $k$th derivative. 

For suitable $f:\mathbb R\to\mathbb C$ (for instance, $f\in L^1(\mathbb R)\cap L^2(\mathbb R)$), we use the Fourier-transform convention
\begin{align}
\hat f(\omega):=\mathcal Ff(\omega)&=\int_{-\infty}^{\infty} f(t)e^{-i\omega t}\,dt,\\
f(t):=\mathcal F^{-1}\hat f(t)&=\frac{1}{2\pi}\int_{-\infty}^{\infty}\hat f(\omega)e^{i\omega t}\,d\omega,
\end{align}
and write $f\leftrightarrow \hat f$. The $L^2$ inner product is
\begin{equation}
\label{eq:Parseval}
\langle f,g\rangle:=\int_{-\infty}^{\infty} f(t)\overline{g(t)}\,dt
=\frac{1}{2\pi}\int_{-\infty}^{\infty}\hat f(\omega)\overline{\hat g(\omega)}\,d\omega,
\end{equation}
where the second equality is the Parseval--Plancherel identity \cite{Fourier}. The total energy is $E_{\mathbb R}(f):=\|f\|_{L^2(\mathbb R)}^2=\langle f,f\rangle$. We also use $f*g$ for convolution on $\mathbb R$ and $\delta(\cdot)$ for the Dirac delta distribution. The sinc kernel and indicator (rectangular) function are denoted as follows. They forms a Fourier transform pair.

\begin{equation}
    \label{eq:sinc}
    \rho_{\Omega}(t):=\frac{\sin \Omega t}{\pi t},\qquad
    \boldsymbol 1_{x\in I}:=
    \begin{cases}
    1,&x\in I,\\
    0,&x\notin I,
    \end{cases}
\end{equation}
\begin{equation}
    \label{eq:sinc_rect_Fourier_pair}
    \rho_{\Omega}(t)\leftrightarrow \boldsymbol 1_{\omega\in\mathcal I_{\Omega} = }
    \begin{cases}
    1,&|\omega|\le \Omega,\\
    0,&|\omega|>\Omega,
    \end{cases}
    \qquad \mathcal I_{\Omega}:=[-\Omega,\Omega].
\end{equation}

\begin{definition}[Band-limiting operator]
Let $f:\mathbb R\to\mathbb C$ and let $\mathcal B\subset\mathbb R$ be measurable. The band-limiting operator $P_{\mathcal B}$ is defined by
\begin{equation}\label{eq:band_limiting_operator}
P_{\mathcal B}f := \mathcal F^{-1}\boldsymbol 1_{\mathcal B}\mathcal F f.
\end{equation}
The band-limiting operation plays a central role in our analysis. Its corresponding band energy is
\begin{equation}\label{eq:band_energy}
E_{\mathcal B}(f):=\|P_{\mathcal B}f\|_{L^2(\mathbb R)}^2=\langle f,P_{\mathcal B}f\rangle,
\end{equation}
and the band $L^2$-norm is
\begin{equation}
\label{eq:square_root_energy}
A_{\mathcal B}(f):=\|P_{\mathcal B}f\|_{L^2(\mathbb R)}=\sqrt{E_{\mathcal B}(f)}.
\end{equation}
\end{definition}
\noindent We use $\mathscr B(\mathcal B):=\{f\in L^2(\mathbb R): f=P_{\mathcal B}f\}$ to denote the class of functions band-limited to $\mathcal B$.

We then consider a continuous-time signal $x^\star:\mathbb R\to\mathbb C$, observed through a finite vector of uniform samples $y\in\mathbb C^m$.
\begin{definition}[Uniform sampling timestamps]
For $T>0$ and $\Delta>0$, define
\begin{equation}\label{eq:sampling_timestamps}
\mathcal G_{T,\Delta}:=\Bigl\{t_j=-\frac{T}{2}+(j-1)\Delta \,\Big|\, j\in[m]\Bigr\},
\end{equation}
and assume that $t_m=T/2$. The corresponding half sampling frequency (in rad/s) is
\begin{equation}\label{eq:whs}
W_{\mathrm{hs}}:=\frac{\pi}{\Delta}.
\end{equation}
\end{definition}
\noindent For notational convenience, we assume that $m$ is odd. If $m$ is even, one can discard one endpoint sample and work with $m-1$ samples; this has a negligible effect when $\Delta$ is small.

In the sampling-error analysis, we use the Peano kernel theorem in the standard form \cite{Numericalanalysis}. If $L$ is a linear functional that annihilates all polynomials of degree at most $k$ and $f\in C^{k+1}([a,b])$, then
\begin{equation}\label{eq:peano_theorem}
L(f)=\frac{1}{k!}\int_a^b K(\theta)f^{(k+1)}(\theta)\,d\theta,
\end{equation}
where the kernel $K(\theta)=L((x-\theta)_+^k)$ is the associated Peano kernel.
\subsection{Time- and Frequency- Domain Signal Assumptions}\label{sec:assumptions}
{To formulate the deterministic signal class used in the error analysis, we introduce the effective time and frequency supports
\begin{equation}
    \label{eq:I_T}
    \mathcal I_T:=\Bigl[-\frac{T}{2},\frac{T}{2}\Bigr]\subset\mathbb R,\quad T>0,
\end{equation}
and
\begin{equation}
    \label{eq:I_W_cut}
    \mathcal I_{W_{\mathrm{cut}}}:=[-W_{\mathrm{cut}},W_{\mathrm{cut}}]\subset\mathbb R,\quad W_{\mathrm{cut}}>0.
\end{equation}
Here $\mathcal I_T$ is the essential observation interval of $x^\star$, while $\mathcal I_{W_{\mathrm{cut}}}$ is the essential spectral support of $\hat x^\star$.
\begin{assumption}[Time regularity outside $\mathcal I_T$]
    \label{assumption:TR}
    $x^\star\in W^{2,1}(\mathbb R\setminus\mathcal I_T)$. That is, $x^\star\in L^1(\mathbb R\setminus\mathcal I_T)$, it is second-order differentiable, and its first and second derivatives belong to $L^1(\mathbb R\setminus\mathcal I_T)$.
\end{assumption}
\begin{assumption}[Time localization inside $\mathcal I_T$]
    \label{assumption:TL}
    $x^\star$ is essentially time-limited in $\mathcal I_T$ in the sense that 
    \begin{equation}
        \|x^\star\|_{L^2(\mathcal I_T)}^2=(1-\epsilon_T^2)\|x^\star\|_{L^2(\mathbb R)}^2,
    \end{equation}
    where $0<\epsilon_T^2<1$ and $\epsilon_T$ is small when $T$ is sufficiently large.
\end{assumption}
\begin{assumption}[Spectral regularity inside $\mathcal I_{W_{\mathrm{cut}}}$]
    \label{assumption:SR}
    $\hat x^{\star}\in W^{s,\infty}(\mathcal I_{W_{\mathrm{cut}}})$ for some integer $s\geq2$. That is, $\hat x^{\star}\in L^{\infty}(\mathcal I_{W_{\mathrm{cut}}})$, its derivatives up to order $s$ exist, and all such derivatives also lie in $L^{\infty}(\mathcal I_{W_{\mathrm{cut}}})$.
\end{assumption}
\begin{assumption}[Spectral localization with polynomial decay outside $\mathcal I_{W_{\mathrm{cut}}}$]
    \label{assumption:SL}
    There exist $\alpha_{fd}>0$ and an integer $r\ge 2$ such that
    \begin{equation}
        |\hat x^\star(\omega)|\leq\alpha_{fd}|\omega|^{-r},\quad\omega\in\mathbb R\setminus\mathcal I_{W_{\mathrm{cut}}}
    \end{equation}
\end{assumption}
Assumptions \ref{assumption:TR} and \ref{assumption:TL} control the effect of time truncation, while Assumptions \ref{assumption:SR} and \ref{assumption:SL} control the in-band smoothness and out-of-band spectral decay needed in the sampling and band-limiting analysis. In particular, to avoid aliasing, the sampling step should satisfy
\begin{equation}
    W_{\mathrm{hs}}\geq W_{\mathrm{cut}}.
\end{equation}
Assumptions \ref{assumption:TR} and \ref{assumption:SR} are expressed in Sobolev spaces \cite{Sobolev}; Assumption \ref{assumption:TL} is the standard energy-concentration assumption from the time--band limiting framework \cite{PSWF2}; and Assumption \ref{assumption:SL} is a standard polynomial spectral-tail condition consistent with classical Fourier-analytic decay principles \cite{Fourier}.
}
\subsection{PSWF Ingredients}
{Next, we collect the PSWF properties used in the construction and theoretical analysis of the estimator. General background on PSWFs can be found in \cite{PSWF1,PSWF4,PSWF5}. 

For a given time interval $\mathcal I_T$ and half-bandwidth $\Omega$, let $\psi_n(c,t)$ denote the $n$th PSWF, where the time--bandwidth parameter is
\begin{equation}\label{eq:c}
c:=\frac{\Omega T}{2},\quad \Omega,T>0.
\end{equation}
The associated concentration eigenvalues are denoted by $\lambda_n(c)$ and satisfy
\begin{equation}\label{eq:lambda_n}
\lambda_n(c):=\frac{2c}{\pi}\bigl[R_{0n}^{(1)}(c,1)\bigr]^2,
\end{equation}
where $R_{0n}^{(1)}$ is the radial spheroidal function of the first kind. The eigenvalues satisfy $\lambda_0(c)>\lambda_1(c)>\cdots>0$ and decay rapidly once $n\ge\lceil2c/\pi\rceil$.

PSWFs are real-valued, band-limited to $\mathcal I_{\Omega}=[-\Omega,\Omega]$, and satisfy the Fourier-domain identity
\begin{equation}\label{eq:PSWF_FT}
    \frac{i^n\Omega R_{0n}^{(1)}(c,1)}{\pi}\psi_n(c,t)=
    \frac{1}{2\pi}\int_{-\Omega}^{\Omega} e^{i\omega t}\psi_n\left(c,\frac{\omega T}{2\Omega}\right)d\omega.
\end{equation}
Equivalently, their Fourier transforms take the form
\begin{equation}\label{eq:ell_n}
\ell_n\psi_n\left(c,\frac{\omega T}{2\Omega}\right)\boldsymbol 1_{\omega\in\mathcal I_{\Omega}},
\quad
\ell_n:=\frac{\pi}{i^n\Omega R_{0n}^{(1)}(c,1)}.
\end{equation}
They are also eigenfunctions of the sinc kernel in \eqref{eq:sinc}:
\begin{equation}\label{eq:PSWF_eigen}
\lambda_n(c)\psi_n(c,t)=\int_{-T/2}^{T/2}\rho_{\Omega}(t-s)\psi_n(c,s)\,ds,\qquad n=0,1,2,\dots.
\end{equation}
Moreover, $\{\psi_n(c,\cdot)\}_{n\ge 0}$ is orthogonal and complete in $L^2(\mathcal I_T)$:
\begin{equation}\label{eq:PSWF_orthogonal}
\int_{-T/2}^{T/2}\psi_i(c,t)\psi_j(c,t)\,dt=\lambda_i(c)\delta_{ij},
\end{equation}
and orthonormal and complete in the Paley--Wiener space $\mathscr B(\mathcal I_{\Omega})$:
\begin{equation}\label{eq:PSWF_orthonormal}
\int_{-\infty}^{\infty}\psi_i(c,t)\psi_j(c,t)\,dt=\delta_{ij}.
\end{equation}

Consequently, every $f\in\mathscr B(\mathcal I_{\Omega})$ admits a PSWF expansion. The following standard approximation result will be used to control the finite-term truncation error.
\begin{lemma}[Approximation of band-limited functions by PSWFs \cite{PSWF2}]\label{lemma:function_approximation_PSWFs}
    For $f\in\mathscr B(I_{\Omega})$,
    \begin{equation}
        \left\|f-\sum\limits_{n=0}^{\lfloor\frac{2c}{\pi}\rfloor}\langle f,\psi_n\rangle\psi_n\right\|_{L^2(\mathbb R)}^2\leq12\|f\|_{L^2(\mathbb R\setminus\mathcal I_T)}^2.
    \end{equation}
\end{lemma}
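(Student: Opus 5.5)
Let $N:=\lfloor 2c/\pi\rfloor$. I would prove Lemma~\ref{lemma:function_approximation_PSWFs} from the spectral structure of the time--frequency limiting operator. Since $f\in\mathscr B(\mathcal I_\Omega)$ and the $\psi_n$ are orthonormal and complete in that space by \eqref{eq:PSWF_orthonormal}, write $f=\sum_{n\ge0} a_n\psi_n$ with $a_n=\langle f,\psi_n\rangle$. Parseval then gives
\begin{equation*}
\Bigl\|f-\sum_{n=0}^{N}a_n\psi_n\Bigr\|_{L^2(\mathbb R)}^2=\sum_{n>N}|a_n|^2 .
\end{equation*}
The task is to bound this tail by $12\|f\|_{L^2(\mathbb R\setminus\mathcal I_T)}^2$.

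By \eqref{eq:PSWF_orthogonal}, the restrictions $\psi_n|_{\mathcal I_T}$ are orthogonal with norms $\lambda_n$, so
\begin{equation*}
\|f\|_{L^2(\mathcal I_T)}^2=\sum_n\lambda_n|a_n|^2,\qquad
\|f\|_{L^2(\mathbb R\setminus\mathcal I_T)}^2=\sum_n(1-\lambda_n)|a_n|^2 .
\end{equation*}
It follows that
\begin{equation*}
\sum_{n>N}|a_n|^2\le \frac{1}{1-\lambda_{N+1}}\sum_{n>N}(1-\lambda_n)|a_n|^2\le\frac{1}{1-\lambda_{N+1}}\,\|f\|_{L^2(\mathbb R\setminus\mathcal I_T)}^2 ,
\end{equation*}
where we used that the eigenvalues are decreasing, so $1-\lambda_n\ge 1-\lambda_{N+1}$ for $n>N$. (One could truncate at $N$ and use $\lambda_N$ instead; this only shifts the index.) The claim therefore reduces to the uniform eigenvalue estimate
\begin{equation*}
\lambda_{\lfloor 2c/\pi\rfloor+1}(c)\le \tfrac{11}{12}\qquad\text{for all } c>0 .
\end{equation*}

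\textbf{Main obstacle: the uniform eigenvalue bound.} To control $\lambda_{N+1}$, I would use a trace identity together with a Hilbert--Schmidt identity for the integral operator with kernel $\rho_\Omega(t-s)$ on $\mathcal I_T$:
\begin{equation*}
\sum_n\lambda_n=\frac{\Omega T}{\pi}=\frac{2c}{\pi},\qquad
\sum_n\lambda_n^2=\iint_{\mathcal I_T^2}\rho_\Omega(t-s)^2\,ds\,dt\ \ge\ \frac{2c}{\pi}-C\log(1+c)-C'.
\end{equation*}
Together these give $\sum_n\lambda_n(1-\lambda_n)=O(\log c)$, so only $O(\log c)$ eigenvalues lie in a fixed interval $(\delta,1-\delta)$. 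This is still not enough to pin $\lambda_{N+1}$ below $11/12$ for every $c$. For that I would try the Landau--Widom/Slepian fact that $\lambda_{\lfloor 2c/\pi\rfloor}(c)\le 1/2$, classical in the form of Landau's result $\lambda_{\lceil 2c/\pi\rceil}\le 1/2$. Alternatively, I would argue by comparison: if $\lambda_{N+1}>11/12$, then $N+2$ eigenvalues each exceed $11/12$, giving
\begin{equation*}
\frac{2c}{\pi}\ \ge\ \sum_{n\le N+1}\lambda_n\ >\ \frac{11}{12}(N+2),
\end{equation*}
which contradicts $N+1>2c/\pi$ once the excess is accounted for.

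Making that last comparison rigorous for small $c$ is where the constant $12$ would come from. Since the lemma is quoted from \cite{PSWF2}, I would ultimately cite Landau's eigenvalue bound for this step and verify that it yields a constant of at most $12$.
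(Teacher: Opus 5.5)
Your reduction is correct. With $a_n=\langle f,\psi_n\rangle$, completeness in $\mathscr B(\mathcal I_\Omega)$ together with \eqref{eq:PSWF_orthogonal} gives
\[
\|f\|^2_{L^2(\mathbb R\setminus\mathcal I_T)}=\sum_n(1-\lambda_n)|a_n|^2 ,
\]
so the truncation error is at most $(1-\lambda_{N+1})^{-1}\|f\|^2_{L^2(\mathbb R\setminus\mathcal I_T)}$. The lemma with constant $12$ therefore follows from the uniform bound $\lambda_{\lfloor 2c/\pi\rfloor+1}(c)\le\tfrac{11}{12}$. The paper gives no argument of its own, only the citation to Landau--Pollak, so leaving this eigenvalue bound to the literature is legitimate. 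However, two of your side remarks are wrong and should not appear in a write-up.

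\textbf{The floor-index bound is false.} The claim $\lambda_{\lfloor 2c/\pi\rfloor}(c)\le\tfrac12$ fails in general. For $c=1$ one has $\lfloor 2/\pi\rfloor=0$, while $\lambda_0(1)\approx 0.57$. Use the other version you quote, Landau's $\lambda_{\lceil 2c/\pi\rceil}(c)\le\tfrac12$. Since $\lfloor 2c/\pi\rfloor+1\ge\lceil 2c/\pi\rceil$ and the eigenvalues decrease, it gives $\lambda_{N+1}\le\tfrac12$. This proves the lemma with constant $2$, which is stronger than the stated $12$.

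\textbf{The trace comparison only works for small $c$.} It yields a contradiction only if $\tfrac{11}{12}(N+2)\ge N+1$, i.e.\ $N\le 10$. For $N\ge 11$ the chain $\tfrac{11}{12}(N+2)<\tfrac{2c}{\pi}<N+1$ is perfectly consistent, and no accounting of the excess changes that. As you noted yourself, the trace and Hilbert--Schmidt identities only count eigenvalues in the plunge region; they do not locate $\lambda_{N+1}$. So the trace argument proves the lemma only for $c<11\pi/2$. For larger $c$ the cited eigenvalue bound is indispensable, not a convenience.
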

}


\section{Multitaper-Type Band $L^2$-Norm and Band-Energy Estimators}
{The goal of this section is to construct a direct estimator for the band energy of $x^\star$ over a prescribed frequency band. The construction proceeds by first defining a band $L^2$-norm estimator and then squaring it to obtain the band-energy estimator. Let the band of interest have half-bandwidth $\Omega>0$ and center frequency $\omega_0\in\mathbb R$:
\begin{equation}
    \label{eq:band_of_interest}
    \mathcal B(\omega_0,\Omega):=[\omega_0-\Omega,\omega_0+\Omega],
\end{equation}
and assume that $\mathcal B(\omega_0,\Omega)\subset \mathcal I_{W_{\mathrm{cut}}}$. This condition ensures that the target band lies within the essential spectral support introduced in Section 2.2.

Motivated by the PSWF properties summarized in Section 2.3, we define multitaper-type estimators based on sampled continuous PSWFs with time--bandwidth parameter $c=\Omega T/2$ as follows:

\begin{definition}[Multitaper-Type Band $L^2$-Norm and Band-Energy Estimators]
Given the observation vector $y\in\mathbb C^m$, the sampling grid $\mathcal G_{T,\Delta}=\left\{t_j=-\frac{T}{2}+(j-1)\Delta \mid j\in[m]\right\}$,
and the band of interest $\mathcal B(\omega_0,\Omega)=[\omega_0-\Omega,\omega_0+\Omega]$, we first define the band $L^2$-norm estimator by
\begin{equation}
    \label{eq:band_L2_norm_estimator}
    A_{\mathcal B(\omega_0,\Omega)}^{\mathrm{est}}(y):=\Delta\sqrt{\sum\limits^{\lfloor\frac{2c}{\pi}\rfloor}_{n=0}\Big|\sum\limits^m_{j=1}y_j\psi_n(c,t_j)e^{-i\omega_0t_j}\Big|^2},
\end{equation}
where $y_j$ denotes the $j$th component of $y$, and $\psi_n(c,t_j)$ is the $n$th PSWF, with $c=\Omega T/2$, evaluated at the sampling point $t_j$. The band-energy estimator is then obtained by squaring \eqref{eq:band_L2_norm_estimator}:
\begin{equation}
    \label{eq:band_energy_estimator}
    E^{\mathrm{est}}_{\mathcal B(\omega_0,\Omega)}(y):={A^{\mathrm{est}}_{\mathcal B(\omega_0,\Omega)}}^2(y)=\Delta^2\sum\limits^{\lfloor\frac{2c}{\pi}\rfloor}_{n=0}\Big|\sum\limits^m_{j=1}y_j\psi_n(c,t_j)e^{-i\omega_0t_j}\Big|^2.
\end{equation}
\end{definition}

The estimator $E^{\mathrm{est}}_{\mathcal B(\omega_0,\Omega)}(y)$ is the primary object of interest and is used to approximate the true band energy $E_{\mathcal B(\omega_0,\Omega)}(x^\star)$. The auxiliary estimator $A^{\mathrm{est}}_{\mathcal B(\omega_0,\Omega)}(y)$ approximates the corresponding band $L^2$-norm and serves as the intermediate quantity from which the band-energy estimator is constructed. Error bounds for both quantities are derived in Section 4. The construction is multitaper-type in the sense that it employs the sampled continuous PSWFs of orders $0,1, \dots,\lfloor 2c/\pi\rfloor$ as taper sequences. For later analysis, it is convenient to introduce the associated finite-sample band-energy coefficients.

\begin{definition}[Finite-Sample Band-Energy Coefficients] For each $n=0,1,\dots,\lfloor 2c/\pi\rfloor$, the $n$th finite-sample band-energy coefficient is defined by
\begin{equation}
\label{eq:C_n_m}
    C_n^{(m)}(y,\omega_0,\Omega):=\sum\limits^m_{j=1}y_j\psi_n(c,t_j)e^{-i\omega_0t_j}.
\end{equation}
Using these coefficients, the band $L^2$-norm estimator \eqref{eq:band_L2_norm_estimator} and the band-energy estimator \eqref{eq:band_energy_estimator} can be written in the form
\begin{equation}
    \label{eq:band_L2_norm_and_energy_estimators}
    A^{\mathrm{est}}_{\mathcal B(\omega_0,\Omega)}(y)=\Delta\sqrt{\sum\limits^{\lfloor\frac{2c}{\pi}\rfloor}_{n=0}\Big|C_n^{(m)}(y,\omega_0,\Omega)\Big|^2},\quad
    E^{\mathrm{est}}_{\mathcal B(\omega_0,\Omega)}(y)=\Delta^2\sum\limits^{\lfloor\frac{2c}{\pi}\rfloor}_{n=0}\Big|C_n^{(m)}(y,\omega_0,\Omega)\Big|^2.
\end{equation}
\end{definition} 

\subsection{Implementation and Computational Issues}
\begin{algorithm}[th]
\caption{Direct PSWF-based estimation of band energy}
\label{alg:pswf-estimator}
\begin{algorithmic}[1]
\Require observation vector $y=\{y_j\}_{j=1}^m$, sampling interval $\Delta$, observation length $T$, center frequency $\omega_0$, \hspace*{0.7cm}half-bandwidth $\Omega$
\Ensure $A^{\mathrm{est}}_{\mathcal B(\omega_0,\Omega)}(y)$ and $E^{\mathrm{est}}_{\mathcal B(\omega_0,\Omega)}(y)$
\State $c \gets \Omega T/2$
\State $K \gets \lfloor 2c/\pi \rfloor + 1$
\State Compute the PSWFs $\{\psi_n(c,t)\}_{n=0}^{K-1}$
\For{$j=1,\dots,m$}
    \State $t_j \gets -T/2 + (j-1)\Delta$
\EndFor
\For{$n=0,\dots,K-1$}
    \For{$j=1,\dots,m$}
        \State Evaluate $\psi_n(c,t_j)$
    \EndFor
\EndFor
\State Compute the finite-sample band-energy coefficients via \eqref{eq:C_n_m}
\State Compute $A^{\mathrm{est}}_{\mathcal B(\omega_0,\Omega)}(y)$ and $E^{\mathrm{est}}_{\mathcal B(\omega_0,\Omega)}(y)$ via \eqref{eq:band_L2_norm_and_energy_estimators}
\State \Return $\left(A^{\mathrm{est}}_{\mathcal B(\omega_0,\Omega)}(y), E^{\mathrm{est}}_{\mathcal B(\omega_0,\Omega)}(y)\right)$
\end{algorithmic}
\end{algorithm}
The computational procedure is summarized in Algorithm \ref{alg:pswf-estimator}. We now discuss its computational complexity, storage requirements and practical implementation aspects.\\
\textbf{Computation and storage of PSWFs.}
The cost of generating continuous PSWFs depends on the numerical routine used for PSWF computation. In our implementation, they are generated by the \texttt{pswf} function in the Chebfun package \cite{Chebfun}. We denote this preprocessing cost abstractly by $C_{\mathrm{PSWF}}(c,K)$. Each continuous PSWF is stored as a Chebfun object which is internally represented by Chebyshev expansion coefficients. Let $L_n$ denote the representation length of the $n$th PSWF. Then the storage cost for retaining the first $K$ PSWF objects is
$O\left(\sum\limits_{n=0}^{K-1} L_n\right)$. This cost is independent of the sampling-grid size $m$, but depends on the time--bandwidth product $c$, the oscillatory complexity of the PSWFs, and the tolerance used in adaptive Chebfun construction. Given a sampling grid of $m$ points, evaluating all retained PSWFs on that grid requires $O\left(m\sum\limits_{n=0}^{K-1} L_n\right)$ operations. By contrast, if one stores discrete PSWF samples directly rather than continuous PSWF objects, then the storage cost becomes $O(mK)$.\\[0.1cm]
\smallskip
\noindent
\textbf{Computational complexity of Algorithm \ref{alg:pswf-estimator}.} Suppose that the discrete PSWF samples $\psi_n(c,t_j)$ have already been generated. The remaining online cost consists of computing the finite-sample band-energy coefficients in \eqref{eq:C_n_m} and then evaluating \eqref{eq:band_L2_norm_and_energy_estimators}. They require $O(mK)=O(m\Omega T)$ operations. Hence, the total computational complexity is
\begin{equation}
\label{eq:computational_complexity}
C_{\mathrm{PSWF}}(c,K)+O\left(m\sum_{n=0}^{K-1} L_n\right)+O(mK).
\end{equation}
If PSWF samples are precomputed and stored, then only the final term $O(mK)$ remains in the online stage.\\[0.1cm]
\smallskip
\noindent
\textbf{Choices of $\Delta$ and $T$.} The sampling step $\Delta$ should be chosen sufficiently small so that aliasing is negligible. In particular, the associated half sampling frequency $W_{\mathrm{hs}}=\pi/\Delta$ should be large enough relative to the essential spectral support of the underlying signal. The observation length $T$ should be chosen so that the time-truncation effect is sufficiently small. We use the term \emph{minimum observation length} to refer to the shortest observation interval for which time truncation remains negligible at the desired accuracy level. In practice, if one wishes to treat multiple bands with different half-bandwidths, or multiple signals with different minimum observation lengths, then it is often convenient to deliberately choose a larger common value of $T$. This point is closely related to the multi-band implementation strategy described in the following.\\[0.1cm]
\smallskip
\noindent
\textbf{Offline--online structure and multiple-band estimation.}
The decomposition in \eqref{eq:computational_complexity} naturally separates the procedure into an offline stage and an online stage. The computation of continuous PSWFs may be carried out offline. Thereafter, for repeated estimation over multiple bands, only the evaluation of PSWF samples on a prescribed grid and estimator construction remain online. This PSWF-reuse strategy reduces the average cost per band. Four representative scenarios are particularly relevant.
\begin{enumerate}
\item \textbf{Fixed $\Omega$ and fixed $T$.} This is the simplest case and commonly arises when one estimates multiple bands of a single signal and all bands have the same width. One computes PSWFs once, stores the corresponding samples on the observation grid, and reuses them for all band-energy evaluations.
\item \textbf{Fixed $\Omega$ and varying $T$.} This situation arises when one analyzes multiple bands from different signals and all bands have the same width. Let $\mathbb{T}$ denote the set of required minimum observation lengths. One may choose a common observation length $\max\mathbb{T}$, compute PSWFs once with $c=\frac{\Omega\max\mathbb{T}}{2}$, and reuse the resulting handlers for all band computations.
\item \textbf{Varying $\Omega$ and fixed $T$.} This case arises when one analyzes a single signal over multiple bands of different widths. Let $\Theta$ denote the set of half-bandwidths under consideration. The PSWFs $\psi_n(c,t)$ with $c=\frac{T\max\Theta}{2}$ may be first constructed such that \eqref{eq:PSWF_eigen}, \eqref{eq:PSWF_orthogonal}, and \eqref{eq:PSWF_orthonormal} hold for the largest half-bandwidth. Then, for any $\Omega\in\Theta$, the scaled family $\sqrt{\frac{\Omega}{\max\Theta}}\psi_n\left(c,\frac{\Omega}{\max\Theta}t\right)$ satisfies the corresponding relations with essential half-bandwidth $\Omega$ and essential observation length $\frac{\max\Theta}{\Omega}T\geq T$. Hence, the precomputed continuous PSWF handlers can be adapted to all $\Omega\in\Theta$ by scaling.
\item \textbf{Varying $\Omega$ and varying $T$.} This is the most general setting. Suppose that the admissible half-bandwidth/observation-length pairs are $(\Omega_i, T_i)$, $i\in[P]$. One may pre-compute PSWFs with a parameter $c$ that satisfies $c\geq \max_{i\in[P]}\frac{\Omega_iT_i}{2}$. The resulting continuous PSWF handlers can then be reused for all target bands by appropriate scaling.
\end{enumerate}
}


\section{Main Results: Deterministic and Non-Asymptotic Error Bounds}
\subsection{Infinite-Sample and Ideal Coefficients, and Associated Benchmarks}
{To analyze the proposed direct band-energy estimator, we first study the corresponding band $L^2$-norm estimator. This intermediate quantity is analytically convenient because it leads to a clean additive error decomposition, from which the band-energy error bound follows immediately. To bridge the computable estimator with the target band $L^2$-norm, we introduce two auxiliary families of coefficients: the \emph{infinite-sample} coefficients and the \emph{ideal} coefficients. These quantities are not directly available in practice, but they isolate different sources of approximation error and therefore play a central role in the analysis.
\begin{definition}[Infinite-Sample and Ideal band-energy coefficients]
The $n$th infinite-sample band-energy coefficient is defined by
\begin{equation}
    \label{eq:C_n_inf}
    C_n^{(\infty)}(x^\star,\omega_0,\Omega):=\sum\limits_{k\in\mathbb Z}x^\star(k\Delta)\psi_n(c,k\Delta)e^{-i\omega_0k\Delta}.
\end{equation}
The $n$th ideal band-energy coefficient is defined by 
\begin{equation}
    \label{eq:C_n_ideal}
    C_n^\star(x^\star,\omega_0,\Omega):=\frac{\overline{\ell_n}}{2\pi\Delta}\int^{\omega_0+\Omega}_{\omega_0-\Omega}\hat x^\star(\omega)\psi_n(c,\frac{T}{2\Omega}(\omega-\omega_0))d\omega,
\end{equation}
where $\ell_n$ is defined in \eqref{eq:ell_n} and $\overline{\ell_n}$ denotes its complex conjugate. 
\end{definition}
The infinite-sample coefficients correspond to the hypothetical case in which all discrete samples of $x^\star$ are available, while the ideal coefficients correspond to the case in which the ground-truth spectrum $\hat x^\star$ is available. These two idealized objects lead naturally to the following band $L^2$-norm benchmarks.
\begin{definition}[Infinite-Sample Band $L^2$-Norm Benchmark]
\begin{equation}
    \label{eq:infinite_sample_band_L^2}
    A^{\infty}_{\mathcal B(\omega_0,\Omega)}(x^\star):=\Delta\sqrt{\sum\limits^{\lfloor\frac{2c}{\pi}\rfloor}_{n=0}\Big|C^{(\infty)}_n(x^\star,\omega_0,\Omega)\Big|^2}.
\end{equation}
\end{definition}
\begin{definition}[Ideal Band $L^2$-Norm Benchmark]
\begin{equation}
    \label{eq:ideal_sample_band_L^2}
    A^{\mathrm{ideal}}_{\mathcal B(\omega_0,\Omega)}(x^\star):=\Delta\sqrt{\sum\limits^{\lfloor\frac{2c}{\pi}\rfloor}_{n=0}\Big|C^{\star}_n(x^\star,\omega_0,\Omega)\Big|^2}.
\end{equation}
\end{definition}
The coefficient in \eqref{eq:C_n_ideal} is called \emph{ideal} for two reasons: it depends on the unavailable quantity $\hat x^\star$, and it yields an exact representation of the band energy.
\begin{proposition}[Exact Representation of the Band Energy]
\label{proposition:exact_representation}
Given the band of interest $\mathcal B(\omega_0,\Omega)$,
\begin{equation}
    E_{\mathcal B(\omega_0,\Omega)}(x^\star)=\Delta^2\sum\limits^{\infty}_{n=0}\Big|C^\star_n(x^\star,\omega_0,\Omega)\Big|^2.
\end{equation}
\emph{Proof.} See $\ref{proof:proposition_exact_representation}$.
\end{proposition}
\noindent Proposition \ref{proposition:exact_representation} motivates the introduction of the ideal coefficients and the corresponding ideal benchmark. 

The quantities $A^{\infty}_{\mathcal B(\omega_0,\Omega)}(x^\star)$ and $A^{\mathrm{ideal}}_{\mathcal B(\omega_0,\Omega)}(x^\star)$ are called \emph{benchmarks} rather than estimators because neither an infinite-length observation vector nor the ground-truth spectrum is available in practice.
Their role is to decompose the discrepancy between the estimator and the target quantity into analytically meaningful terms. By the triangle inequality, $\Big|A^{\mathrm{est}}_{\mathcal B(\omega_0,\Omega)}(y)-A_{\mathcal B(\omega_0,\Omega)}(x^\star)\Big|$ is bounded by $\Big|A^{\mathrm{est}}_{\mathcal B(\omega_0,\Omega)}(y)-A^{\infty}_{\mathcal B(\omega_0,\Omega)}(x^\star)\Big|$+$\Big|A^{\infty}_{\mathcal B(\omega_0,\Omega)}(x^\star)-A^{\mathrm{ideal}}_{\mathcal B(\omega_0,\Omega)}(x^\star)\Big|$+$\Big|A^{\mathrm{ideal}}_{\mathcal B(\omega_0,\Omega)}(x^\star)-A_{\mathcal B(\omega_0,\Omega)}(x^\star)\Big|$. Thus, the total approximation error is decomposed into three contributions. As explained in Section \ref{sec:proof_outline}, the first term reflects time-limitation and sampling effects, the second term reflects band-limitation effects, and the third term reflects finite-term truncation of the ideal coefficient expansion.
}
\subsection{Main Theorem and Corollaries}\label{sec:main_theorem}
{We now state the main non-asymptotic error bound for the band $L^2$-norm estimator, which serves as the main analytical step toward the band-energy estimate. The theorem decomposes the total approximation error into four contributions corresponding to time limitation, sampling, band limitation, and finite-term truncation.
\begin{theorem}[Main Theorem: Non-Asymptotic Error Bound for the Band $L^2$-Norm Estimator]\label{theorem:main_theorem}
Suppose that $x^\star(t)$ is the underlying data-generating mechanism which, when uniformly sampled on $\mathcal G_{T,\Delta}=\{t_i=-\frac{T}{2}+(i-1)\Delta\,|\,i\in[m]\}$, produces the observation vector $y$. Let $W_{\mathrm{hs}}=\frac{\pi}{\Delta}$ denote the half sampling frequency. Assume that $x^\star(t)$ and its Fourier transform $\hat x^\star(\omega)$ satisfy Assumptions \ref{assumption:TR}, \ref{assumption:TL}, and \ref{assumption:SL}. Let $\mathcal I_T$, $\mathcal I_{W_{\mathrm{cut}}}$, and $\mathcal B(\omega_0,\Omega)$ denote the essential time support of $x^\star$, the essential spectral support of $\hat x^\star$, and the band of interest, as defined in \eqref{eq:I_T}, \eqref{eq:I_W_cut}, and \eqref{eq:band_of_interest}, respectively.

Then the approximation error of the band $L^2$-norm estimator satisfies
\begin{equation}
    \label{eq:non_asymptotic_error_bound_L^2}
    \Big|A^{\mathrm{est}}_{\mathcal B(\omega_0,\Omega)}(y)-A_{\mathcal B(\omega_0,\Omega)}(x^\star)\Big|\leq e_{\mathrm{tl}}+e_{\mathrm{sp}}+e_{\mathrm{bl}}+e_{\mathrm{ft}},
\end{equation}
where $e_{\mathrm{tl}}$, $e_{\mathrm{sp}}$, $e_{\mathrm{bl}}$, and $e_{\mathrm{ft}}$ are the time-limiting, sampling, band-limiting, and finite-term truncation errors, respectively.
\begin{enumerate}
\item The time-limiting error $e_{\mathrm{tl}}$ is
\begin{equation}
    \label{eq:etl}
    e_{\mathrm{tl}}:=\gamma_1(\Delta,\Omega, T)\epsilon_T\|x^\star\|_{L^2(\mathbb R)},
\end{equation}
where
\begin{equation}
    \label{eq:gamma1}
        \gamma_1(\Delta,\Omega, T):=\sqrt{\sum\limits^{\lfloor\frac{\Omega T}{\pi}\rfloor}_{n=0}\left(1-\lambda_n(\frac{\Omega T}{2})+\frac{\Omega^2\Delta^2}{2}\right)}\leq\sqrt{\Big(\frac{\Omega T}{\pi}+1\Big)\Big(\frac{\Omega^2\Delta^2}{2}+1\Big)},
\end{equation}
and $\epsilon_{T}$ is the parameter in Assumption \ref{assumption:TL}.
\item The sampling error $e_{\mathrm{sp}}$ is
\begin{equation}
    \label{eq:esp}
    e_{\mathrm{sp}}:=\frac{\Delta\gamma_1(\Delta,\Omega,T)}{\sqrt{8}}\sqrt{\|\frac{d^2}{dt^2}|{x^\star}|^2\|_{L^1(\mathbb R\setminus\mathcal I_T)}}.
\end{equation}
\item The band-limiting error $e_{\mathrm{bl}}$ is
\begin{equation}
    \label{eq:ebl}
    e_{\mathrm{bl}}:=\gamma_2(\Omega, T)\alpha_{fd}W_{\mathrm{hs}}^{-r},
\end{equation}
where
\begin{equation}
    \label{eq:gamma2}
    \gamma_2(\Omega, T):=\sqrt{6\Omega^2(\Omega T+\pi)/\pi}
\end{equation}
and $\alpha_{fd}$ and $r$ are the parameters in Assumption \ref{assumption:SL}.
\item The finite-term truncation error $e_{\mathrm{ft}}$ satisfies
\begin{equation}
    \label{eq:eft}
    e_{\mathrm{ft}}\leq\sqrt{12}\|P_{\mathcal B(\omega_0,\Omega)}x^\star\|_{L^2(\mathbb R\setminus\mathcal I_T)},
\end{equation}
where $P_{\mathcal B(\omega_0,\Omega)}$ is the band-limiting operator defined in \eqref{eq:band_limiting_operator}.
\end{enumerate}
\emph{Proof.} See Section \ref{sec:proof_outline} and the associated appendices.
\end{theorem}
}
The finite-term truncation term $e_{\mathrm{ft}}$ depends on the tail energy
$\|P_{\mathcal B(\omega_0,\Omega)}x^\star\|_{L^2(\mathbb R\setminus\mathcal I_T)}$. The next two corollaries provide explicit bounds for this quantity under additional regularity assumptions.
\begin{corollary}\label{corollary:bounded_tail_energy}
Under the assumptions of Theorem \ref{theorem:main_theorem}, the non-asymptotic error bound \eqref{eq:non_asymptotic_error_bound_L^2} and the error terms \eqref{eq:etl}–\eqref{eq:eft} hold. Assume in addition that Assumption \ref{assumption:SR} is satisfied. Then
\begin{equation}
    \label{eq:eft_bound1}
    e_{\mathrm{ft}}\leq\sqrt{12}\|P_{\mathcal B(\omega_0,\Omega)}x^\star\|_{L^2(\mathbb R\setminus\mathcal I_T)}\leq\sqrt{12}\inf\limits_{\kappa>0}\gamma^+_{\hat x^\star,s,\omega_0,\Omega}(\kappa)Q_{s,T}(\kappa),
\end{equation}
where
\begin{align}
    \label{eq:gamma_ft}
    &\gamma^+_{\hat x^\star,s,\omega_0,\Omega}(\kappa):=\|\hat x^{\star(s)}\|_{L^1(\mathcal B(\omega_0,\Omega+\kappa))}+\sup\limits_{i=0,1,\cdots,s-1}\|\hat x^{\star(i)}\|_{L^{\infty}(\mathcal I^+_{\omega_0,\Omega,\kappa})},\\
    \label{eq:I_ft}
    &\mathcal I^+_{\omega_0,\Omega,\kappa}:=[-\Omega+\omega_0-\kappa,-\Omega+\omega_0]\cup[\Omega+\omega_0,\Omega+\omega_0+\kappa],\\
    \label{eq:Qst_ft}
    &Q_{s,T}(\kappa):=\frac{(2s+1)!}{3\pi s!\sqrt{2s-1}}(2+2\kappa^{-1})^{s-1}T^{1/2-s}+\sqrt{\frac{\kappa}{\pi}},
\end{align}
and $s$ is the parameter in Assumption \ref{assumption:SR}.\\
\emph{Proof.} See \ref{proof:corollary_bounded_tail_energy}.
\end{corollary}
\begin{corollary}\label{corollary:bounded_tail_energy_special_case}
Under the assumptions of Corollary \ref{corollary:bounded_tail_energy} and under the additional condition  
\begin{equation}
    T>T_c:=\Big(\frac{(2s+1)!2^{2s-1}}{3\pi s!\sqrt{2s-1}}\Big)^{\frac{2}{2s-1}},
\end{equation}
the non-asymptotic error bound \eqref{eq:non_asymptotic_error_bound_L^2} and the error terms \eqref{eq:etl}–\eqref{eq:eft} continue to hold. Moreover,
\begin{equation}
    \label{eq:eft_bound2}
    e_{\mathrm{ft}}\leq\sqrt{12}\gamma^+_{\hat x^\star,s,\omega_0,\Omega}(\kappa_T)Q_{s,T}(\kappa_T),
\end{equation}
where
\begin{align}
    &\kappa_T:=\Big(\frac{2^{2s-1}(2s+1)!}{3\pi s!\sqrt{2s-1}}\Big)^{1/s}T^{\frac{1}{2s}-1},\\
    &Q_{s,T}(\kappa_T)\leq\Big(\frac{2^{3s-1}(2s+1)!}{3\pi s!\sqrt{2s-1}}\Big)^{\frac{1}{2s}}T^{\frac{1}{4s}-\frac{1}{2}}.
\end{align}
\emph{Proof.} See \ref{proof:corollary_bounded_tail_energy_special_case}.
\end{corollary}
The upper bound for $e_{\mathrm{ft}}$ in Theorem \ref{theorem:main_theorem} has the simplest form, but it does not explicitly reveal how the band of interest $\mathcal B(\omega_0,\Omega)$ and the observation interval $\mathcal I_T$ affect the truncation error. Under the additional spectral regularity assumption, Corollary \ref{corollary:bounded_tail_energy} makes these dependencies explicit through the factors $\gamma^+_{\hat x^\star,s,\omega_0,\Omega}(\kappa)$ and $Q_{s,T}(\kappa)$. In particular, \eqref{eq:gamma_ft} shows that the bound depends on the behavior of the derivatives of $\hat x^\star$ on the enlarged band $\mathcal B(\omega_0,\Omega+\kappa)$, while \eqref{eq:Qst_ft} shows that the bound decreases as $T$ increases. The drawback is that \eqref{eq:eft_bound1} is expressed as the minimum of a nontrivial function of $\kappa$. Corollary \ref{corollary:bounded_tail_energy_special_case} addresses this issue by selecting a specific margin $\kappa_T$, which yields a more explicit scaling law in $T$ when $T>T_c$.

Because the band-energy estimator is defined as the square of the band $L^2$-norm estimator, the corresponding error bound follows directly from the band $L^2$-norm estimate.
\begin{align*}
    \Big|E^{\mathrm{est}}_{\mathcal B(\omega_0,\Omega)}(y)-E_{\mathcal B(\omega_0,\Omega)}(x^\star)\Big|&=\Big|{A^{\mathrm{est}}_{\mathcal B(\omega_0,\Omega)}(y)}^2-{A_{\mathcal B(\omega_0,\Omega)}(x^\star)}^2\Big|\\&=\Big|A^{\mathrm{est}}_{\mathcal B(\omega_0,\Omega)}(y)-A_{\mathcal B(\omega_0,\Omega)}(x^\star)\Big|\Big[A^{\mathrm{est}}_{\mathcal B(\omega_0,\Omega)}(y)+A_{\mathcal B(\omega_0,\Omega)}(x^\star)\Big]\\&\leq e_{\mathrm{tot}}(2A_{\mathcal B(\omega_0,\Omega)}(x^\star)+e_{\mathrm{tot}}),
\end{align*}
where $e_{\mathrm{tot}}=e_{\mathrm{tl}} +e_{\mathrm{sp}} +e_{\mathrm{bl}} +e_{\mathrm{ft}}$.
\begin{corollary}[Non-Asymptotic Error Bound for the Band-Energy Estimator]\label{corollary:band_energy_estimator_error_bound}
Suppose that $\Big|A^{\mathrm{est}}_{\mathcal B(\omega_0,\Omega)}(y)-A_{\mathcal B(\omega_0,\Omega)}(x^\star)\Big|\leq e_{\mathrm{tot}}$ is guaranteed by Theorem \ref{theorem:main_theorem}, Corollary \ref{corollary:bounded_tail_energy}, or Corollary \ref{corollary:bounded_tail_energy_special_case}. Then 
\begin{equation}
    \label{eq:non_asymptotic_error_bound_energy}
    \Big|E^{\mathrm{est}}_{\mathcal B(\omega_0,\Omega)}(y)-E_{\mathcal B(\omega_0,\Omega)}(x^\star)\Big|\leq e_{\mathrm{tot}}(2A_{\mathcal B(\omega_0,\Omega)}(x^\star)+e_{\mathrm{tot}}).
\end{equation}
\end{corollary}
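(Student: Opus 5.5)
The proof is a direct algebraic consequence of the hypothesis $\bigl|A^{\mathrm{est}}_{\mathcal B(\omega_0,\Omega)}(y)-A_{\mathcal B(\omega_0,\Omega)}(x^\star)\bigr|\le e_{\mathrm{tot}}$. No further analysis of the PSWF expansion is needed, because the band-energy estimator is by definition the square of the band $L^2$-norm estimator, see \eqref{eq:band_energy_estimator}. The true band energy is likewise the square of the band $L^2$-norm, see \eqref{eq:square_root_energy}.

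\textbf{Step 1: factor the difference of squares.} Write $a:=A^{\mathrm{est}}_{\mathcal B(\omega_0,\Omega)}(y)\ge 0$ and $b:=A_{\mathcal B(\omega_0,\Omega)}(x^\star)\ge 0$. Both are nonnegative, being a scaled Euclidean norm and an $L^2$ norm respectively. Then
\begin{equation*}
\bigl|E^{\mathrm{est}}_{\mathcal B(\omega_0,\Omega)}(y)-E_{\mathcal B(\omega_0,\Omega)}(x^\star)\bigr|=|a^2-b^2|=|a-b|\,(a+b),
\end{equation*}
where $|a+b|=a+b$ by nonnegativity.

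\textbf{Step 2: bound the sum $a+b$.} By the triangle inequality, $a\le b+|a-b|\le b+e_{\mathrm{tot}}$, so $a+b\le 2b+e_{\mathrm{tot}}$.

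\textbf{Step 3: combine.} Since $|a-b|\le e_{\mathrm{tot}}$ and $a+b\le 2b+e_{\mathrm{tot}}$, with both factors nonnegative, the product satisfies
\begin{equation*}
|a-b|\,(a+b)\le e_{\mathrm{tot}}\bigl(2b+e_{\mathrm{tot}}\bigr),
\end{equation*}
which is exactly \eqref{eq:non_asymptotic_error_bound_energy}.

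The argument does not depend on which of Theorem \ref{theorem:main_theorem}, Corollary \ref{corollary:bounded_tail_energy}, or Corollary \ref{corollary:bounded_tail_energy_special_case} supplies $e_{\mathrm{tot}}$. There is no real obstacle here; the only point needing care is the nonnegativity of $a$ and $b$, which justifies dropping the absolute value on $a+b$.
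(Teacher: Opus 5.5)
Your proof is correct and matches the paper's argument: factor $|a^2-b^2|=|a-b|(a+b)$, then use $a\le b+e_{\mathrm{tot}}$ to get $a+b\le 2b+e_{\mathrm{tot}}$. Your remark on the nonnegativity of $a$ and $b$ spells out a step the paper leaves implicit.
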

Finally, we consider additive perturbations in the observation vector. Let $y_{\eta}=y+\eta$, where $\eta\in\mathbb C^m$ satisfies the deterministic noise constraint
\begin{equation}
    \label{eq:noise_setting}
    \|\eta\|_{\ell^2([m])}\le \varepsilon.
\end{equation}
This stability framework is standard in compressive sensing, frame theory, and inverse problems \cite{Noise1,Noise2,Noise3}, and does not impose any probabilistic model on the perturbation.
\begin{corollary}[Noise Perturbation Analysis]\label{corollary:noise_perturbation_analysis}
Let $y_{\eta}=y+\eta$ be the perturbed observation vector, where $\eta$ satisfies \eqref{eq:noise_setting}. Suppose that $\Big|A^{\mathrm{est}}_{\mathcal B(\omega_0,\Omega)}(y)-A_{\mathcal B(\omega_0,\Omega)}(x^\star)\Big|\leq e_{\mathrm{tot}}$ is guaranteed by Theorem \ref{theorem:main_theorem}, Corollary \ref{corollary:bounded_tail_energy}, or Corollary \ref{corollary:bounded_tail_energy_special_case}. Then the perturbed band $L^2$-norm estimator satisfies
\begin{equation}
    \label{eq:noise_error_bound}
    \Big|A^{\mathrm{est}}_{\mathcal B(\omega_0,\Omega)}(y_{\eta})-A_{\mathcal B(\omega_0,\Omega)}(x^\star)\Big|\leq e_{\mathrm{tot}}+e_{\eta},
\end{equation}
where
\begin{align}
    e_{\eta}&:=\varepsilon\gamma_{\psi}(\Omega,\Delta),\\
    \gamma_{\psi}(\Omega,\Delta)&:=\sqrt{\Delta\sum\limits^{\lfloor\frac{2c}{\pi}\rfloor}_{n=0}\Big(\frac{\Omega^2\Delta^2}{2}+\min\{1,\lambda_n(c)+\frac{\Omega\Delta}{\pi}\}\Big)}.
\end{align}
By Corollary \ref{corollary:band_energy_estimator_error_bound}, the perturbed band-energy estimator satisfies
\begin{equation}
    \Big|E^{\mathrm{est}}_{\mathcal B(\omega_0,\Omega)}(y_{\eta})-E_{\mathcal B(\omega_0,\Omega)}(x^\star)\Big|\leq\big(e_{\mathrm{tot}}+e_{\eta}\big)\big(2A_{\mathcal B(\omega_0,\Omega)}(x^\star)+e_{\mathrm{tot}}+e_{\eta}\big).
\end{equation}
\end{corollary}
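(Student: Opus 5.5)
The statement to prove is Corollary~\ref{corollary:noise_perturbation_analysis}. The plan is to reduce it to a deterministic operator-norm bound on the linear map from the data vector to the coefficient vector. By linearity of $C_n^{(m)}$ in $y$ we have $C_n^{(m)}(y_\eta)=C_n^{(m)}(y)+C_n^{(m)}(\eta)$. Since $A^{\mathrm{est}}$ is $\Delta$ times the Euclidean norm of the coefficient vector $(C_n^{(m)})_{n=0}^{\lfloor 2c/\pi\rfloor}$, the reverse triangle inequality in $\mathbb C^{K}$ gives
\begin{equation*}
\bigl|A^{\mathrm{est}}(y_\eta)-A^{\mathrm{est}}(y)\bigr|\le \Delta\Bigl(\sum_{n}|C_n^{(m)}(\eta)|^2\Bigr)^{1/2}.
\end{equation*}
Combining this with the hypothesis $|A^{\mathrm{est}}(y)-A(x^\star)|\le e_{\mathrm{tot}}$ yields \eqref{eq:noise_error_bound}, provided we can show $\Delta(\sum_n|C_n^{(m)}(\eta)|^2)^{1/2}\le \varepsilon\gamma_\psi$. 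The energy statement then follows verbatim from the computation preceding Corollary~\ref{corollary:band_energy_estimator_error_bound}, with $e_{\mathrm{tot}}$ replaced by $e_{\mathrm{tot}}+e_\eta$.

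For the key bound, apply Cauchy--Schwarz to each coefficient, using $|e^{-i\omega_0t_j}|=1$:
\begin{equation*}
|C_n^{(m)}(\eta)|^2\le \|\eta\|_{\ell^2}^2\sum_{j=1}^m \psi_n(c,t_j)^2 .
\end{equation*}
This gives $\Delta^2\sum_n|C_n^{(m)}(\eta)|^2\le \varepsilon^2\,\Delta\sum_n\bigl(\Delta\sum_j\psi_n(c,t_j)^2\bigr)$. It therefore suffices to prove, for each $n$,
\begin{equation*}
\Delta\sum_{j=1}^m\psi_n(c,t_j)^2\le \frac{\Omega^2\Delta^2}{2}+\min\Bigl\{1,\lambda_n(c)+\frac{\Omega\Delta}{\pi}\Bigr\}.
\end{equation*}

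This Riemann-sum estimate is the main obstacle. I would prove it by relating the discrete sum to integrals of $\psi_n^2$. The quadrature error is handled with the Peano kernel \eqref{eq:peano_theorem}. Apply the trapezoid rule on the grid over $\mathcal I_T$ to $g=\psi_n^2$, which is band-limited to $[-2\Omega,2\Omega]$. Its derivatives are controlled by Bernstein's inequality together with $\|\psi_n\|_{L^2(\mathbb R)}=1$ from \eqref{eq:PSWF_orthonormal}. For instance,
\begin{equation*}
\|(\psi_n^2)''\|_{L^1}\le 2\|\psi_n'\|_{L^2}^2+2\|\psi_n\|\|\psi_n''\|\le 4\Omega^2,
\end{equation*}
so the trapezoid error is at most $\frac{\Delta^2}{8}\cdot 4\Omega^2=\frac{\Omega^2\Delta^2}{2}$. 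This matches the same $\Omega^2\Delta^2/2$ term appearing in $\gamma_1$, which was presumably proved the same way in the appendix for $e_{\mathrm{sp}}$.

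Two integral comparisons then produce the two branches of the minimum. For the branch $\lambda_n(c)+\Omega\Delta/\pi$, the trapezoid sum approximates $\int_{\mathcal I_T}\psi_n^2=\lambda_n$ by \eqref{eq:PSWF_orthogonal}. The difference between the plain sum and the trapezoid sum is the two endpoint half-weights, $\frac{\Delta}{2}(\psi_n(-T/2)^2+\psi_n(T/2)^2)$. Each is bounded using the pointwise bound $|\psi_n(t)|^2\le\|\hat\psi_n\|_{L^1}^2/(4\pi^2)\le \Omega/\pi$, which follows from Cauchy--Schwarz and Parseval. Together they give $\Omega\Delta/\pi$. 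For the branch $1$, I would compare the sum over all of $\Delta\mathbb Z$ with $\int_{\mathbb R}\psi_n^2=1$. By the Poisson summation formula, $\Delta\sum_{k}\psi_n(k\Delta)^2=\int\psi_n^2$ exactly whenever $2\Omega<2W_{\mathrm{hs}}$, i.e.\ $\Omega<\pi/\Delta$, which holds since $\mathcal B\subset\mathcal I_{W_{\mathrm{cut}}}$ and $W_{\mathrm{hs}}\ge W_{\mathrm{cut}}$. Hence the finite sum is at most $1$ and the extra $\Omega^2\Delta^2/2$ is harmless.

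The delicate points are the endpoint bookkeeping and justifying Bernstein's inequality for $\psi_n^2$. If the $\Omega^2\Delta^2/2$ constant does not come out exactly, I would reuse the corresponding appendix lemma behind $\gamma_1$ directly.
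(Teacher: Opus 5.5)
Your proposal is correct and follows the paper's skeleton, except in one step, the bound by $1$, where you use a different argument.

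\textbf{Shared steps.} Both proofs isolate the noise term with the reverse triangle inequality. You apply it directly to $A^{\mathrm{est}}_{\mathcal B(\omega_0,\Omega)}(y_\eta)-A^{\mathrm{est}}_{\mathcal B(\omega_0,\Omega)}(y)$, which is cleaner than the paper's two-case sign argument. Both then apply Cauchy--Schwarz to each coefficient and bound $\Delta\sum_j\psi_n(c,t_j)^2$ separately for each $n$.

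\textbf{Branch $\lambda_n(c)+\Omega\Delta/\pi$.} The paper uses the midpoint-rule Peano kernel from Lemma~\ref{lemma:difference_C_m_C_inf} on cells of width $\Delta$ centered at the grid points. This enlarges the domain to $[-\frac{T}{2}-\frac{\Delta}{2},\frac{T}{2}+\frac{\Delta}{2}]$ and charges the two edge intervals, of total length $\Delta$, at the pointwise rate $\Omega/\pi$. Your trapezoid rule on $\mathcal I_T$ plus the two endpoint half-weights mirrors this and gives identical constants.

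\textbf{Branch $1$, where you differ.} The paper reuses the same midpoint estimate and bounds the integral by $\|\psi_n\|_{L^2(\mathbb R)}^2=1$, obtaining $1+\frac{\Omega^2\Delta^2}{2}$. You instead use the exact sampling identity $\Delta\sum_{k\in\mathbb Z}\psi_n(c,k\Delta)^2=1$, valid because $\Omega\le W_{\mathrm{cut}}\le W_{\mathrm{hs}}$. This gives the sharper $\Delta\sum_j\psi_n(c,t_j)^2\le 1$, and the stated form follows from $\min\{1,a+b\}\le\min\{1,a\}+b$ for $b\ge 0$.

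\textbf{Trade-off.} The paper's route handles both branches with one quadrature device and does not need the no-aliasing condition at that step. Yours shows that the $\frac{\Omega^2\Delta^2}{2}$ term is unnecessary in the second branch.

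\textbf{Your two flagged worries are not real.}
\begin{itemize}
\item You only use the $L^2$ Bernstein bound for $\psi_n$ itself, which is just Parseval on $[-\Omega,\Omega]$; you never need it for $\psi_n^2$.
\item The sampling identity needs no Poisson-summation convergence argument. The family $\{\sqrt{\Delta}\,\rho_{W_{\mathrm{hs}}}(\cdot-k\Delta)\}_{k\in\mathbb Z}$ is an orthonormal basis of $\mathscr B(\mathcal I_{W_{\mathrm{hs}}})\supset\mathscr B(\mathcal I_\Omega)$, with $\langle f,\rho_{W_{\mathrm{hs}}}(\cdot-k\Delta)\rangle=f(k\Delta)$. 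Also $\mathcal G_{T,\Delta}\subset\Delta\mathbb Z$ because $m$ is odd.
\end{itemize}
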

\noindent The perturbation term is therefore separated from the deterministic approximation terms associated with time limitation, sampling, band limitation, and finite-term truncation. In particular, \eqref{eq:noise_error_bound} shows that the additional error is linear in the noise $\ell^2$-norm, with proportionality constant $\gamma_{\psi}(\Omega,\Delta)$ determined by $\Omega$, $\Delta$, and the PSWF eigenvalues $\lambda_n(c)$.
\subsection{Proof Outline for Theorem \ref{theorem:main_theorem}}\label{sec:proof_outline}
\subsubsection{From $C^{(m)}_n(y,\omega_0,\Omega)$ to $C^{(\infty)}_n(x^\star,\omega_0,\Omega)$ (finite-sample to infinite-sample band-energy coefficients)}
{We begin with the $n$th finite-sample band-energy coefficient $C^{(m)}_n(y,\omega_0,\Omega)=\sum\limits^m_{j=1}y_j\psi_n(c,t_j)e^{-i\omega_0t_j}$ defined in \eqref{eq:C_n_m}. Since the time regularity and localization assumptions (Assumptions \ref{assumption:TR} and \ref{assumption:TL}) hold, it is natural to study the difference between $C^{(m)}_n(y,\omega_0,\Omega)$ and the $n$th infinite-sample band-energy coefficient $C^{(\infty)}_n(x^\star,\omega_0,\Omega)=\sum\limits_{k\in\mathbb Z}x^\star(k\Delta)\psi_n(c,k\Delta)e^{-i\omega_0k\Delta}$ defined in \eqref{eq:C_n_inf}. Let $x^\star_d[k]=x^\star(k\Delta)$ and $u_{d,n}[k]=u_n(k\Delta)$, where $u_n(t):=\psi_n(c,t)e^{i\omega_0 t}$. Then
$C^{(\infty)}_n(x^\star,\omega_0,\Omega)=\sum\limits_{k\in\mathbb Z}x^\star_{d}[k]\overline{u_{d,n}}[k]$. The following lemma gives a rigorous upper bound for the difference between 
$C^{(m)}_n(y,\omega_0,\Omega)$ and $C^{(\infty)}_n(x^\star,\omega_0,\Omega)$.
\begin{lemma}[Difference between $C^{(m)}_n(y,\omega_0,\Omega)$ and $C^{(\infty)}_n(x^\star,\omega_0,\Omega)$]
    \label{lemma:difference_C_m_C_inf}
    \begin{align}
        &\Big|C^{(m)}_n(y,\omega_0,\Omega)-C^{(\infty)}_n(x^\star,\omega_0,\Omega)\Big|\nonumber\\\leq&\,\frac{1}{\Delta}\left(\epsilon_T^2\|x^\star\|^2_{L^2(\mathbb R)}+\frac{\Delta^2}{8}\|\frac{d^2}{dt^2}|{x^\star}|^2\|_{L^1(\mathbb R\setminus\mathcal I_T)}\right)^{1/2}\left(1-\lambda_n(\frac{\Omega T}{2})+\frac{\Omega\Delta^2}{2}\right)^{1/2},
    \end{align}
    where $\epsilon_T$ is the parameter in Assumption $\ref{assumption:TL}$.\\
    \emph{Proof.} See $\ref{proof:lemma_difference_C_m_C_inf}$.
\end{lemma}
\subsubsection{From $C^{(\infty)}_n(x^\star,\omega_0,\Omega)$ to $C^\star_n(x^\star,\omega_0,\Omega)$ (infinite-sample to ideal band-energy coefficients)}
From the discrete-time counterpart of the generalized Parseval theorem \eqref{eq:Parseval}, we obtain $C^{(\infty)}_n(x^\star,\omega_0,\Omega)=\frac{1}{2\pi}\int^{\pi}_{-\pi}\hat x^\star_{d}(\omega)\overline{\hat u_{d,n}}(\omega)d\omega$, where $\hat x^\star_{d}(\omega)$ and $\hat u_{d,n}(\omega)$ denote the discrete-time Fourier transforms of $x^\star_{d}[k]$ and $u_{d,n}[k]$, respectively. We now deal with $\hat x^\star_{d}(\omega)$ and $\hat u_{d,n}(\omega)$ in turn.
\begin{enumerate}
\item For $\hat x^\star_{d}(\omega)$: Define $x^\star_{s}(t):=x^\star(t)\sum\limits_{k\in\mathbb Z}\delta(t-k\Delta)$. By the Fourier multiplication rule and Poisson summation, 
\begin{equation*}
    \hat x^\star_{s}(\omega)=\frac{1}{2\pi}[\hat x^\star(\omega)*(\frac{2\pi}{\Delta}\sum\limits_{k\in\mathbb Z}\delta(\omega-k\frac{2\pi }{\Delta}))]=\frac{1}{\Delta}\sum\limits_{k\in\mathbb Z}\hat x^\star(\omega-k\frac{2\pi}{\Delta}).
\end{equation*}
On the other hand, $x^\star_s$ can be written as $x^\star_s(t)=\sum\limits_{k\in\mathbb Z}x^\star(k\Delta)\delta(t-k\Delta)$. From the Fourier transform pair $\delta(t-k\Delta)\leftrightarrow e^{-i\omega k\Delta}$, we have 
\begin{equation*}
\hat x^\star_s(\omega)
= \sum_{k\in\mathbb Z} x^\star(k\Delta)\,e^{-i\omega k\Delta}
= \hat x^\star_d(\omega\Delta).
\end{equation*}
Hence,
$\hat x^\star_{d}(\omega)=\frac{1}{\Delta}\sum\limits_{k\in\mathbb Z}\hat x(\frac{\omega-2\pi k}{\Delta})$.
\item For $\hat u_{d,n}(\omega)$: Similarly, $\hat u_{d,n}(\omega)=\frac{1}{\Delta}\sum\limits_{k\in\mathbb Z}\hat u_n(\frac{\omega-2\pi k}{\Delta})$, where $\hat u_n(\omega)$ denotes the discrete-time Fourier transform of $u_n(t)$. The identity $\hat u_n(\omega)=\ell_n\psi_n(c,\frac{T}{2\Omega}(\omega-\omega_0))\boldsymbol{1}_{\omega\in\mathcal B(\omega_0,\Omega)}$ follows from the Fourier transform representation of PSWFs in \eqref{eq:PSWF_FT} and the modulation (frequency-shift) rule.
\end{enumerate}
\hspace*{0.5cm}Note that $C^{(\infty)}_n(x^\star,\omega_0,\Omega)$ can be further simplified as:
\begin{equation*}
    \begin{aligned}
        C^{(\infty)}_n(x^\star,\omega_0,\Omega)&=\frac{1}{2\pi\Delta^2}\int^{\pi}_{-\pi}[\sum\limits_{k_1\in\mathbb Z}\hat x^\star(\frac{\omega-2\pi k_1}{\Delta})][\sum\limits_{k_2\in\mathbb Z}\overline{\hat u_n}(\frac{\omega-2\pi k_2}{\Delta})]d\omega\\&=\frac{1}{2\pi\Delta}\int^{\frac{\pi}{\Delta}}_{-\frac{\pi}{\Delta}}[\sum\limits_{k_1\in\mathbb Z}\hat x^\star(\omega-\frac{2\pi k_1}{\Delta})][\sum\limits_{k_2\in\mathbb Z}\overline{\hat u_n}(\omega-\frac{2\pi k_2}{\Delta})]d\omega\\&=\frac{\overline{\ell_n}}{2\pi\Delta}\sum\limits_{k_1\in\mathbb Z}\sum\limits_{k_2\in\mathbb Z}\int^{\frac{\pi}{\Delta}}_{-\frac{\pi}{\Delta}}\hat x^\star(\omega-k_1\frac{2\pi}{\Delta})\psi_n(c,\frac{T}{2\Omega}(\omega-\omega_0-k_2\frac{2\pi}{\Delta}))\boldsymbol{1}_{\omega\in\mathcal B(\omega_0+k_2\frac{2\pi}{\Delta},\Omega)}d\omega.
    \end{aligned}
\end{equation*}  
For $k_2\geq1$, we have $\frac{\pi}{\Delta}-(\omega_0+k_2\frac{2\pi}{\Delta}-\Omega)=(1-2k_2)\frac{\pi}{\Delta}-(\omega_0-\Omega)<0$ because we impose the constraint $\omega_0-\Omega>-W_{\mathrm{cut}}\geq-W_{\mathrm{hs}}=-\frac{\pi}{\Delta}$. For $k_2\leq-1$, $(\omega_0+k_2\frac{2\pi}{\Delta}+\Omega)-(-\frac{\pi}{\Delta})=(\omega_0+\Omega)-(2k_2-1)\frac{\pi}{\Delta}<0$ since we require$\omega_0+\Omega<W_{\mathrm{cut}}\leq W_{\mathrm{hs}}=\frac{\pi}{\Delta}$. Thus, the integrand is non-zero only when $k_2=0$. Therefore,
\begin{align*}
    C^{(\infty)}_n(x^\star,\omega_0,\Omega)&=\frac{\overline{\ell_n}}{2\pi\Delta}\sum\limits_{k\in\mathbb Z}\int^{\omega_0+\Omega}_{\omega_0-\Omega}\hat x^\star(\omega-k\frac{2\pi}{\Delta})\psi_n(c,\frac{T}{2\Omega}(\omega-\omega_0))d\omega\\&=\frac{\overline{\ell_n}}{2\pi\Delta}\int^{\omega_0+\Omega}_{\omega_0-\Omega}\hat x^\star(\omega)\psi_n(c,\frac{T}{2\Omega}(\omega-\omega_0))d\omega\\&+\frac{\overline{\ell_n}}{2\pi\Delta}\sum\limits_{k\not=0}\int^{\omega_0+\Omega}_{\omega_0-\Omega}\hat x^\star(\omega-k\frac{2\pi}{\Delta})\psi_n(c,\frac{T}{2\Omega}(\omega-\omega_0))d\omega\\&=C^\star_n(x^\star,\omega_0,\Omega)+\frac{\overline{\ell_n}}{2\pi\Delta}\sum\limits_{k\not=0}\int^{\omega_0+\Omega}_{\omega_0-\Omega}\hat x^\star(\omega-k\frac{2\pi}{\Delta})\psi_n(c,\frac{T}{2\Omega}(\omega-\omega_0))d\omega.
\end{align*}
Because the spectral localization assumption with sharp frequency decay (Assumption \ref{assumption:SL}) holds, we expect that $C^{(\infty)}_n(x^\star,\omega_0,\Omega)$ is close to $C^\star_n(x^\star,\omega_0,\Omega)$. The following lemma provides an explicit upper bound on their difference.
\begin{lemma}[Difference between $C^{(\infty)}_n(x^\star,\omega_0,\Omega)$ and $C^\star_n(x^\star,\omega_0,\Omega)$]
    \label{lemma:difference_C_inf_C_ideal} 
    \begin{equation}
        \Big|C^{(\infty)}_n(x^\star,\omega_0,\Omega)-C^\star_n(x^\star,\omega_0,\Omega)\Big |\leq\frac{1}{\Delta}\sqrt{6}\Omega\alpha_{fd}W_{\mathrm{hs}}^{-r},
    \end{equation}
    where $r$ and $\alpha_{fd}$ are parameters in Assumption $\ref{assumption:SL}$.\\
    \emph{Proof.} See $\ref{proof:lemma_difference_C_inf_C_ideal}$.
\end{lemma}
\subsubsection{Upper bound for the approximation error $\Big|A^{\mathrm{est}}_{\mathcal B(\omega_0,\Omega)}(y)-A_{\mathcal B(\omega_0,\Omega)}(x^\star)\Big|$}
Finally, we bound the approximation error $\Big|A^{\mathrm{est}}_{\mathcal B(\omega_0,\Omega)}(y)-A_{\mathcal B(\omega_0,\Omega)}(x^\star)\Big|$.
By the triangle inequality,
\begin{align*}
    &\Big|A^{\mathrm{est}}_{\mathcal B(\omega_0,\Omega)}(y)-A_{\mathcal B(\omega_0,\Omega)}(x^\star)\Big|\\\leq&\,\Big|A^{\mathrm{est}}_{\mathcal B(\omega_0,\Omega)}(y)-A^{\infty}_{\mathcal B(\omega_0,\Omega)}(x^\star)\Big|+\Big|A^{\infty}_{\mathcal B(\omega_0,\Omega)}(x^\star)-A^{\mathrm{ideal}}_{\mathcal B(\omega_0,\Omega)}(x^\star)\Big|+\Big|A^{\mathrm{ideal}}_{\mathcal B(\omega_0,\Omega)}(x^\star)-A_{\mathcal B(\omega_0,\Omega)}(x^\star)\Big|
\end{align*}
To control these three terms, we invoke the reverse triangle inequality
\begin{equation}
\label{eq:reverse_triangle_inequality}
\text{for any } v_1, v_2 \text{ in a normed space,}\quad
\bigl|\|v_1\| - \|v_2\|\bigr|
\leq \|v_1 - v_2\|,
\end{equation}
which is an immediate consequence of the triangle inequality.\\ 
\hspace*{0.5cm}We first bound $\Big|A^{\mathrm{est}}_{\mathcal B(\omega_0,\Omega)}(y)-A^{\infty}_{\mathcal B(\omega_0,\Omega)}(x^\star)\Big|$. Because both the vectors $\Big(C^{(m)}_n(y,\omega_0,\Omega)\Big)_{n=0,1,...,\lfloor\frac{2c}{\pi}\rfloor}$ and $\Big(C^{(\infty)}_n(x^\star,\omega_0,\Omega)\Big)_{n=0,1,...,\lfloor\frac{2c}{\pi}\rfloor}$ lie in $\mathbb C^{\lfloor\frac{2c}{\pi}\rfloor+1}$, the reverse triangle inequality yields
\begin{align*}
    \Big|A^{\mathrm{est}}_{\mathcal B(\omega_0,\Omega)}(y)-A^{\infty}_{\mathcal B(\omega_0,\Omega)}(x^\star)\Big|=&\,\Delta\Bigg|\sqrt{\sum\limits^{\lfloor\frac{2c}{\pi}\rfloor}_{n=0}\Big|C^{(m)}_n(y,\omega_0,\Omega)}\Big|^2-\sqrt{\sum\limits^{\lfloor\frac{2c}{\pi}\rfloor}_{n=0}\Big|C^{(\infty)}_n(x^\star,\omega_0,\Omega)}\Big|^2\Bigg|\\\leq&\,\Delta\sqrt{\sum\limits^{\lfloor\frac{2c}{\pi}\rfloor}_{n=0}\Big|C^{(m)}_n(y,\omega_0,\Omega)-C^{(\infty)}_n(x^\star,\omega_0,\Omega)\Big|^2}.
\end{align*}
By Lemma \ref{lemma:difference_C_m_C_inf}, we have
\begin{align}
    &\Big|A^{\mathrm{est}}_{\mathcal B(\omega_0,\Omega)}(y)-A^{\infty}_{\mathcal B(\omega_0,\Omega)}(x^\star)\Big|\nonumber\\\leq&\,\sqrt{\left(\epsilon_T^2\|x^\star \|_{L^2(\mathbb R)}^2+\frac{\Delta^2}{8}\|\frac{d^2}{dt^2}{x^\star}^2\|_{L^1(\mathbb R\setminus\mathcal I_T)}\right)}\sqrt{\sum\limits^{\lfloor\frac{2c}{\pi}\rfloor}_{n=0}\left(1-\lambda_n(\frac{\Omega T}{2})+\frac{\Omega\Delta^2}{2}\right)}\nonumber\\\leq&\,\epsilon_T\|x^\star \|_{L^2(\mathbb R)}\sqrt{\sum\limits^{\lfloor\frac{2c}{\pi}\rfloor}_{n=0}\left(1-\lambda_n(\frac{\Omega T}{2})+\frac{\Omega\Delta^2}{2}\right)}\nonumber\\&+\frac{\Delta}{\sqrt{8}}\sqrt{\|\frac{d^2}{dt^2}{x^\star}^2\|_{L^1(\mathbb R\setminus\mathcal I_T)}}\sqrt{\sum\limits^{\lfloor\frac{2c}{\pi}\rfloor}_{n=0}\left(1-\lambda_n(\frac{\Omega T}{2})+\frac{\Omega\Delta^2}{2}\right)}.
\end{align}
The second inequality follows again from the triangle inequality. The right-hand side consists of two contributions: The first is termed the time-limiting error $e_{\mathrm{tl}}$ since it depends on the time-localization parameter $\epsilon_T$. The second is termed the sampling error $e_{\mathrm{sp}}$, as it depends on the sampling interval $\Delta$.\\
\hspace*{0.5cm}Next, we address the bound of $\Big|A^{\infty}_{\mathcal B(\omega_0,\Omega)}(x^\star)-A^{\mathrm{ideal}}_{\mathcal B(\omega_0,\Omega)}(x^\star)\Big|$. Because both $\Big(C^{(\infty)}_n(x^\star,\omega_0,\Omega)\Big)_{n=0,1,...,\lfloor\frac{2c}{\pi}\rfloor}$ and $\Big(C^{\star}_n(x^\star,\omega_0,\Omega)\Big)_{n=0,1,...,\lfloor\frac{2c}{\pi}\rfloor}$ belong to
$\mathbb C^{\lfloor\frac{2c}{\pi}\rfloor+1}$, the reverse triangle inequality gives
\begin{align*}
    \Big|A^{\infty}_{\mathcal B(\omega_0,\Omega)}(x^\star)-A^{\mathrm{ideal}}_{\mathcal B(\omega_0,\Omega)}(x^\star)\Big|=&\,\Delta\Bigg|\sqrt{\sum\limits^{\lfloor\frac{2c}{\pi}\rfloor}_{n=0}\Big|C^{(\infty)}_n(x^\star,\omega_0,\Omega)}\Big|^2-\sqrt{\sum\limits^{\lfloor\frac{2c}{\pi}\rfloor}_{n=0}\Big|C^{\star}_n(x^\star,\omega_0,\Omega)}\Big|^2\Bigg|\\\leq&\,\Delta\sqrt{\sum\limits^{\lfloor\frac{2c}{\pi}\rfloor}_{n=0}\Big|C^{(\infty)}_n(x^\star,\omega_0,\Omega)-C^{\star}_n(x^\star,\omega_0,\Omega)\Big|^2}.
\end{align*}
By Lemma \ref{lemma:difference_C_inf_C_ideal}, we have
\begin{equation*}
    \begin{aligned}
        \Big|A^{\infty}_{\mathcal B(\omega_0,\Omega)}(x^\star)-A^{\mathrm{ideal}}_{\mathcal B(\omega_0,\Omega)}(x^\star)\Big|\leq\,&\alpha_{fd}W_{\mathrm{hs}}^{-r}\sqrt{6\Omega^2(\Omega T+\pi)/\pi}.
    \end{aligned}
\end{equation*}
The right-hand side is called the band-limiting error, as it depends on the spectral localization parameters $\alpha_{fd}$ and $r$.\\
\hspace*{0.5cm}Finally, we deal with the bound of  $\Big|A^{\mathrm{ideal}}_{\mathcal B(\omega_0,\Omega)}(x^\star)-A_{\mathcal B(\omega_0,\Omega)}(x^\star)\Big|$. Because both $\Big(C^\star_n(x^\star,\omega_0,\Omega)\Big)_{n\geq0}$ and
$\Big(C^\star_0(x^\star,\omega_0,\Omega),C^\star_1(x^\star,\omega_0,\Omega),\cdots,C^\star_{\lfloor\frac{2c}{\pi}\rfloor}(x^\star,\omega_0,\Omega),0,0,\cdots\Big)$ lie in $\ell^2(\mathbb N)$, we obtain
\begin{align*}
    \Big|A^{\mathrm{ideal}}_{\mathcal B(\omega_0,\Omega)}(x^\star)-A_{\mathcal B(\omega_0,\Omega)}(x^\star)\Big|=&\,\Delta\Bigg|\sqrt{\sum\limits^{\lfloor\frac{2c}{\pi}\rfloor}_{n=0}\Big|C^{\star}_n(x^\star,\omega_0,\Omega)}\Big|^2-\sqrt{\sum\limits^{\infty}_{n=0}\Big|C^{\star}_n(x^\star,\omega_0,\Omega)}\Big|^2\Bigg|\\\leq&\,\Delta\sqrt{\sum\limits_{n>\lfloor\frac{2c}{\pi}\rfloor}\Big|C^{\star}_n(x^\star,\omega_0,\Omega)\Big|^2}.
\end{align*}
From Proposition \ref{proposition:exact_representation} and its proof in \ref{proof:proposition_exact_representation}, we know that
\begin{align*}
    \Delta\sqrt{\sum\limits_{n>\lfloor\frac{2c}{\pi}\rfloor}\Big|C^\star_n(x^\star,\omega_0,\Omega)\Big|^2}&=\|P_{\mathcal B(\omega_0,\Omega)}x^\star-\sum\limits^{\lfloor\frac{2c}{\pi}\rfloor}_{n=0}\langle P_{\mathcal B(\omega_0,\Omega)}x^\star,\psi_ne^{i\omega_0t}\rangle\psi_ne^{i\omega_0t}\|_{L^2(\mathbb R)}\\&=\|(P_{\mathcal B(\omega_0,\Omega)}x^\star)e^{-i\omega_0t}-\sum\limits^{\lfloor\frac{2c}{\pi}\rfloor}_{n=0}\langle(P_{\mathcal B(\omega_0,\Omega)}x^\star)e^{-i\omega_0t},\psi_n\rangle\psi_n\|_{L^2(\mathbb R)}.
\end{align*}
Applying Lemma \ref{lemma:function_approximation_PSWFs}, we have
\begin{equation*}
    \Big|A^{\mathrm{ideal}}_{\mathcal B(\omega_0,\Omega)}(x^\star)-A_{\mathcal B(\omega_0,\Omega)}(x^\star)\Big|\leq\,\sqrt{12}\|(P_{\mathcal B(\omega_0,\Omega)}x^\star)e^{-i\omega_0t}\|_{L^2(\mathbb R\setminus\mathcal I_T)}=\sqrt{12}\|P_{\mathcal B(\omega_0,\Omega)}x^\star\|_{L^2(\mathbb R\setminus\mathcal I_T)}.
\end{equation*}
The last term is referred to as the finite-term truncation error, since it arises from truncating the ideal expansion to finite number of band-energy coefficients.
}


\section{Numerical Results and Discussion}
{We report five experiments to assess the proposed band $L^2$-norm and band-energy estimators and to examine the theoretical predictions. Experiment 1 compares the proposed band-energy estimator with classical methods in both noiseless and noisy settings. Experiment 2 illustrates the multiple-band settings in which the proposed implementation is particularly efficient. Experiments 3 and 4 investigate the dependence of the estimators on the observation length $T$ and the band of interest $\mathcal B(\omega_0,\Omega)$, respectively. Experiment 5 studies the stability of the proposed band $L^2$-norm estimator under additive noise across a range of signal-to-noise ratios (SNRs).

To verify the theorems and corollaries, the true band $L^2$-norm and the true band energy must be available as ground truth quantities. We therefore consider three signal families for which the Fourier transforms can be written explicitly.
\begin{enumerate}
\item \textbf{Gaussian mixture.}
\begin{align*}
    x^\star(t)
    &=\sum_{i=1}^3 \frac{A_i}{\sqrt{2\pi\sigma_i^2}}
    \exp\!\left(-\frac{t^2}{2\sigma_i^2}\right)\cos(2\pi f_i t),\\
    \hat x^\star(\omega)
    &=\sum_{i=1}^3 \frac{A_i}{2}
    \left[
    \exp\!\left(-\frac{\sigma_i^2(\omega-2\pi f_i)^2}{2}\right)
    +
    \exp\!\left(-\frac{\sigma_i^2(\omega+2\pi f_i)^2}{2}\right)
    \right].
\end{align*}
Here $A_i>0$, $\sigma_i>0$, and $f_i\in\mathbb R$ denote the amplitudes, standard deviations, and modulation frequencies (in Hz), respectively. Each component of $\hat x^\star(\omega)$ is centered at $\omega=\pm 2\pi f_i$.
\item \textbf{Hyperbolic secant mixture.}
\begin{align*}
    x^\star(t)
    &=\sum_{i=1}^3 A_i\,\operatorname{sech}\!\left(\frac{\pi t}{\tau_i}\right)\cos(2\pi f_i t),\\
    \hat x^\star(\omega)
    &=\sum_{i=1}^3 \frac{A_i\tau_i}{2}
    \left[
    \operatorname{sech}\!\left(\frac{\tau_i(\omega-2\pi f_i)}{2}\right)
    +
    \operatorname{sech}\!\left(\frac{\tau_i(\omega+2\pi f_i)}{2}\right)
    \right].
\end{align*}
Here $A_i>0$, $\tau_i>0$, and $f_i\in\mathbb R$ denote the amplitudes, scalings and modulation frequencies, respectively. The expression for $\hat x^\star(\omega)$ follows from the Fourier transform pair in Chapter 17 of \cite{FT}.
\item \textbf{B-spline mixture.}
\begin{align*}
    x^\star(t)
    &=\sum_{i=1}^3\sum_{j=0}^{r_i+1}
    \frac{A_i}{\tau_i r_i!}(-1)^j\binom{r_i+1}{j}
    \left(\frac{t}{\tau_i}+\frac{r_i+1}{2}-j\right)_+^{r_i}
    \cos(2\pi f_i t),\\
    \hat x^\star(\omega)
    &=\sum_{i=1}^3 \frac{A_i}{2}
    \left[
    \left(\frac{2\pi}{\tau_i}\rho_{\tau_i/2}(\omega-2\pi f_i)\right)^{r_i+1}
    +
    \left(\frac{2\pi}{\tau_i}\rho_{\tau_i/2}(\omega+2\pi f_i)\right)^{r_i+1}
    \right].
\end{align*}
Here $A_i>0$, $\tau_i>0$, $r_i\ge 0$, and $f_i\in\mathbb R$ denote the amplitudes, scalings, B-spline orders, and modulation frequencies, respectively, and $(x)_+^n=x^n$ for $x\ge 0$ and $(x)_+^n=0$ otherwise. The function $\rho_{\cdot}(\cdot)$ is defined in \eqref{eq:sinc_rect_Fourier_pair}. A B-spline of degree $r_i$ is $r_i$-times differentiable, and its Fourier transform is given in \cite{Bsplines}.
\end{enumerate}
In all simulations, the signal parameters are sampled randomly from prescribed intervals. We write
\begin{equation}
    \theta\in\mathscr R(I),
\end{equation}
to indicate that the parameter $\theta$ is drawn from the sampling domain $I$. For example, the condition $A_i\in\mathscr R([5,20])$ means that the amplitude $A_i$ is sampled from the interval $[5,20]$.  
}
\subsection{Comparison with Integrated-Periodogram-Based Methods}
{We compare the proposed band-energy estimator with integrated-periodogram-type methods based on six classical spectral estimators: Thomson’s multitaper (DPSS) periodogram \cite{Multitaper1}, Welch’s averaged periodogram with the Hamming window \cite{Welch1}, and modified periodograms with Hann, Hamming, Blackman, and Kaiser windows \cite{Window}. For each baseline, we first estimated the PSD of the observed signal, then integrated the estimated spectrum over the band of interest, and finally multiplied the result by the observation length $T$ to obtain the estimated band energy. All spectral estimates and band integrals were computed using MATLAB R2024b \cite{Matlab2024b} and the Signal Processing Toolbox \cite{Signalprocessing}, specifically via the commands \texttt{pmtm}, \texttt{pwelch}, \texttt{periodogram}, and \texttt{bandpower}.

We considered all three signal families introduced above. The Gaussian mixtures were parameterized by $\{f_{1j}\in\mathscr R([50,600])\}_{j=1}^3$, $\{\sigma_{1j}\in\mathscr R([0.02,0.1])\}_{j=1}^3$, and $\{A_{1j}\in\mathscr R([6,18])\}_{j=1}^3$; the hyperbolic secant mixtures by $\{f_{2j}\in\mathscr R([50,530])\}_{j=1}^3$, $\{\tau_{2j}\in\mathscr R([0.02,0.05])\}_{j=1}^3$, and $\{A_{2j}\in\mathscr R([2,18])\}_{j=1}^3$; and the B-spline mixtures by $\{f_{3j}\in\mathscr R([110,690])\}_{j=1}^3$, $\{\tau_{3j}\in\mathscr R([0.4,0.65])\}_{j=1}^3$, $\{r_{3j}\in\mathscr R(\{3,4,5\})\}_{j=1}^3$, and $\{A_{3j}\in\mathscr R([7,14])\}_{j=1}^3$. We studied both a noiseless setting and a noisy setting with $\mathrm{SNR}=0$ dB. In the noiseless case, we generated $3\times 15$ observation vectors $y=y_{i,j}$, where $i\in[3]$ indexes the mixture family and $j\in[15]$ indexes the trial. Specifically, $y_{1,j}$ was obtained by sampling Gaussian mixtures with $W_{\mathrm{hs},1}=2\pi\cdot700$ and $T_1=2.5$, $y_{2,j}$ was obtained by sampling hyperbolic secant mixtures with $W_{\mathrm{hs},2}=2\pi\cdot650$ and $T_2=1$, and $y_{3,j}$ was obtained by sampling B-spline mixtures with $W_{\mathrm{hs},3}=2\pi\cdot800$ and $T_3=4$. In the noisy case, the clean observations were generated in the same manner and then corrupted by additive noise to achieve $\mathrm{SNR}=0$ dB. For each simulation, we estimated band energy over 21 bands. The half-bandwidths were set to 1, 3, and 4 Hz for the Gaussian, hyperbolic secant, and B-spline mixtures, respectively. The corresponding band-center frequencies, in Hz, were $\{f_{11}+(i-11)\}_{i=1}^{21}$, $\{f_{23}+(i-11)\}_{i=1}^{21}$, and $\{f_{32}+(i-11)\}_{i=1}^{21}$ for the Gaussian, hyperbolic secant, and B-spline mixtures, respectively.
\begin{figure}[H]
    \centering
    \begin{subfigure}{0.49\textwidth}
        \centering
        \includegraphics[width=\linewidth]{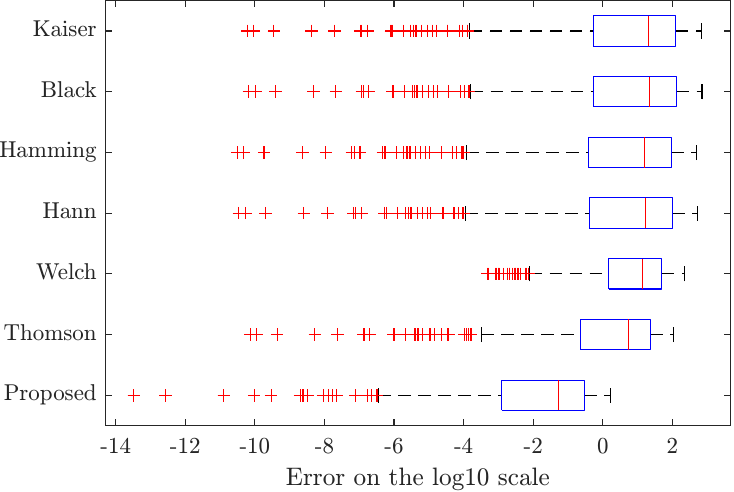}
        \caption{}
        \label{fig:Thomson_experiment1_1}
    \end{subfigure}
    \begin{subfigure}{0.49\textwidth}
        \centering
        \includegraphics[width=\linewidth]{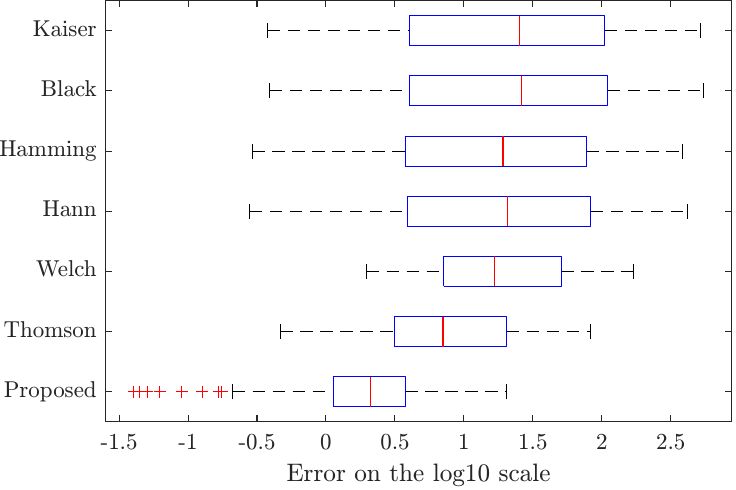}
        \caption{}
        \label{fig:Thomson_experiment1_2}
    \end{subfigure}
    \caption{Boxplots of errors on the $\log_{10}$ scale for the Gaussian mixtures: (a) noiseless, (b) noisy ($\mathrm{SNR}=0$ dB).}
    \label{fig:Thomson_experiment1}
\end{figure}
\begin{figure}[H]
    \centering
    \begin{subfigure}{0.49\textwidth}
        \centering
        \includegraphics[width=\linewidth]{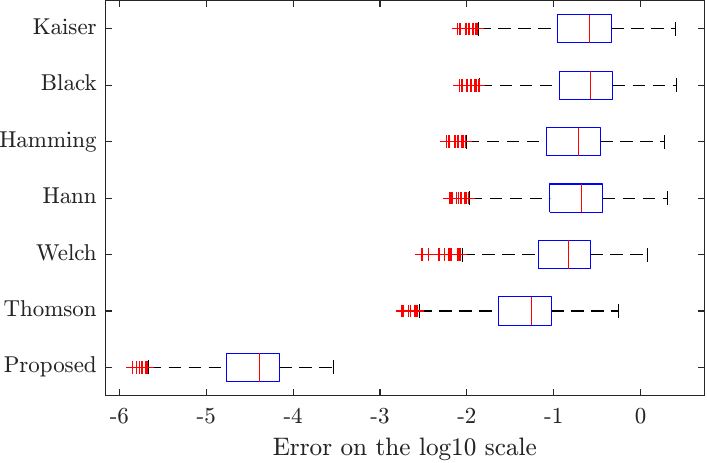}
        \caption{}
        \label{fig:Thomson_experiment2_1}
    \end{subfigure}
    \begin{subfigure}{0.49\textwidth}
        \centering
        \includegraphics[width=\linewidth]{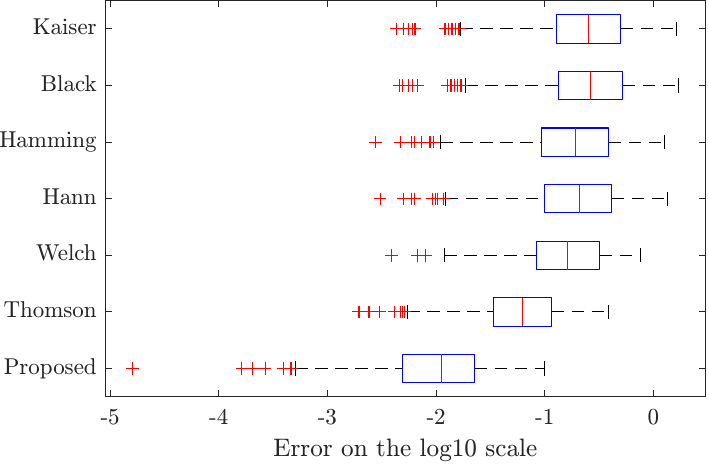}
        \caption{}
        \label{fig:Thomson_experiment2_2}
    \end{subfigure}
    \caption{Boxplots of errors on the $\log_{10}$ scale for the hyperbolic secant mixtures: (a) noiseless, (b) noisy ($\mathrm{SNR}=0$ dB).}
    \label{fig:Thomson_experiment2}
\end{figure}
\begin{figure}[H]
    \centering
    \begin{subfigure}{0.49\textwidth}
        \centering
        \includegraphics[width=\linewidth, height=5.2cm]{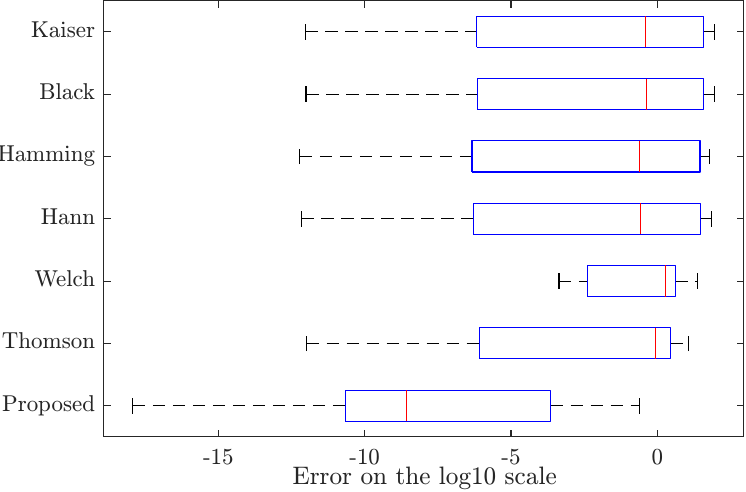}
        \caption{}
        \label{fig:Thomson_experiment3_1}
    \end{subfigure}
    \begin{subfigure}{0.49\textwidth}
        \centering
        \includegraphics[width=\linewidth, height=5.2cm]{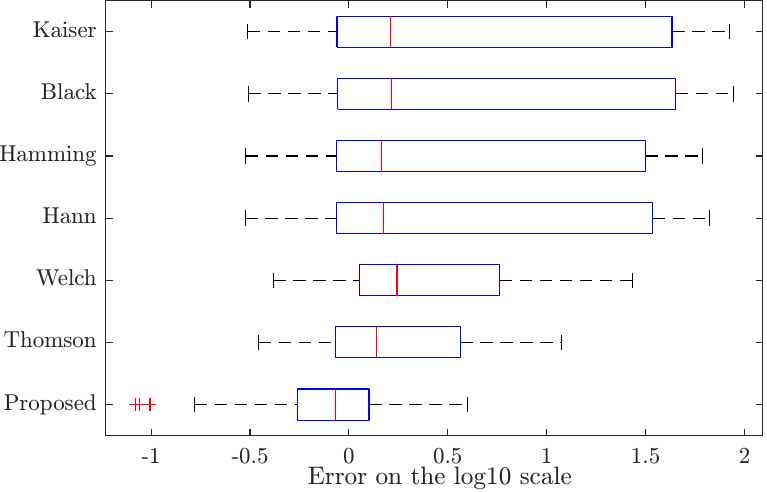}
        \caption{}
        \label{fig:Thomson_experiment3_2}
    \end{subfigure}
    \caption{Boxplots of errors on the $\log_{10}$ scale for the B-spline mixtures: (a) noiseless, (b) noisy ($\mathrm{SNR}=0$ dB).}
    \label{fig:Thomson_experiment3}
\end{figure}
Each boxplot in Figures \ref{fig:Thomson_experiment1}--\ref{fig:Thomson_experiment3} summarizes $15\times 21$ approximation errors on the base-10 logarithmic scale, corresponding to 15 trials over 21 bands. For the integrated-periodogram-type methods, the estimated spectrum is obtained by convolving the true PSD with a window kernel whose finite main lobe and non-negligible side lobes introduce spectral smoothing and leakage. This effect is mild when the true spectrum varies slowly across the kernel width, but becomes much more pronounced when the spectrum contains narrow and sharply localized peaks. Consequently, integrating the smoothed PSD over the target band may lead to substantial deviation from the true band energy. By contrast, the proposed estimator targets the band energy directly through projection onto a time- and band-concentrated subspace spanned by appropriately modulated PSWFs. In the noiseless setting, this yields negligible errors across all three signal families. In the noisy setting, although accuracy degrades for all methods, the proposed estimator remains consistently more accurate than the integrated-periodogram baselines.
}
\subsection{Computational Aspects and Multiple-Band Implementation}
Equation \eqref{eq:computational_complexity} separates the procedure into an offline preprocessing stage and an online estimation stage. The offline stage includes the computation of PSWF handlers. Its cost is implementation-dependent and is incurred only once. The online stage consists of evaluation on the sampling grid and estimator construction and is comparatively lightweight. Since the main computational advantage of the method arises in repeated multiple-band estimation, we focus here on the amortized cost in the reuse setting described in Section 3.1.

We consider the most general reuse setting (Scenario 4), in which both the half-bandwidth $\Omega$ and the observation length $T$ vary. We considered three signal families: Gaussian mixtures parameterized by $\{f_{1j}\in\mathscr R([30,430])\}_{j=1}^{3}$, $\{\sigma_{1j}\in\mathscr R([0.02,0.06])\}_{j=1}^{3}$, and $\{A_{1j}\in\mathscr R([4,5])\}_{j=1}^{3}$; hyperbolic secant mixtures parameterized by $\{f_{2j}\in\mathscr R([100,420])\}_{j=1}^{3}$, $\{\tau_{2j}\in\mathscr R([0.03,0.04])\}_{j=1}^{3}$, and $\{A_{2j}\in\mathscr R([3,6])\}_{j=1}^{3}$; and B-spline mixtures parameterized by $\{f_{3j}\in\mathscr R([50,430])\}_{j=1}^{3}$, $\{\tau_{3j}\in\mathscr R([0.25,0.6])\}_{j=1}^{3}$, $\{r_{3j}\in\mathscr R(\{3,4,5\})\}_{j=1}^{3}$, and $\{A_{3j}\in\mathscr R([3,10])\}_{j=1}^{3}$. We generated $3\times 3$ observation vectors $y=y_{i,j}$, where $i\in[3]$ indexes the mixture family and $j\in[3]$ indexes the $(\Omega,T)$ pair. For Gaussian mixtures, the three $(\Omega,T)$ pairs were $(2\pi\cdot0.5,3)$, $(2\pi\cdot2.5,5)$, and $(2\pi\cdot3,4)$, with sampling frequency $W_{\mathrm{hs}}=2\pi\cdot550$. For hyperbolic secant mixtures, the three $(\Omega,T)$ pairs were $(2\pi\cdot2,2)$, $(2\pi\cdot3,2.5)$, and $(2\pi\cdot4,3)$, with $W_{\mathrm{hs}}=2\pi\cdot500$. For B-spline mixtures, the three $(\Omega,T)$ pairs were $(2\pi\cdot4,3)$, $(2\pi\cdot3,4)$, and $(2\pi\cdot2,5)$, again with $W_{\mathrm{hs}}=2\pi\cdot500$. For each observation vector $y_{i,j}$, we performed estimation over 21 bands. The band-center frequencies, in Hz, were $\{f_{11}+(i-11)\}_{i=1}^{21}$, $\{f_{23}+(i-11)\}_{i=1}^{21}$, and $\{f_{32}+(i-11)\}_{i=1}^{21}$ for the Gaussian, hyperbolic secant, and B-spline mixtures, respectively. In total, the band energy was estimated on $3\cdot 3\cdot 21$ bands.

Among all $(\Omega,T)$ pairs, the largest time--bandwidth product is $c_{\max}=\frac{(2\pi\cdot2.5)\cdot5}{2}$. We therefore computed the PSWF handlers once using $c=c_{\max}$ and then reused them for all target bands through the scaling strategy described in Section 3.1. Grid evaluation was required only nine times because it is performed for each observation vector rather than for each band. Since each identical $(\Omega,T)$ pair is associated with 21 target bands, we report the average elapsed time per band after every 21 processed bands. The results are summarized in Table \ref{tab:avg_time_per_band}.

The average cost per band decreases rapidly as more bands are processed, confirming that the one-time preprocessing cost can be effectively amortized through PSWF reuse. This supports the feasibility of the proposed implementation in repeated multiple-band estimation.
\begin{table}[H]
    \centering
    \caption{Average MATLAB execution time per band in the multiple-band reuse experiment. The processed bands are grouped in blocks of 21, each corresponding to one identical $(\Omega,T)$ pair.}
    \label{tab:avg_time_per_band}
    \begin{tabular}{cc}
    \hline
    Number of processed bands & Average elapsed time per band (ms) \\
    \hline
    21  & 53.9 \\
    42  & 13.9 \\
    63  & 6.2 \\
    84  & 3.5 \\
    105 & 2.3 \\
    126 & 1.6 \\
    147 & 1.2 \\
    168 & 0.891 \\
    189 & 0.706 \\
    \hline
    \end{tabular}
\end{table}
}
\subsection{Effect of the Observation Length $T$}
{This experiment examines how the band $L^2$-norm estimator depends on the observation length $T$. Since the estimator is derived from samples on the interval $\mathcal I_T$, accurate approximation requires that most of the signal energy be concentrated in $\mathcal I_T$, that is, that $\epsilon_T$ in Assumption \ref{assumption:TL} be sufficiently small. The purpose of this experiment is to illustrate the degradation that occurs when $T$ is too short, and the improvement obtained as $T$ increases. We took $x^\star(t)$ to be Gaussian mixtures with parameters $f_1\in\mathscr R([100,120])$, $f_2\in \mathscr R([500,650])$, $f_3\in\mathscr R([1250,1300])$, $\{\sigma_i\in \mathscr R([0.1,0.4]\}_{i=1}^3$, and $\{A_i\in\mathscr R([5,20])\}_{i=1}^3$. We constructed five observation vectors $y=y_i$, $i\in[5]$, using the half sampling frequency $W_{\mathrm{hs}}=2\pi\cdot 1500$ and observation lengths $T_i=i-0.5$. For each $y_i$, we applied the band $L^2$-norm estimator to 21 bands. The half-bandwidths were drawn from $\Omega/(2\pi)\in\mathscr R([1,3])$ Hz and the center frequencies were given by $\{f_1+(i-11)\}_{i=1}^{21}$ Hz.
\begin{figure}[H]
    \centering
    \begin{subfigure}{0.49\textwidth}
        \centering
        \includegraphics[width=\linewidth]{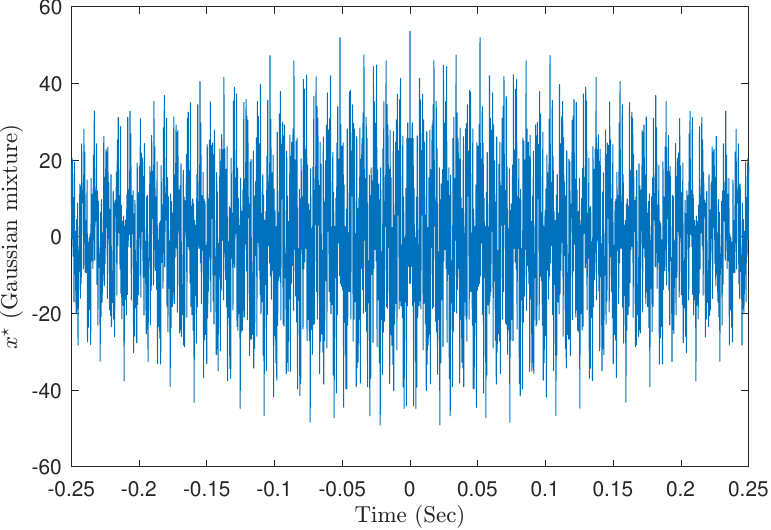}
        \caption{}
        \label{fig:T_1_1}
    \end{subfigure}
    \begin{subfigure}{0.49\textwidth}
        \centering
        \includegraphics[width=\linewidth]{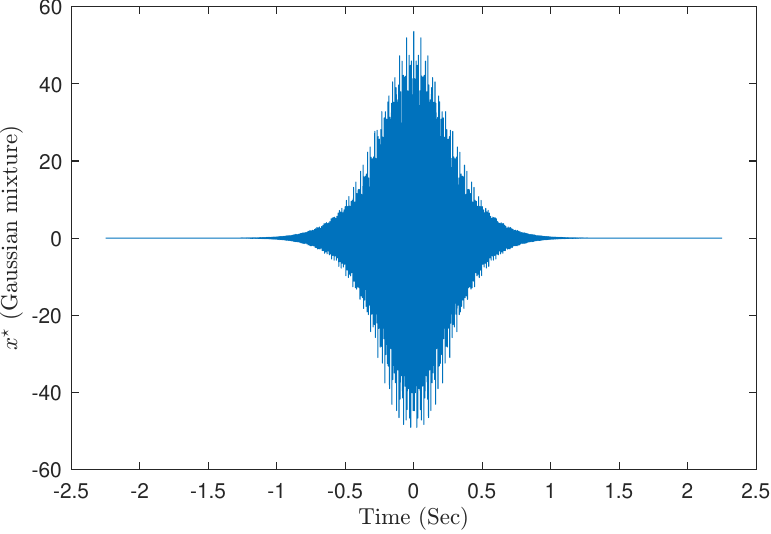}
        \caption{}
        \label{fig:T_1_2}
    \end{subfigure}
    \begin{subfigure}{0.49\textwidth}
        \centering
        \includegraphics[width=\linewidth]{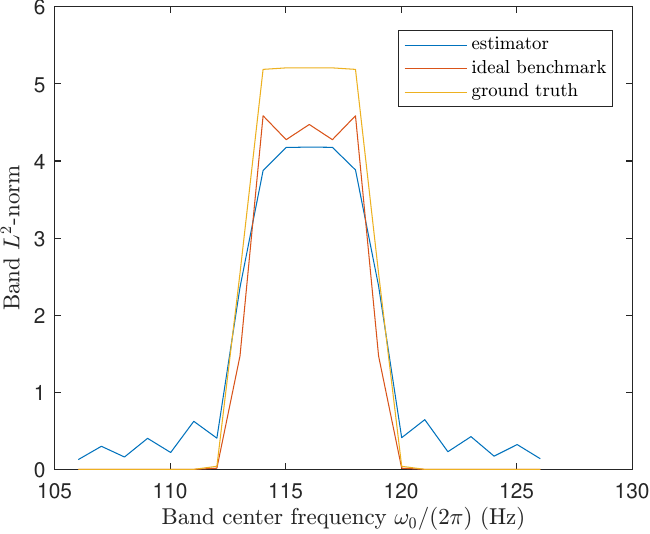}
        \caption{}
        \label{fig:T_1_3}
    \end{subfigure}
    \begin{subfigure}{0.49\textwidth}
        \centering
        \includegraphics[width=\linewidth]{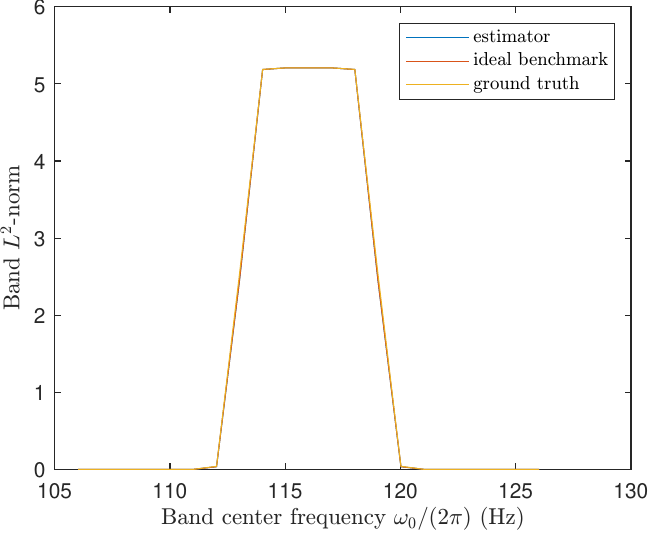}
        \caption{}
        \label{fig:T_1_4}
    \end{subfigure}
    \caption{Panels (a) and (b) display $y_1$ and $y_5$ for $T_1=0.5$ and $T_5=4.5$, respectively. Panels (c) and (d) show the corresponding band $L^2$-norms across the 21 bands computed by the estimator $A^{\mathrm{est}}_{\mathcal B(\omega_0,\Omega)}(y)$, the ideal benchmark $A^{\mathrm{ideal}}_{\mathcal B(\omega_0,\Omega)}(x^\star)$, and the ground truth $A_{\mathcal B(\omega_0,\Omega)}(x^\star)$.}
\end{figure}
Figure \ref{fig:T_1_1} illustrates the case $T_1=0.5$, where the observation interval is too short and the time-localization error is significant. In this regime, both the estimator $A^{\mathrm{est}}_{\mathcal B(\omega_0,\Omega)}(y)$ and the ideal benchmark $A^{\mathrm{ideal}}_{\mathcal B(\omega_0,\Omega)}(x^\star)$ deviate substantially from the true band $L^2$-norm, as shown in Figure \ref{fig:T_1_3}. By contrast, for $T_5=4.5$, most of the signal energy is captured within $\mathcal I_{T_5}$; see Figure \ref{fig:T_1_2}. In this case, the estimator and the ideal benchmark are both nearly indistinguishable from the ground truth across all bands; see Figure \ref{fig:T_1_4}.
\begin{figure}[H]
    \centering
    \begin{subfigure}{0.49\textwidth}
        \centering
        \includegraphics[width=\linewidth]{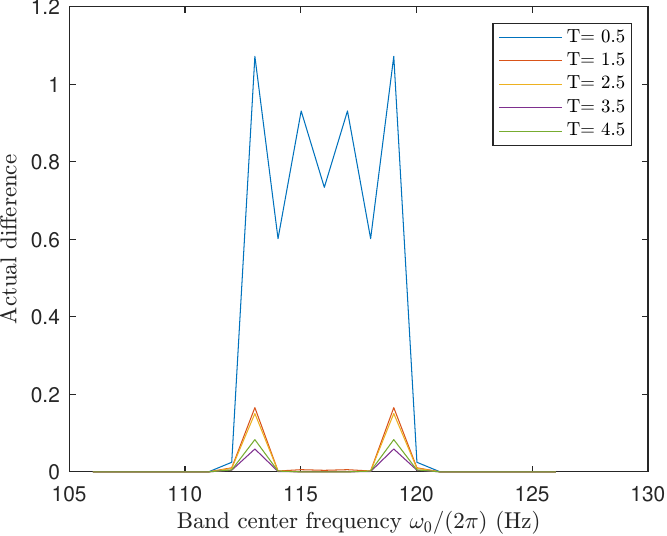}
        \caption{}
        \label{fig:T_2_1}
    \end{subfigure}
    \begin{subfigure}{0.49\textwidth}
        \centering
        \includegraphics[width=\linewidth]{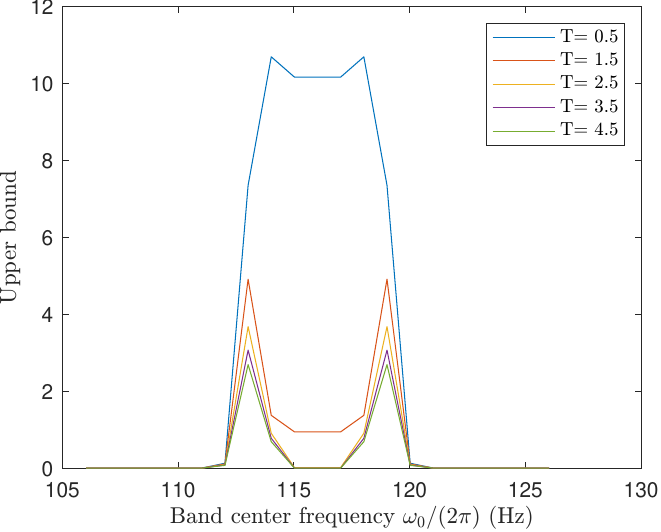}
        \caption{}
        \label{fig:T_2_2}
    \end{subfigure}
    \begin{subfigure}{0.49\textwidth}
        \centering
        \includegraphics[width=\linewidth]{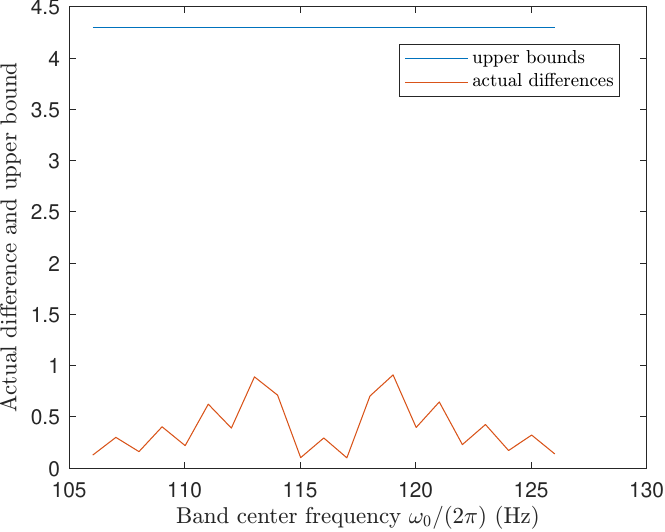}
        \caption{}
        \label{fig:T_2_3}
    \end{subfigure}
    \begin{subfigure}{0.49\textwidth}
        \centering
        \includegraphics[width=\linewidth]{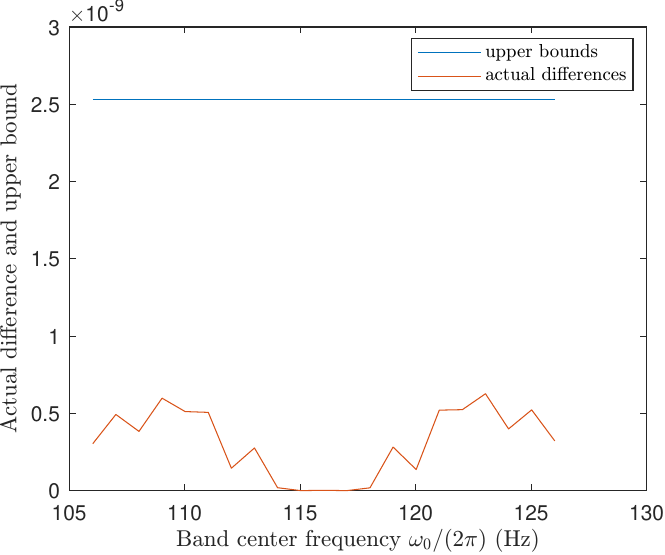}
        \caption{}
        \label{fig:T_2_4}
    \end{subfigure}
    \caption{Panel (a) shows the actual difference $\big|A^{\mathrm{ideal}}_{\mathcal B(\omega_0,\Omega)}(x^\star)-A_{\mathcal B(\omega_0,\Omega)}(x^\star)\big|$, while panel (b) shows the corresponding upper bound determined by the time-limiting, sampling, and band-limiting errors. Panels (c) and (d) compare the actual difference $\big|A^{\mathrm{est}}_{\mathcal B(\omega_0,\Omega)}(y)-A^{\mathrm{ideal}}_{\mathcal B(\omega_0,\Omega)}(x^\star)\big|$ with its upper bound for $T_1=0.5$ and $T_5=4.5$, respectively.}
\end{figure}
Figures \ref{fig:T_2_1}, \ref{fig:T_2_3}, and \ref{fig:T_2_4} show that both $\big|A^{\mathrm{ideal}}_{\mathcal B(\omega_0,\Omega)}(x^\star)-A_{\mathcal B(\omega_0,\Omega)}(x^\star)\big|$ and $\big|A^{\mathrm{est}}_{\mathcal B(\omega_0,\Omega)}(y)-A^{\mathrm{ideal}}_{\mathcal B(\omega_0,\Omega)}(x^\star)\big|$ decrease rapidly as $T$ increases. The theoretical upper bounds consistently dominate the observed errors and capture the same qualitative trend.
\begin{remark}
The infinite-sample benchmark $A^{\infty}_{\mathcal B(\omega_0,\Omega)}(x^\star)$ cannot be computed explicitly, even when both $x^\star$ and $\hat x^\star$ are known. For this reason, we merge the terms $\big|A^{\mathrm{est}}_{\mathcal B(\omega_0,\Omega)}(y)-A^{\infty}_{\mathcal B(\omega_0,\Omega)}(x^\star)\big|$ and $\big|A^{\infty}_{\mathcal B(\omega_0,\Omega)}(x^\star)-A^{\mathrm{ideal}}_{\mathcal B(\omega_0,\Omega)}(x^\star)\big|$
into the single observable quantity $\big|A^{\mathrm{est}}_{\mathcal B(\omega_0,\Omega)}(y)-A^{\mathrm{ideal}}_{\mathcal B(\omega_0,\Omega)}(x^\star)\big|$. 
\end{remark}

The upper bound for $\big|A^{\mathrm{ideal}}_{\mathcal B(\omega_0,\Omega)}(x^\star)-A_{\mathcal B(\omega_0,\Omega)}(x^\star)\big|$ is conservative. This is because our analysis is based on the classical Landau--Pollak approximation result in Lemma \ref{lemma:function_approximation_PSWFs}, which expresses the finite-term PSWF truncation error in terms of the time-tail energy of the band-limited signal. Combined with the frequency-domain Sobolev regularity imposed on $\hat x^\star$ over $\mathcal I_{W_{\mathrm{cut}}}$ and with the general smoothstep functions, this yields an explicit bound on the tail energy and hence on the truncation error. The resulting estimate is sufficiently explicit for analysis and interpretation, even though sharper PSWF approximation results are available in other settings \cite{PSWF8,PSWF4}. Those results are formulated for different regularity assumptions, typically involving time-domain Sobolev spaces on compact intervals, and do not directly address the joint structure used here: frequency-domain Sobolev regularity on $\mathcal I_{W_{\mathrm{cut}}}$ together with time-domain regularity on $\mathbb R\setminus\mathcal I_T$.
}
\subsection{Effect of the Half-Bandwidth $\Omega$}
{This experiment examines how the proposed band $L^2$-norm estimator behaves when the half-bandwidth $\Omega$ of the target band $\mathcal B(\omega_0,\Omega)$ varies. We used hyperbolic secant mixtures as the input signal $x^\star(t)$, with parameters $f_1 \in \mathscr R([120,140])$, $f_2 \in \mathscr R([310,480])$, $f_3\in\mathscr R([850,950])$, $\{\tau_i\in\mathscr R([0.055,0.065])\}_{i=1}^3$, and $\{A_i\in\mathscr R([10,20])\}_{i=1}^3$. The observation vector $y$ was constructed with $W_{\mathrm{hs}}=2\pi\cdot 1100$ and $T = 0.8$. We applied the band $L^2$-norm estimator to estimate the band $L^2$-norm over 21 bands whose center frequencies, in Hz, were $\{f_3+(i-11)\}_{i=1}^{21}$. Five half bandwidths were considered, with $\Omega_i/(2\pi)\in\mathscr R([2i, 2i+1]))$, $i\in[5]$.

Figures \ref{fig:B_3} and \ref{fig:B_4} show that the empirical errors broadly follow the shapes of the corresponding upper bounds. Although Theorem \ref{theorem:main_theorem} indicates that the time-limiting, sampling, and band-limiting terms increase with $\Omega$, the total approximation error does not exhibit a simple monotone dependence on $\Omega$. This indicates that, in the present example, the finite-term truncation term is dominant.

Corollary \ref{corollary:bounded_tail_energy} explains this behavior through the quantities $\gamma^+_{\hat x^\star,s,\omega_0,\Omega}$ in \eqref{eq:gamma_ft} and the side regions $\mathcal I^+_{\omega_0,\Omega,\kappa}$ in \eqref{eq:I_ft}. As seen in Figure \ref{fig:B_2}, the derivatives of $\hat x^\star(\omega)$ are close to zero over the relevant frequency range, so that $\gamma^+_{\hat x^\star,s,\omega_0,\Omega}$ is essentially determined by the magnitude of $\hat x^\star$ on the side regions. Consequently, the truncation error is governed less by the width of the target band itself than by the spectral values near the corresponding side intervals. This mechanism is visible in Figures \ref{fig:B_3} and \ref{fig:B_4}. For example, when the band center is near $f_3$, the side regions associated with smaller and larger half-bandwidths probe different parts of the spectrum, leading to markedly different approximation errors and upper bounds. The same phenomenon appears when the band center is shifted away from $f_3$. Overall, the experiment confirms that the dependence on $\Omega$ is mediated primarily through the spectral behavior on the side regions entering Corollary \ref{corollary:bounded_tail_energy}, rather than through a simple monotone scaling in $\Omega$.
}

\begin{figure}[H]
    \centering
    \begin{subfigure}{0.49\textwidth}
        \centering
        \includegraphics[width=\linewidth]{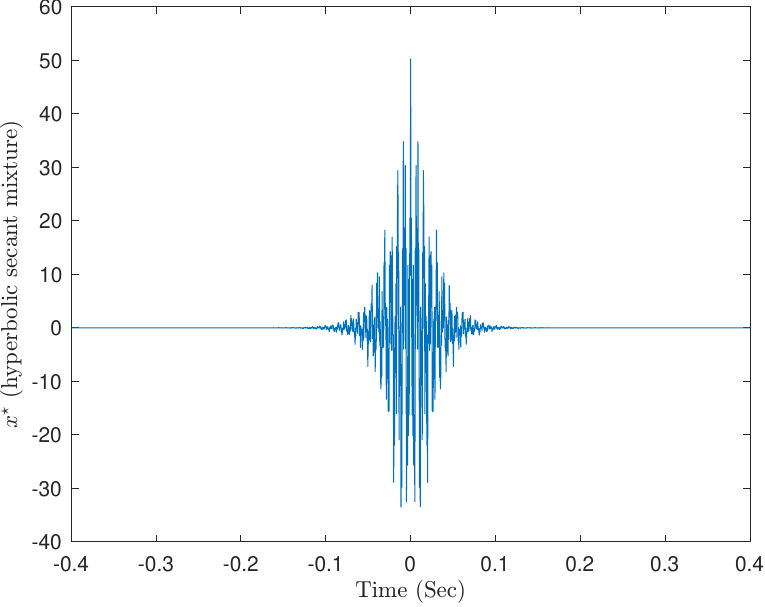}
        \caption{}
        \label{fig:B_1}
    \end{subfigure}
    \begin{subfigure}{0.49\textwidth}
        \centering
        \includegraphics[width=\linewidth]{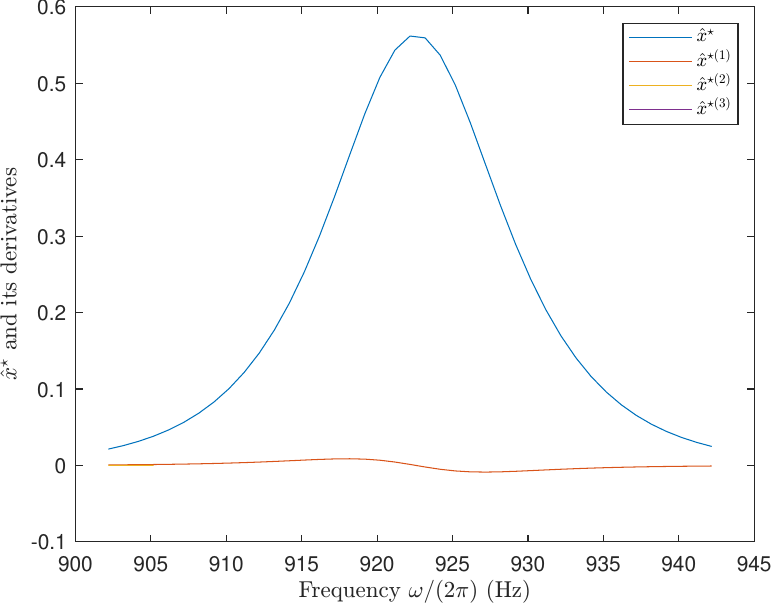}
        \caption{}
        \label{fig:B_2}
    \end{subfigure}
    \begin{subfigure}{0.49\textwidth}
        \centering
        \includegraphics[width=\linewidth, height=6.1cm]{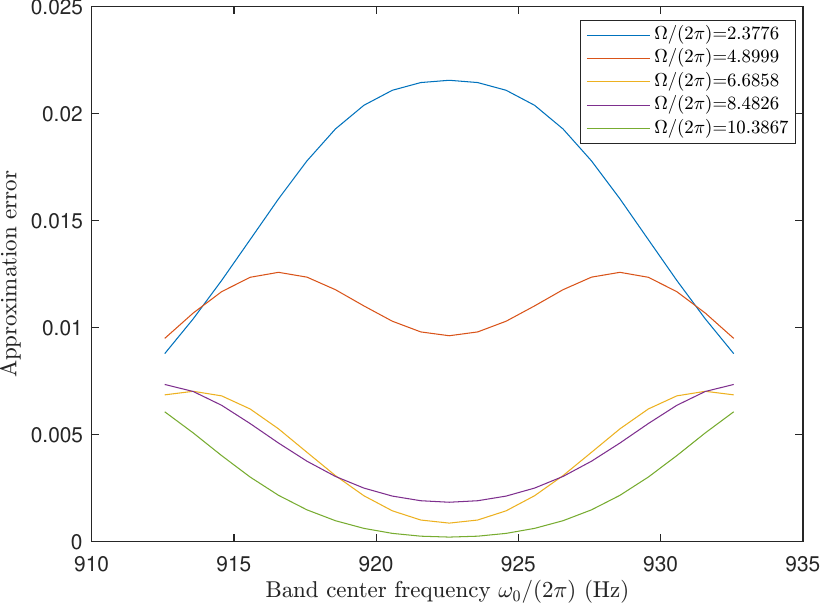}
        \caption{}
        \label{fig:B_3}
    \end{subfigure}
    \begin{subfigure}{0.49\textwidth}
        \centering
        \includegraphics[width=\linewidth, height=6.1cm]{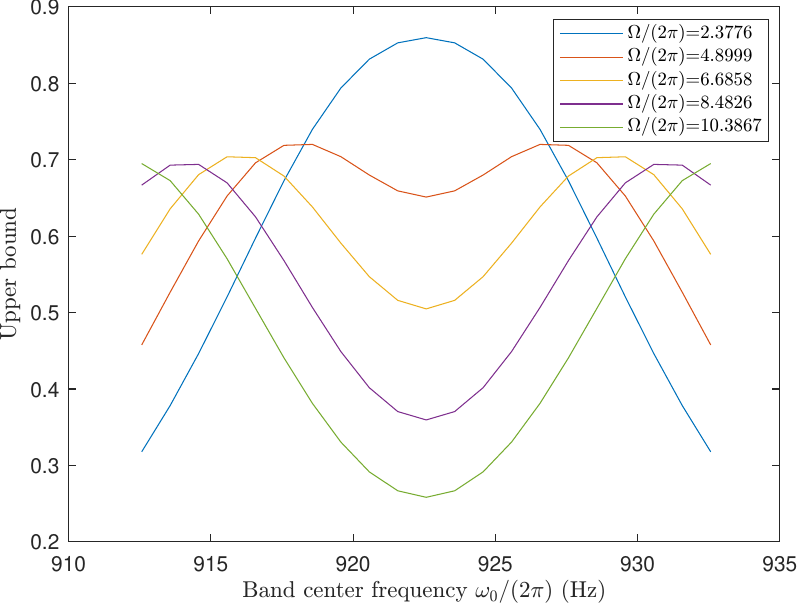}
        \caption{}
        \label{fig:B_4}
    \end{subfigure}
    \caption{Panel (a) shows the waveform $y$. Panel (b) displays $\hat x^\star(\omega)$ and its derivatives over the 21 bands. Panels (c) and (d) show the approximation error $\big|A_{\mathcal B(\omega_0,\Omega)}(x^\star)-A^{\mathrm{est}}_{\mathcal B(\omega_0,\Omega)}(y)\big|$ and its upper bound, respectively, for five half-bandwidths. The comparison illustrates the dependence of the error on the spectral behavior over the enlarged band $\mathcal B(\omega_0,\Omega+\kappa)$ and the side regions $\mathcal I^+_{\omega_0,\Omega,\kappa}$ in Corollary \ref{corollary:bounded_tail_energy}.}
\end{figure}

\subsection{Effect of Additive Noise}\label{sec:noise_experiment}
{This experiment investigates how additive noise affects the band $L^2$-norm estimator. We considered all three signal families as the underlying signals $x^\star(t)$. The Gaussian mixtures were parameterized by $\{f_{1j}\in\mathscr R([50,460])\}_{j=1}^3$, $\{\sigma_{1j}\in\mathscr R([0.08,0.21])\}_{j=1}^3$, and $\{A_{1j}\in\mathscr R([4,25])\}_{j=1}^3$; the hyperbolic secant mixtures by $\{f_{2j}\in\mathscr R([100,580])\}_{j=1}^3$, $\{\tau_{2j}\in\mathscr R([0.009,0.03])\}_{j=1}^3$, and $\{A_{2j}\in\mathscr R([3,20])\}_{j=1}^3$; and the B-spline mixtures by $\{f_{3j}\in\mathscr R([110,690])\}_{j=1}^3$, $\{\tau_{3j}\in\mathscr R([0.4,0.65])\}_{j=1}^3$, $\{r_{3j}\in\mathscr R(\{3,4,5\})\}_{j=1}^{3}$, and $\{A_{3j}\in\mathscr R([7,14])\}_{j=1}^3$. Three clean observation vectors $y_i$, $i\in[3]$, were generated by sampling the Gaussian, hyperbolic secant, and B-spline mixtures with $(W_{\mathrm{hs},1},T_1)=(2\pi\cdot600,2.5)$, $(W_{\mathrm{hs},2},T_2)=(2\pi\cdot700,3)$, and $(W_{\mathrm{hs},3},T_3)=(2\pi\cdot800,4.5)$, respectively. Each clean vector was then contaminated by additive white Gaussian noise with $\mathrm{SNR}\in\{-10,-5,0,5,10\}$ dB. For each simulation, we estimated the band $L^2$-norm over 21 bands. The half-bandwidths were set to 4, 3, and 2 Hz for the Gaussian, hyperbolic secant, and B-spline mixtures, respectively, and the corresponding band-center frequencies, in Hz, were $\{f_{11}+(i-11)\}_{i=1}^{21}$, $\{f_{23}+(i-11)\}_{i=1}^{21}$, and $\{f_{32}+(i-11)\}_{i=1}^{21}$.

In \ref{proof:corollary_noise_perturbation_analysis}, we derive 
\begin{equation*}
    \big|A^{\mathrm{est}}_{\mathcal B(\omega_0,\Omega)}(y_{\eta})-A_{\mathcal B(\omega_0,\Omega)}(x^\star)\big|\leq\big|A^{\mathrm{est}}_{\mathcal B(\omega_0,\Omega)}(y)-A_{\mathcal B(\omega_0,\Omega)}(x^\star)\big|+\Delta\sqrt{\sum\limits^{\lfloor\frac{2c}{\pi}\rfloor}_{n=0}\big|C^{(m)}_n(\eta,\omega_0,\Omega)\big|^2}    
\end{equation*}
and hence
\begin{equation*}
    \big|A^{\mathrm{est}}_{\mathcal B(\omega_0,\Omega)}(y_{\eta})-A_{\mathcal B(\omega_0,\Omega)}(x^\star)\big|\leq\big|A^{\mathrm{est}}_{\mathcal B(\omega_0,\Omega)}(y)-A_{\mathcal B(\omega_0,\Omega)}(x^\star)\big|+\varepsilon\sqrt{\Delta\sum\limits^{\lfloor\frac{2c}{\pi}\rfloor}_{n=0}\Big(\frac{\Omega^2\Delta^2}{2}+\min\{1,\lambda_n(c)+\frac{\Omega\Delta}{\pi}\}\Big)}.
\end{equation*}
To isolate the effect of noise, we retain the actual noiseless approximation error $\big|A^{\mathrm{est}}_{\mathcal B(\omega_0,\Omega)}(y)-A_{\mathcal B(\omega_0,\Omega)}(x^\star)\big|$
instead of replacing it by the upper bound from Theorem \ref{theorem:main_theorem}. The first inequality therefore yields a tight bound, while the second gives a more loose worst-case bound.
\begin{figure}[H]
    \centering
    \begin{subfigure}{0.32\textwidth}
        \centering
        \includegraphics[width=\linewidth]{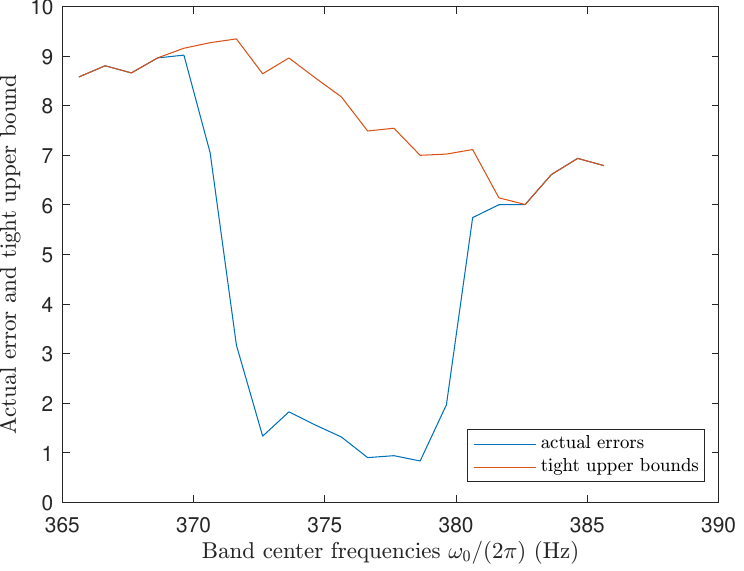}
        \caption{}
        \label{fig:noise_Gaussian_1}
    \end{subfigure}
    \begin{subfigure}{0.32\textwidth}
        \centering
        \includegraphics[width=\linewidth]{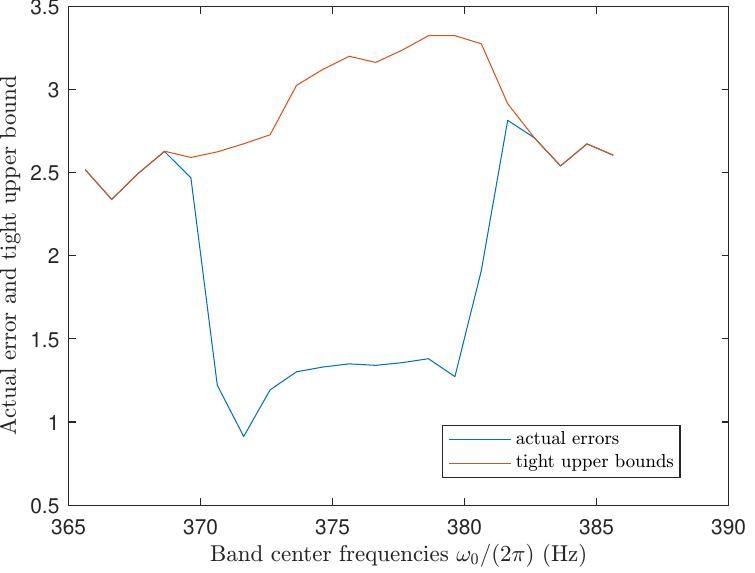}
        \caption{}
        \label{fig:noise_Gaussian_2}
    \end{subfigure}
    \begin{subfigure}{0.32\textwidth}
        \centering
        \includegraphics[width=\linewidth]{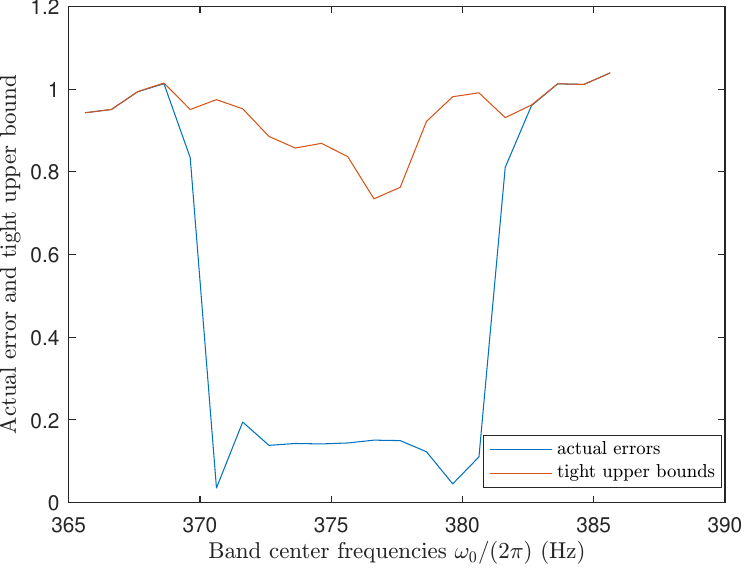}
        \caption{}
        \label{fig:noise_Gaussian_3}
    \end{subfigure}
    \begin{subfigure}{0.32\textwidth}
        \centering
        \includegraphics[width=\linewidth, height=3.8cm]{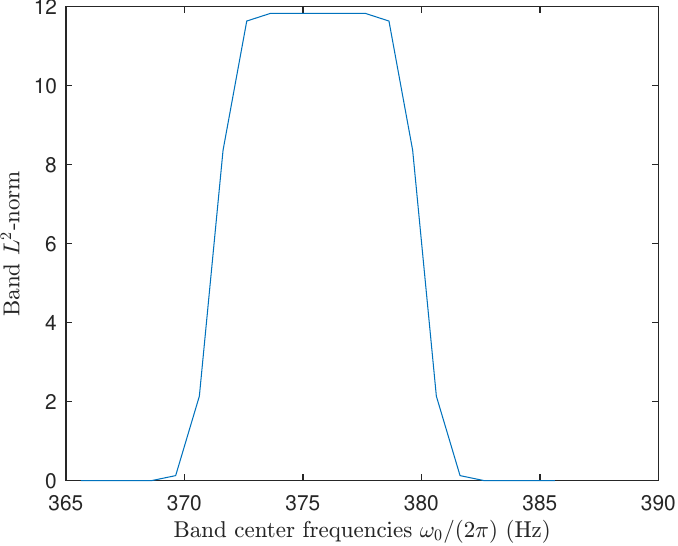}
        \caption{}
        \label{fig:noise_Gaussian_4}
    \end{subfigure}
    \begin{subfigure}{0.32\textwidth}
        \centering
        \includegraphics[width=\linewidth, height=3.8cm]{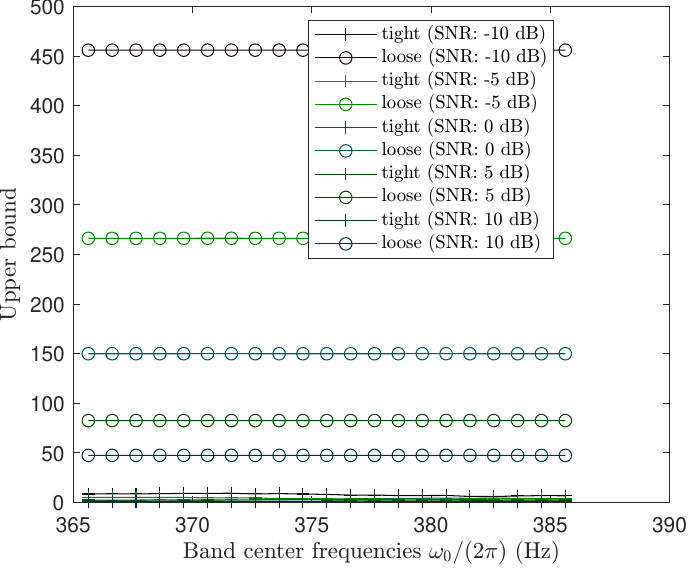}
        \caption{}
        \label{fig:noise_Gaussian_5}
    \end{subfigure}
    \begin{subfigure}{0.32\textwidth}
        \centering
        \includegraphics[width=\linewidth, height=3.8cm]{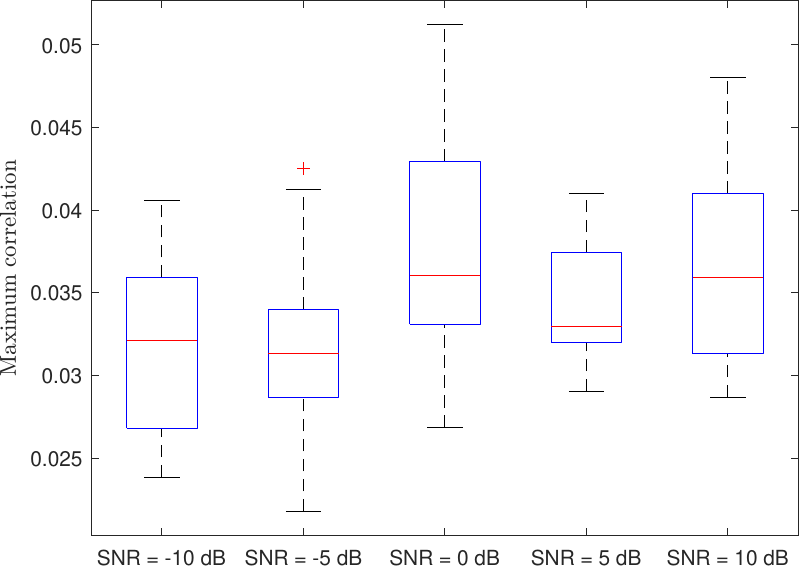}
        \caption{}
        \label{fig:noise_Gaussian_6}
    \end{subfigure}
    \caption{Gaussian-mixture example. Panels (a)–(c) show the actual approximation errors of the band $L^2$-norm estimator together with their tight upper bounds for $\mathrm{SNR}=-10,0$, and $10$ dB. Panel (d) shows the true band $L^2$-norm. Panel (e) compares the tight and loose upper bounds across SNR levels. Panel (f) reports the maximum correlations between the noise and sampled modulated PSWFs. The approximation error exhibits a pronounced dip in bands where the true band $L^2$-norm is large. Moreover, the noise correlates weakly with all bands, leading to a visible gap between the tight and loose upper bounds.}
    \label{fig:noise_Gaussian}
\end{figure} 
\begin{figure}[H]
    \centering
    \begin{subfigure}{0.32\textwidth}
        \centering
        \includegraphics[width=\linewidth]{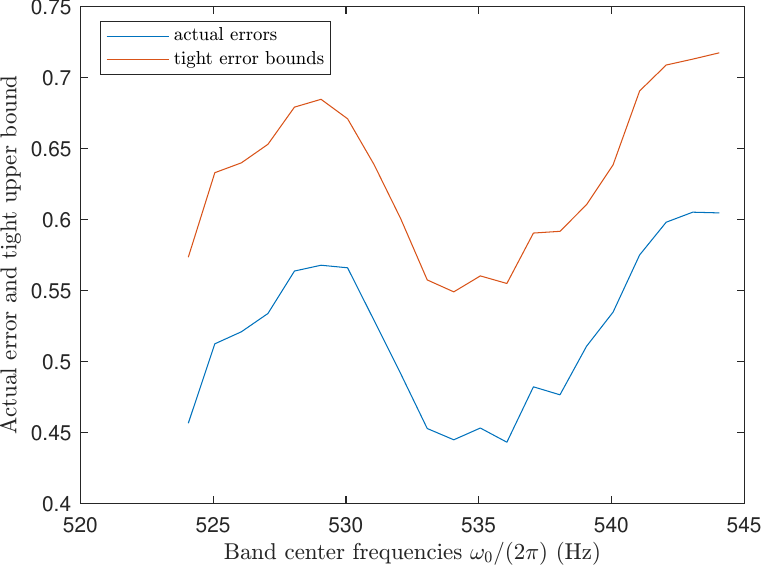}
        \caption{}
        \label{fig:noise_hyperbolic_secant_1}
    \end{subfigure}
    \begin{subfigure}{0.32\textwidth}
        \centering
        \includegraphics[width=\linewidth]{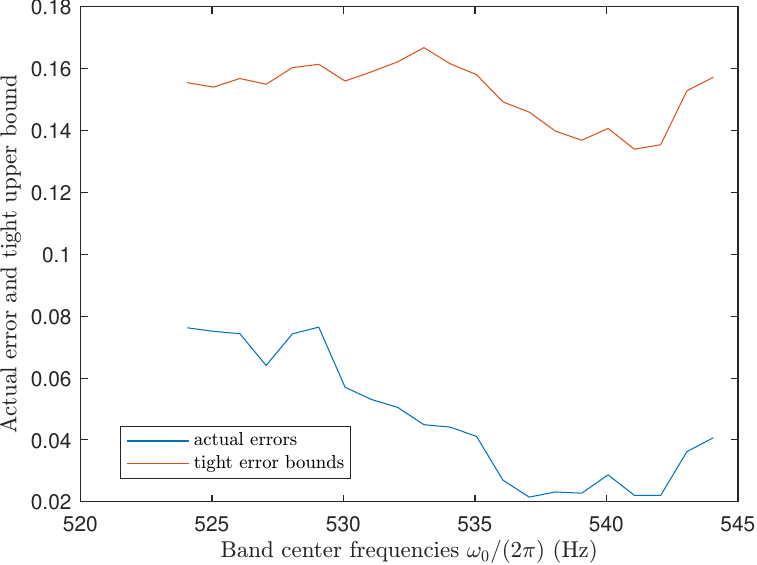}
        \caption{}
        \label{fig:noise_hyperbolic_secant_2}
    \end{subfigure}
    \begin{subfigure}{0.32\textwidth}
        \centering
        \includegraphics[width=\linewidth]{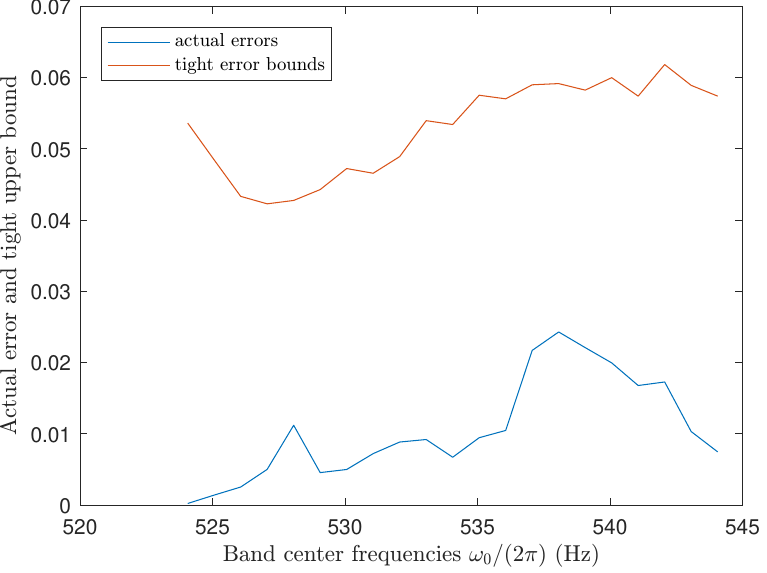}
        \caption{}
        \label{fig:noise_hyperbolic_secant_3}
    \end{subfigure}
    \begin{subfigure}{0.32\textwidth}
        \centering
        \includegraphics[width=\linewidth, height=3.8cm]{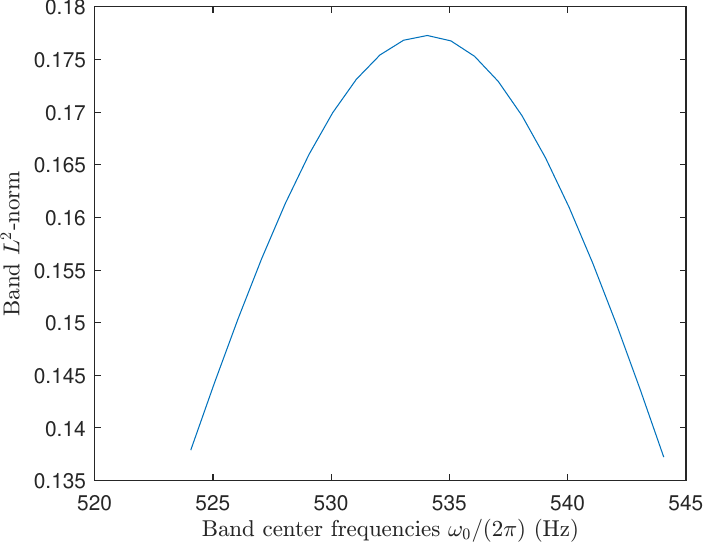}
        \caption{}
        \label{fig:noise_hyperbolic_secant_4}
    \end{subfigure}
    \begin{subfigure}{0.32\textwidth}
        \centering
        \includegraphics[width=\linewidth, height=3.8cm]{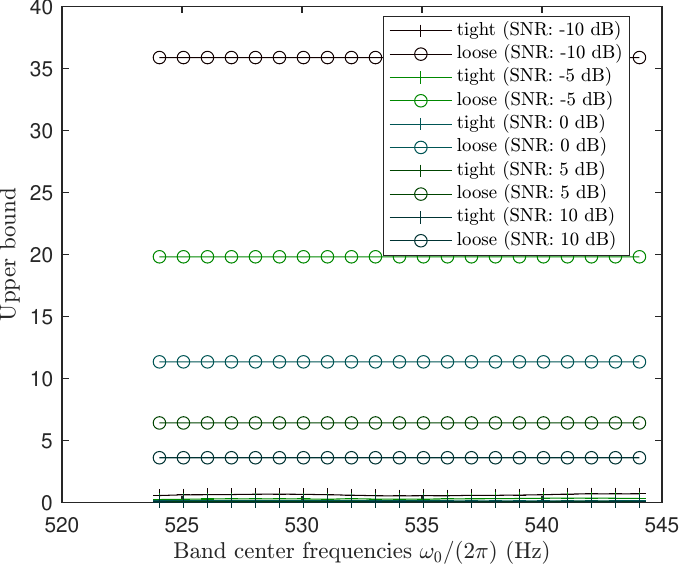}
        \caption{}
        \label{fig:noise_hyperbolic_secant_5}
    \end{subfigure}
    \begin{subfigure}{0.32\textwidth}
        \centering
        \includegraphics[width=\linewidth, height=3.8cm]{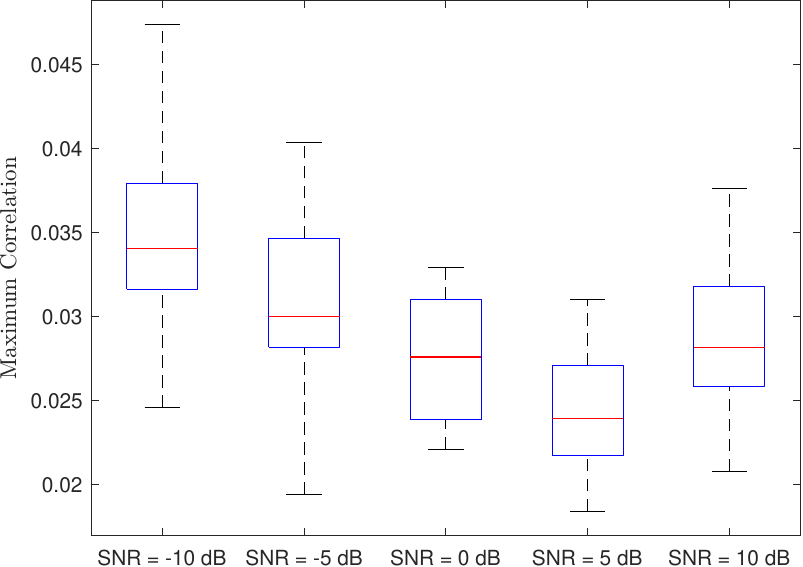}
        \caption{}
        \label{fig:noise_hyperbolic_secant_6}
    \end{subfigure}
    \caption{Hyperbolic secant mixtures. The six panels have the same interpretation as in the Gaussian-mixture case. In contrast to the Gaussian-mixture case, the approximation error follows the tight upper bound more closely because the true band $L^2$-norm is comparatively flat across bands.}
    \label{fig:noise_hyperbolic_secant}
\end{figure}

Figures \ref{fig:noise_Gaussian_1}--\ref{fig:noise_Gaussian_3} reveal an interesting phenomenon: the approximation error exhibits a pronounced dip in those bands where the true band $L^2$-norm is large. To explain this, write \begin{align*}
&\big|A^{\mathrm{est}}_{\mathcal B(\omega_0,\Omega)}(y_{\eta})-A_{\mathcal B(\omega_0,\Omega)}(x^\star)\big|\\
=&\,\Bigg|\Delta\sqrt{\sum_{n=0}^{\lfloor\frac{2c}{\pi}\rfloor}\big|C^{(m)}_n(y,\omega_0,\Omega)\big|^2+2\sum_{n=0}^{\lfloor\frac{2c}{\pi}\rfloor}\Re\langle C^{(m)}_n(y,\omega_0,\Omega),C^{(m)}_n(\eta,\omega_0,\Omega)\rangle+\sum_{n=0}^{\lfloor\frac{2c}{\pi}\rfloor}\big|C^{(m)}_n(\eta,\omega_0,\Omega)\big|^2
}-A_{\mathcal B(\omega_0,\Omega)}(x^\star)\Bigg|\\
=&\,\Bigg|\sqrt{E^{\mathrm{est}}_{\mathcal B(\omega_0,\Omega)}(y) +\delta^{(m)}_{\mathcal B(\omega_0,\Omega)}(y,\eta)  +E^{\mathrm{est}}_{\mathcal B(\omega_0,\Omega)}(\eta)}
-A_{\mathcal B(\omega_0,\Omega)}(x^\star)\Bigg|,
\end{align*}
where $\delta^{(m)}_{\mathcal B(\omega_0,\Omega)}(y,\eta):=2\Delta^2\sum\limits^{\lfloor\frac{2c}{\pi}\rfloor}_{n=0}\Re\langle C^{(m)}_n(y,\omega_0,\Omega),C^{(m)}_n(\eta,\omega_0,\Omega)\rangle$ is the cross term. For white Gaussian noise, the energy of $\eta$ is spread nearly uniformly across frequency, so $E^{\mathrm{est}}_{\mathcal B(\omega_0,\Omega)}(\eta)$ is approximately flat across bands. The cross term also fluctuates only mildly, since the vector $\big(C^{(m)}_n(\eta,\omega_0,\Omega)\big)_{n=0,\ldots,\lfloor 2c/\pi\rfloor}$ is approximately isotropic and therefore has only weak correlation with $\big(C^{(m)}_n(y,\omega_0,\Omega)\big)_{n=0,\ldots,\lfloor 2c/\pi\rfloor}$ when $\lfloor 2c/\pi\rfloor$ is sufficiently large. As a result, the error behaves approximately like an expression of the form $|\sqrt{Q^2+V}-Q|$, where $Q$ represents the true band $L^2$-norm and $V$ varies comparatively slowly across bands. When $Q$ varies much more strongly than $V$, the error necessarily dips in regions where $Q$ is large. The same behavior is observed for the B-spline mixtures; see Figures \ref{fig:noise_B_splines_1}--\ref{fig:noise_B_splines_4}. By contrast, for the hyperbolic secant mixtures, the true band $L^2$-norm in Figure \ref{fig:noise_hyperbolic_secant_4} is comparatively flat, so the influence of $V$ is more visible. In that case, the approximation error follows the trend of the tight bound much more closely; see Figures \ref{fig:noise_hyperbolic_secant_1}--\ref{fig:noise_hyperbolic_secant_3}.

A second discrepancy between the worst-case theory and the empirical results is seen in Figures \ref{fig:noise_Gaussian_5}, \ref{fig:noise_hyperbolic_secant_5}, and \ref{fig:noise_B_splines_5}. To bound $\Delta\sqrt{\sum\limits^{\lfloor\frac{2c}{\pi}\rfloor}_{n=0}\big|C^{(m)}_n(\eta,\omega_0,\Omega)\big|^2}$ we apply the Cauchy--Schwarz inequality to the inner products between the noise vector $\eta$ and the sampled modulated PSWFs $\big(\psi_n(c,t_j)e^{-i\omega_0 t_j}\big)_{j\in[m]}$, $n=0,\ldots,\lfloor 2c/\pi \rfloor$. Figures \ref{fig:noise_Gaussian_6}, \ref{fig:noise_hyperbolic_secant_6}, and \ref{fig:noise_B_splines_6} show that the maximal correlations are small: approximately $0.032$--$0.036$ for the Gaussian mixtures, $0.025$--$0.034$ for the hyperbolic secant mixtures, and $0.023$--$0.025$ for the B-spline mixtures. Consequently, the loose bounds are substantially larger than the tight bounds, but their gap is still explained by these correlation factors.

Overall, the experiment confirms the deterministic stability result, while also showing that in the white-noise setting the worst-case bounds are conservative. This suggests that sharper noise-stability estimates may be possible under additional structural assumptions on the perturbation.
}


\section{Conclusion}
In this work, we formulated direct band-energy estimation from finite sampled data as a finite-sample approximation problem for a spectral functional and proposed a multitaper-type estimator based on sampled continuous PSWFs. The estimator is constructed by first estimating the band $L^2$-norm and then squaring it to obtain the band energy. Under explicit time- and frequency-domain regularity and energy-concentration assumptions, we established deterministic and non-asymptotic error bounds with a four-term decomposition that separates the effects of time limitation, sampling, band limitation, and finite-term truncation. We further derived a deterministic stability result under additive noise. Moreover, for efficient computation, we developed a PSWF-reuse strategy that makes the proposed procedure effective in repeated multiple-band estimation. Numerical experiments confirmed the favorable performance of the proposed method relative to existing integrated periodicogram-type methods, illustrated the computational advantages of the proposed strategy, and corroborated the validity of theoretical predictions with respect to observation length, band selection, and additive noise. Future work includes developing probabilistic error guarantees under stochastic noise models on the basis of the deterministic framework established here.


\appendix
\section{General Smoothstep Functions}\label{sec:general_smoothstep}
{The third-order smoothstep function was introduced in \cite{Smoothstep1}, and the general $n$th order smoothstep function in \cite{Smoothstep2}. We adopt the scaled version of \cite{Smoothstep2}, given by
\begin{equation}
    \label{smoothstep}
    S_n(x):=\begin{cases}
        0&\text{if}\,\,x\leq0\\
        \sum\limits^n_{k=0}\binom{n+k}{k}\binom{2n+1}{n-k}(-1)^kx^{k+n+1}&\text{if}\,\,0\leq x\leq1\\1&\text{if}\,\,x\geq1
    \end{cases},\quad n\in\mathbb N.
\end{equation}
Thus, $S_n(x)$ provides a smooth transition between 0 and 1. To the best of our knowledge, explicit closed-form expressions for the derivatives of $S_n$ and the $L^1$- and $L^{\infty}$-norms of these derivatives did not appear in the literature. We compile and derive them as follows.
\begin{enumerate}
\item[\textbf{1.}]\textbf{Smoothness.}
The function $S_n$ is $n$-times differentiable. In particular,
\begin{align}
    \label{eq:smoothstep_first_derivative}
    &S_n'(x)=\frac{(2n+1)!}{(n!)^2}x^n(1-x)^n,\\
    \label{eq:smoothstep_m_derivatives}
    &S_n^{(m)}(x) = \frac{(2n+1)!}{(n!)^2}(m-1)!x^{n-m+1}(1-x)^{n-m+1}P^{(n-m+1,n-m+1)}_{m-1}(1-2x),
\end{align}
for $0\le x\le 1$, $2\le m\le n$, where $P^{(n-m+1,n-m+1)}_{m-1}(x)$ is a Jacobi polynomial \cite{Jacobi}.
Equation \eqref{eq:smoothstep_first_derivative} follows from the binomial theorem, while \eqref{eq:smoothstep_m_derivatives} follows from \eqref{eq:smoothstep_first_derivative} and the Rodrigues formula (Eq. (4.3.1) of \cite{Jacobi}) as follows:
\begin{equation*}
    P^{(\alpha,\beta)}_{r}(u)=\frac{(-1)^r}{2^rr!}(1-u)^{-\alpha}(1+u)^{-\beta}\frac{d^r}{du^r}[(1-u)^{\alpha+r}(1+u)^{\beta+r}],
\end{equation*}
with $u=1-2x$, $r=m-1$, $\alpha=n-m+1$, and $\beta=n-m+1$.
\item[\textbf{2.}] \textbf{$L^1$-integrability.}  
The functions $S_n$ and $S_n^{(m)}$ for $1\le m\le n$ all belong to $L^1(\mathbb R)$. In particular,
\begin{equation}
    \label{eq:smoothstep_derivatives_L_1}
    \begin{cases}
        \|S_n\|_{L^1([0,1])}=\frac{1}{2}\\\|S_n^{(m)}\|_{L^1(\mathbb R)}=\frac{(2n+1)!}{n!}\frac{(n-m+1)!}{(2n-2m+3)!},&1\leq m\leq n
    \end{cases}.
\end{equation}
If $2\le m\le n$, the following bound can be derived:
\begin{equation*}
    \begin{aligned}
        \|S_n^{(m)}\|_{L^1(\mathbb R)}&=\frac{(2n+1)!}{(n!)^2}(m-1)!\int^1_0|x^{n-m+1}(1-x)^{n-m+1}P^{(n-m+1,n-m+1)}_{m-1}(1-2x)|dx\\&\leq\frac{(2n+1)!}{(n!)^2}(m-1)!\sup\limits_{x\in[-1,1]}|P^{(n-m+1,n-m+1)}_{m-1}(x)|B(n-m+2,n-m+2),\\&\hspace{0.6cm}\text{where}\,\,B(p,q)\,\,\text{is the Beta function}\,\text{\cite{Jacobi}}.\\&=\frac{(2n+1)!}{(n!)^2}(m-1)!P^{(n-m+1,n-m+1)}_{m-1}(1)\frac{(n-m+1)!(n-m+1)!}{(2n-2m+3)!}\\&=\frac{(2n+1)!}{(n!)^2}(m-1)!\frac{n!}{(m-1)!(n-m+1)!}\frac{(n-m+1)!(n-m+1)!}{(2n-2m+3)!}\\&=\frac{(2n+1)!}{n!}\frac{(n-m+1)!}{(2n-2m+3)!}
    \end{aligned}
\end{equation*}  
The first inequality follows from H\"older’s inequality. The second equality uses the definition of the Beta function together with the fact that $P^{(n-m+1,n-m+1)}_{m-1}$ is maximal at $x=1$ (Sec. 7.32 of \cite{Jacobi}). The final equality uses Eq. (4.1.1) of \cite{Jacobi}.  
For $m=1$, $\|S_n'\|_{L^1(\mathbb R)}=\int^1_0|S_n'(x)|dx=S_n(1)-S_n(0)=1$, which is consistent with the above bound when $m=1$. For $m=0$, note that $S_n(x)+S_n(1-x)$ is constant because $S_n'(x)=S_n'(1-x)$. Evaluating at $x=0$ yields $S_n(x)+S_n(1-x)=S_n(0)+S_n(1)=1$. Thus, $\|S_n(x)\|_{L^1([0,1])}=\int^1_0S_n(x)dx=\int^1_0(1-S_n(1-x))dx=1-\int^1_0S_n(x)dx$, implying $\|S_n\|_{L^1([0,1])}=\frac{1}{2}$.
\item[\textbf{3.}] \textbf{$L^{\infty}$-integrability.}  
The functions $S_n$ and $S_n^{(m)}$ for $1\le m\le n$ all belong to $L^{\infty}(\mathbb R)$. Specifically,
\begin{equation}
    \label{eq:smoothstep_derivatives_L_infinity}
    \begin{cases}
        \|S_n\|_{L^{\infty}(\mathbb R)}=1\\
        \|S_n^{(m)}\|_{L^{\infty}(\mathbb R)}=\frac{(2n+1)!}{4^{n-m+1}n!(n-m+1)!},&1\leq m\leq n
    \end{cases}.
\end{equation}
For $2\le m\le n$, we have the bound:
\begin{equation*}
    \begin{aligned}
        \|S_n^{(m)}\|_{L^{\infty}}&=\frac{(2n+1)!}{(n!)^2}(m-1)!\sup\limits_{x\in[0,1]}|x(1-x)|^{n-m+1}|P^{(n-m+1,n-m+1)}_{m-1}(1-2x)|\\&\leq\frac{(2n+1)!}{(n!)^2}(m-1)!4^{-(n-m+1)}\frac{n!}{(m-1)!(n-m+1)!}\\&=\frac{(2n+1)!}{4^{n-m+1}n!(n-m+1)!}
    \end{aligned}
\end{equation*}
For $m=1$, $\|S_n'\|_{L^{\infty}(\mathbb R)}=\frac{(2n+1)!}{(n!)^2}\|x^n(1-x)^n\|_{L^\infty(\mathbb R)}=\frac{(2n+1)!}{(n!)^2}4^{-n}$, which is again consistent with the general bound. For $m=0$, $\|S_n\|_{L^{\infty}(\mathbb R)}=1$.
\end{enumerate}
Scaling the smoothstep functions as $S_n(x/\kappa)$ with $\kappa>0$ preserves their smoothness while enabling approximation of a unit step function. Because $\frac{d^m}{dx^m}S_n(\frac{x}{\kappa})=\kappa^{-m}S_n^{(m)}(\frac{x}{\kappa})$, the relation between the $L^1$-norms of the scaled and unscaled derivatives is
\begin{equation}
    \label{eq:scaled_smoothstep_derivatives_L_1}
    \|\frac{d^m}{dx^m}S_n(\frac{.}{\kappa})\|_{L^1(\mathbb R)}=\kappa^{-m}\|S_n^{(m)}(\frac{.}{\kappa})\|_{L^1(\mathbb R)}=\kappa^{1-m}\|S_n^{(m)}\|_{L^1(\mathbb R)}.
\end{equation}  
Similarly, the relation for the $L^{\infty}$-norm is
\begin{equation}
    \label{eq:scaled_smoothstep_derivatives_L_infinity}
    \|\frac{d^m}{dx^m}S_n(\frac{.}{\kappa})\|_{L^{\infty}(\mathbb R)}=\kappa^{-m}\|S_n^{(m)}(\frac{.}{\kappa})\|_{L^{\infty}(\mathbb R)}=\kappa^{-m}\|S_n^{(m)}\|_{L^{\infty}(\mathbb R)}.
\end{equation}
}
\section{Proofs of Auxiliary Results}
\subsection{Proof of Proposition $\ref{proposition:exact_representation}$}\label{proof:proposition_exact_representation}
{Consider the Fourier transform pair $\tilde{x}^\star_{\mathcal B(\omega_0,\Omega)}(t):=(P_{\mathcal B(\omega_0,\Omega)}x^\star)(t)\leftrightarrow\hat {\tilde{x}}^\star_{\mathcal B(\omega_0,\Omega)}(\omega):=\hat x^\star(\omega)\boldsymbol{1}_{\omega\in\mathcal B(\omega_0,\Omega)}$, which can be derived by the definition of the band-limiting operator \eqref{eq:band_limiting_operator}.
Clearly, $\tilde{x}^\star_{\mathcal B(\omega_0,\Omega)}(t)\in\mathscr B(\mathcal B(\omega_0,\Omega))$. Since the modulated PSWFs $\{\psi_n(c,t)e^{i\omega_0t}\,|\,c=\frac{\Omega T}{2},n=0,1,2,\cdots\}$ form a complete orthonormal set in $\mathscr B(\mathcal B(\omega_0,\Omega))$, the function $\tilde{x}_{\mathcal B(\omega_0,\Omega)}^\star$ can be expanded as follows:
\begin{equation}
    \label{eq:x_tilde_expansion}
    \tilde{x}_{\mathcal B(\omega_0,\Omega)}^\star=\sum\limits^{\infty}_{n=0}\langle\tilde{x}_{\mathcal{B}(\omega_0,\Omega)}^\star,\psi_ne^{i\omega_0t}\rangle\psi_n(c,t)e^{i\omega_0t}
\end{equation}
The inner product appearing in \eqref{eq:x_tilde_expansion} can be simplified to $\Delta C^\star_n(x^\star,\omega_0,\Omega)$ as follows.
\begin{align*}
    \langle\tilde{x}^\star_{\mathcal B(\omega_0,\Omega)},\psi_ne^{i\omega_0t}\rangle&=\int^{\infty}_{-\infty}\tilde{x}^\star_{\mathcal B(\omega_0,\Omega)}(t)\psi_n(c,t)e^{-i\omega_0t}dt\\&=\frac{1}{2\pi}\int^{\infty}_{-\infty}\hat{\tilde{x}}^\star_{\mathcal B(\omega_0,\Omega)}(\omega)\overline{\hat{\psi_n}}(c,\omega-\omega_0)d\omega\\&=\frac{1}{2\pi}\int^{\infty}_{-\infty}\big(\hat{x}^\star(\omega)\boldsymbol{1}_{\omega\in\mathcal B(\omega_0,\Omega)}\big)\big(\overline\ell_n\psi_n(c,\frac{T}{2\Omega}(\omega-\omega_0))\boldsymbol{1}_{\omega\in\mathcal B(\omega_0,\Omega)}\big)d\omega\\&=\Delta C^\star_n(x^\star,\omega_0,\Omega)
\end{align*}
The second equality follows from the generalized Parseval theorem \eqref{eq:Parseval}, the third equality follows from \eqref{eq:ell_n}, and the last one follows from \eqref{eq:C_n_ideal}. The energy of $x^\star$ within $\mathcal B(\omega_0,\Omega)$ is then obtained as 
\begin{align*}
    E_{\mathcal B(\omega_0,\Omega)}(x^\star)=E_{\mathbb R}(\tilde{x}^\star_{\mathcal B(\omega_0,\Omega)})=\sum\limits^{\infty}_{n=0}|\langle\tilde{x}^\star_{\mathcal B(\omega_0,\Omega)},\psi_ne^{i\omega_0t}\rangle|^2=\Delta^2\sum\limits^{\infty}_{n=0}\Big|C^\star_n(x^\star,\omega_0,\Omega)\Big|^2.
\end{align*}
\hspace*{15.5cm}Q.E.D.
}
\subsection{Proof of Lemma $\ref{lemma:difference_C_m_C_inf}$}\label{proof:lemma_difference_C_m_C_inf}
{\begin{equation*}
    \begin{aligned}
        &C^{(m)}_n(y,\omega_0,\Omega)-C^{(\infty)}_n(x^\star,\omega_0,\Omega)=-\sum\limits_{|k|>\lfloor\frac{m}{2}\rfloor}x^\star_{d}[k]\overline{u_{d,n}}[k]\\\Rightarrow&\,\Big|C^{(m)}_n(y,\omega_0,\Omega)-C^{(\infty)}_n(x^\star,\omega_0,\Omega)\Big|\leq\left(\sum\limits_{|k|>\lfloor\frac{m}{2}\rfloor}|x^\star_{d}[k]|^2\right)^{1/2}\left(\sum\limits_{|k|>\lfloor\frac{m}{2}\rfloor}|\overline{u_{d,n}}[k]|^2\right)^{1/2}
    \end{aligned}
\end{equation*}  
To bound the two terms $\sum\limits_{|k|>\lfloor\frac{m}{2}\rfloor}|x^\star_{d}[k]|^2$ and $\sum\limits_{|k|>\lfloor\frac{m}{2}\rfloor}|\overline{u_{d,n}}[k]|^2$, we appeal to the Peano kernel theorem \eqref{eq:peano_theorem}.\\
\hspace*{0.5cm}Let $L_k\in C^{2}([(k-1/2)\Delta,(k+1/2)\Delta])\to\mathbb C$ where $|k|>\lfloor\frac{m}{2}\rfloor$ be the linear functional defined as
\begin{equation*}
    L_k(g)=\Delta g(k\Delta)-\int^{(k+1/2)\Delta}_{(k-1/2)\Delta}g(t)dt.
\end{equation*} 
For any $1^{st}$ polynomial $g(t)=at+b$ (with arbitrary constants $a$ and $b$), we have
\begin{equation*}
    \begin{aligned}
        L_k(g)&=\Delta(ak\Delta+b)-\int^{(k+1/2)\Delta}_{(k-1/2)\Delta}(at+b)dt=0.
    \end{aligned}
\end{equation*} 
Hence, by the Peano kernel theorem, 
\begin{equation*}
    L_k(g)=\int^{(k+1/2)\Delta}_{(k-1/2)\Delta}L_k[(t-\theta)_+]g''(\theta)d\theta,
\end{equation*}
and the corresponding Peano kernel of $L_k$ is 
\begin{equation*}
    \begin{aligned}
        L_k\big((t-\theta)_+\big)&=\Delta(k\Delta-\theta)_+-\int^{(k+1/2)\Delta}_{(k-1/2)\Delta}(t-\theta)_+dt\\&=\Delta(k\Delta-\theta)_+-\int^{(k+1/2)\Delta}_{\theta}(t-\theta)dt\\&=\Delta(k\Delta-\theta)_+-\frac{1}{2}((k+1/2)\Delta-\theta)^2\\&=-\frac{1}{2}(\Delta/2-|k\Delta-\theta|)^2.
    \end{aligned}
\end{equation*}
We may bound $|L_k(g)|$ by H\"older's inequality:
\begin{equation*}
    \begin{aligned}
        |L_k(g)|\leq&\,\sup\limits_{\theta\in[(k-1/2)\Delta,(k+1/2)\Delta]}|-\frac{1}{2}(\Delta/2-|k\Delta-\theta|)^2|\int^{(k+1/2)\Delta}_{(k-1/2)\Delta}|g''(\theta)|d\theta=\,\frac{\Delta^2}{8}\int^{(k+1/2)\Delta}_{(k-1/2)\Delta}|g''(\theta)|d\theta.
    \end{aligned}
\end{equation*}
Consequently,
\begin{equation*}
    \begin{aligned}
        \Big|\Delta\sum\limits_{|k|>\lfloor\frac{m}{2}\rfloor}g(k\Delta)-\int_{|t|>T/2+\Delta/2}g(t)dt\Big|\leq&\,\sum\limits_{|k|>\lfloor\frac{m}{2}\rfloor}\Big|\Delta g(k\Delta)-\int^{(k+1/2)\Delta}_{(k-1/2)\Delta}g(t)dt\Big|\\=&\,\sum\limits_{|k|>\lfloor\frac{m}{2}\rfloor}|L_k(g)|\\\leq&\,\sum\limits_{|k|>\lfloor\frac{m}{2}\rfloor}\frac{\Delta^2}{8}\int^{(k+1/2)\Delta}_{(k-1/2)\Delta}|g''(t)|dt\\=&\,\frac{\Delta^2}{8}\int_{|t|>T/2+\Delta/2}|g''(t)|dt,
    \end{aligned}
\end{equation*}
and therefore   
\begin{equation*}
    \begin{aligned}
        \sum\limits_{|k|>\lfloor\frac{m}{2}\rfloor}g(k\Delta)&\leq\frac{1}{\Delta}\left(\int_{|t|>T/2+\Delta/2}g(t)dt+\frac{\Delta^2}{8}\int_{|t|>T/2+\Delta/2}|g''(t)|dt\right)\\&\leq\frac{1}{\Delta}\left(\int_{|t|>T/2}g(t)dt+\frac{\Delta^2}{8}\int_{|t|>T/2}|g''(t)|dt\right).
    \end{aligned}
\end{equation*}
For $g(t)=|x^\star(t)|^2$, we obtain
\begin{equation}
    \label{eq:x_star_d_energy bound}
    \begin{aligned}
         \sum\limits_{|k|>\lfloor\frac{m}{2}\rfloor}|x^\star_{d}[k]|^2&=\sum\limits_{|k|>\lfloor\frac{m}{2}\rfloor}|x^\star(k\Delta)|^2\\&\leq\frac{1}{\Delta}\left(\epsilon_T^2\|x^\star\|^2_{L^2(\mathbb R)}+\frac{\Delta^2}{8}\|\frac{d^2}{dt^2}|{x^\star}|^2\|_{L^1(\mathbb R\setminus\mathcal I_T)}\right).
    \end{aligned}
\end{equation}
Similarly, for $g(t)=|u_n(t)|^2=|\psi_n(c,t)|^2$, we have
\begin{equation*}
    \begin{aligned}
        \sum\limits_{|k|>\lfloor\frac{m}{2}\rfloor}|\overline{u_{d,n}}[k]|^2&=\sum\limits_{|k|>\lfloor\frac{m}{2}\rfloor}|u_{d}(k\Delta)|^2\\&\leq\frac{1}{\Delta}\left(\int_{|t|>T/2}|\psi_n(c,t)|^2dt+\frac{\Delta^2}{8}\int_{|t|>T/2}|\frac{d^2}{dt^2}\psi_n^2(c,t)|dt
        \right)\\&=\frac{1}{\Delta}\left(1-\lambda_n(\frac{\Omega T}{2})+\frac{\Delta^2}{8}\int_{|t|>T/2}|\frac{d^2}{dt^2}\psi_n^2(c,t)|dt\right).
    \end{aligned}
\end{equation*}
We bound the tail integral $\int_{|t|>T/2}|\frac{d^2}{dt^2}\psi_n^2(c,t)|dt$ as follows. Since $\frac{d^2}{dt^2}\psi_n^2(c,t)=2\bigl(\psi_n''(c,t)\psi_n(c,t)+|\psi_n'(c,t)|^2\bigr)$,
we have
\begin{align*}
\int_{|t|>T/2}\bigl|\frac{d^2}{dt^2}\psi_n^2(c,t)\bigr|dt
&\le 2\int_{|t|>T/2}\Bigl(|\psi_n''(c,t)\psi_n(c,t)|+|\psi_n'(c,t)|^2\Bigr)dt \\
&\le 2\|\psi_n''\|_{L^2(\mathbb R)}\|\psi_n\|_{L^2(\mathbb R)}
   +2\|\psi_n'\|_{L^2(\mathbb R)}^2,
\end{align*}
by Cauchy-Schwarz. Since $\psi_n$ is band-limited to $[-\Omega,\Omega]$, 
Parseval implies $\|\psi_n^{(k)}\|_{L^2(\mathbb R)}\leq\Omega^k\|\psi_n\|_{L^2(\mathbb R)}$ for $k=1,2$. Combining these estimates, we obtain $\int_{|t|>T/2}\bigl|\frac{d^2}{dt^2}\psi_n^2(c,t)\bigr|dt\le 4\Omega^2\|\psi_n\|_{L^2(\mathbb R)}^2
=4\Omega^2$. Hence, 
\begin{equation}
    \label{eq:udn_energy_bound}
    \sum\limits_{|k|>\lfloor\frac{m}{2}\rfloor}|\overline{u_{d,n}}[k]|^2\leq\frac{1}{\Delta}\left(1-\lambda_n(\frac{\Omega T}{2})+\frac{\Omega^2\Delta^2}{2}\right).
\end{equation}
Combining \eqref{eq:x_star_d_energy bound} and \eqref{eq:udn_energy_bound}, we conclude that
\begin{equation}
    \begin{aligned}
        &\Big|C^{(m)}_n(y,\omega_0,\Omega)-C^{(\infty)}_n(x^\star,\omega_0,\Omega)\Big|\\\leq&\,\frac{1}{\Delta}\left(\epsilon_T^2\|x^\star\|^2_{L^2(\mathbb R)}+\frac{\Delta^2}{8}\|\frac{d^2}{dt^2}|{x^\star}|^2\|_{L^1(\mathbb R\setminus\mathcal I_T)}\right)^{1/2}\left(1-\lambda_n(\frac{\Omega T}{2})+\frac{\Omega^2\Delta^2}{2}\right)^{1/2}.
    \end{aligned}
\end{equation}
\hspace*{15.5cm}Q.E.D.
}
\subsection{Proof of Lemma $\ref{lemma:difference_C_inf_C_ideal}$}\label{proof:lemma_difference_C_inf_C_ideal}
{\begin{equation*}
    \begin{aligned}
        \Big|C^{(\infty)}_n(x^\star,\omega_0,\Omega)-C^\star_n(x^\star,\omega_0,\Omega)\Big|&=\Big|\frac{\overline{\ell_n}}{2\pi\Delta}\sum\limits_{k\not=0}\int^{\omega_0+\Omega}_{\omega_0-\Omega}\hat x^\star(\omega-k\frac{2\pi}{\Delta})\psi_n(c,\frac{T}{2\Omega}(\omega-\omega_0))d\omega\Big|\\&\leq\Big|\frac{\ell_n}{2\pi\Delta}\Big|\sqrt{\frac{2\Omega}{T}\lambda_n(c)}\left(\int^{\omega_0+\Omega}_{\omega_0-\Omega}\left(\sum\limits_{k\not=0}\hat x^\star(\omega-k\frac{2\pi}{\Delta})\right)^2d\omega\right)^{1/2} \\&\leq\alpha_{fd}\Big|\frac{\ell_n}{2\pi\Delta}\Big|\sqrt{\frac{2\Omega}{T}\lambda_n(c)}\left(\int^{\omega_0+\Omega}_{\omega_0-\Omega}\left(\sum\limits_{k\not=0}\Big|\omega-k\frac{2\pi}{\Delta}\Big|^{-r}\right)^2d\omega\right)^{1/2}\\&=\frac{\alpha_{fd}}{\Delta\sqrt{2\pi}}\left(\int^{\omega_0+\Omega}_{\omega_0-\Omega}\left(\sum\limits_{k\not=0}\Big|\omega-k\frac{2\pi}{\Delta}\Big|^{-r}\right)^2d\omega\right)^{1/2}
    \end{aligned}
\end{equation*}  
The first inequality follows from the Cauchy-Schwarz inequality together with the orthogonal property of PSWFs \eqref{eq:PSWF_orthogonal}. The second inequality is a consequence of the spectral localization assumption (Assumption \ref{assumption:SL}). The final equality follows from the identities $\lambda_n(c)=\frac{2c}{\pi}[R^{(1)}_{0n}(c,1)]^2$ \eqref{eq:lambda_n}, $c=\frac{\Omega T}{2}$ \eqref{eq:c} and $\ell_n=\frac{\pi}{i^n\Omega R^{(1)}_{0n}(c,1)}$ \eqref{eq:ell_n}. The term $\sum\limits_{k\not=0}\Big|\omega-k\frac{2\pi}{\Delta}\Big|^{-r}$ for $\omega\in\mathcal B(\omega_0,\Omega)$ can be bounded above as follows.
\begin{equation*}
    \begin{aligned}
        \sum\limits_{k\not=0}\Big|\omega-k\frac{2\pi}{\Delta}\Big|^{-r}&=\sum\limits_{k\geq1}(2kW_{\mathrm{hs}}-\omega)^{-r}+\sum\limits_{k\leq-1}(\omega+2|k|W_{\mathrm{hs}})^{-r}\\&\leq2\sum\limits_{k\geq1}[(2k-1)W_{\mathrm{hs}}]^{-r}\\&\leq 2W_{\mathrm{hs}}^{-r}\left(1+2^{-r}\sum\limits_{k\geq1}k^{-r}\right)\\&=2W_{\mathrm{hs}}^{-r}\left(1+2^{-r}\left(1+\int^{\infty}_1x^{-r}dx\right)\right)\\&=2W_{\mathrm{hs}}^{-r}\left(1+2^{-r}\left(1+\frac{1}{r-1}\right)\right)\\&\leq3W_{\mathrm{hs}}^{-r}
    \end{aligned}
\end{equation*}
The first inequality holds because $W_{\mathrm{hs}}\geq W_{\mathrm{cut}}>\omega_0+\Omega\geq\omega>\omega_0-\Omega>-W_{\mathrm{cut}}\geq -W_{\mathrm{hs}}$. The final inequality is justified by the assumption $r\geq2$.
Thus, 
\begin{equation}
    \Big|C^{(\infty)}_n(x^\star,\omega_0,\Omega)-C^\star_n(x^\star,\omega_0,\Omega)|\leq\frac{1}{\Delta}\frac{6\Omega\alpha_{fd}}{\sqrt{2\pi}}W_{\mathrm{hs}}^{-r}\leq\frac{1}{\Delta}\sqrt{6}\Omega\alpha_{fd}W_{\mathrm{hs}}^{-r}.
\end{equation} 
\hspace*{15.5cm}Q.E.D.
}
\subsection{Proof of Corollary $\ref{corollary:bounded_tail_energy}$}\label{proof:corollary_bounded_tail_energy}
{The upper bound for the finite-term truncation error involves the tail energy $\|P_{\mathcal B(\omega_0,\Omega)}x^\star\|_{L^2(\mathbb R\setminus\mathcal I_T)}=\int_{|t|>T/2}|P_{\mathcal B(\omega_0,\Omega)}x^\star(t)|^2dt$. Using the frequency-differentiation Fourier transform pair $tf(t)\leftrightarrow i\frac{d}{d\omega}\hat f(\omega)$ repeatedly, we obtain $t^sf(t)=\frac{1}{2\pi}\int_{\mathbb R}i^s{\hat f}^{(s)}(\omega)e^{i\omega t}d\omega$. Then, $|f(t)|\leq\frac{\|{\hat f}^{(s)}\|_{L^1(\mathbb R)}}{2\pi}|t|^{-s}$ whenever $\hat f^{(s)}\in L^1(\mathbb R)$. Thus, we seek to invoke the additional spectral regularity assumption (Assumption \ref{assumption:SR}). However, because $P_{\mathcal B(\omega_0,\Omega)}x^\star$ is obtained by hard thresholding $\hat x^\star(\omega)$, its Fourier transform is not $s$-times differentiable. Instead, we consider $x^\star_{\phi}(t)\leftrightarrow\hat{x}^\star_{\phi}(\omega):=\hat x^\star(\omega)\phi(\omega)$, where $\phi(\omega)=\begin{cases}
     1,&\omega\in\mathcal B(\omega_0,\Omega)\\0,&|\omega-\omega_0|\geq\Omega+\kappa\\\in[0,1],&\omega\in\mathcal I^+_{\omega_0,\Omega,\kappa}
 \end{cases}$, $\kappa>0$, $\mathcal I^+_{\omega_0,\Omega,\kappa}$ is defined in \eqref{eq:I_ft}, and $\phi(\omega)$ is at least $s$-times differentiable with each derivative in $L^1(\mathbb R)$. Such a window can be constructed using the smoothstep functions introduced in $\ref{sec:general_smoothstep}$. With this windowing, $\hat{x}^\star_{\phi}(\omega)$ is $s$-times differentiable with derivatives in $L^1(\mathbb R)$, so that $|x^\star_\phi(t)|\leq\frac{\|\hat x^{\star(s)}_{\phi}\|_{L^1(\mathbb R)}}{2\pi}|t|^{-s}$. Hence, 
 \begin{equation}
     \label{eq:x_star_phi_tail}
     \|x^\star_\phi\|_{L^1(\mathbb R\setminus\mathcal I_T)}\leq\frac{\|\hat x^{\star(s)}_{\phi}\|_{L^1(\mathbb R)}}{2\pi}\sqrt{\int_{|t|>T/2}|t|^{-2s}dt}=\frac{2^{s-1}\|\hat x^{\star(s)}_\phi\|_{L^1(\mathbb R)}}{\pi\sqrt{2s-1}}T^{1/2-s},
 \end{equation}
 which shows the need to bound $\|\hat x^{\star(s)}_\phi\|_{L^1(\mathbb R)}$. By the triangle inequality,
\begin{align*}
\|\hat x^{\star(s)}_\phi\|_{L^1(\mathbb R)}&=\|\sum\limits^s_{i=0}\binom{s}{i}\hat x^{\star(i)}\phi^{(s-i)}\|_{L^1(\mathbb R)}\leq\sum\limits^s_{i=0}\binom{s}{i}\|\hat x^{\star(i)}\phi^{(s-i)}\|_{L^1(\mathbb R)}.
\end{align*}
Using Hölder’s inequality together with \eqref{eq:smoothstep_derivatives_L_1} and \eqref{eq:scaled_smoothstep_derivatives_L_1}, we obtain
 \begin{align}
    \label{eq:x_star_phi_derivative_bound1}
     \|\hat x^{\star(s)}_{\phi}\|_{L^1(\mathbb R)}&\leq\,2\sum\limits^{s-1}_{i=0}\left(\frac{s}{s-i}\binom{s-1}{i}\|\hat x^{\star(i)}\|_{L^{\infty}(\mathcal I^+_{\omega_0,\Omega,\kappa})}\kappa^{1-(s-i)}\frac{(2s+1)!}{s!}\frac{(i+1)!}{(2i+3)!}\right)+\|\hat x^{\star(s)}\phi\|_{L^1(\mathbb R)}\nonumber\\&\leq\,\frac{2(2s+1)!}{(s-1)!}(1+\kappa^{-1})^{s-1}\sup\limits_{i=0,1,\cdots,s-1}\|\hat x^{\star(i)}\|_{L^{\infty}(\mathcal I^+_{\omega_0,\Omega,\kappa})}\sup\limits_{i=0,1,\cdots,s-1}\frac{(i+1)!}{(s-i)(2i+3)!}\nonumber\\&\hspace{0.5cm}+\|\hat x^{\star(s)}\|_{L^1(\mathcal B(\omega_0,\Omega+\kappa))}\nonumber\\&\leq\gamma^+_{\hat x^\star,s,\omega_0,\Omega}(\kappa)\frac{(2s+1)!}{3s!}(1+\kappa^{-1})^{s-1},
 \end{align}
 where $\gamma^+_{\hat x^\star,s,\omega_0,\Omega}(\kappa)$ is defined in \eqref{eq:gamma_ft}. Similarly, by Hölder’s inequality together with \eqref{eq:smoothstep_derivatives_L_infinity} and \eqref{eq:scaled_smoothstep_derivatives_L_infinity},
 \begin{align}
    \label{eq:x_star_phi_derivative_bound2}
     \|\hat x^{\star(s)}_\phi\|_{L^1(\mathbb R)}&\leq\,\sum\limits^{s-1}_{i=0}\left(\frac{s}{s-i}\binom{s-1}{i}\|\hat x^{\star(i)}\|_{L^1(\mathcal I^+_{\omega_0,\Omega,\kappa})}\kappa^{-(s-i)}\frac{(2s+1)!}{s!(i+1)!}4^{-i-1}\right)+\|\hat x^{\star(s)}\phi\|_{L^1(\mathbb R)}\nonumber\\&\leq\,\frac{(2s+1)!}{2(s-1)!}(1+\kappa^{-1})^{s-1}\sup\limits_{i=0,1,...,s-1}\|\hat x^{\star(i)}\|_{L^{\infty}(\mathcal I^+_{\omega_0,\Omega,\kappa})}\sup\limits_{i=0,1,...,s-1}\frac{4^{-i}}{(s-i)(i+1)!}\nonumber\\&\hspace{0.5cm}+\|\hat x^{\star(s)}\|_{L^1(\mathcal B(\omega_0,\Omega+\kappa))}\nonumber\\&\leq\gamma^+_{\hat x^\star,s,\omega_0,\Omega}(\kappa)\frac{(2s+1)!}{2s!}(1+\kappa^{-1})^{s-1}.
 \end{align}
 The second inequality holds because $\|\hat x^{\star(i)}\|_{L^1(\mathcal I^+_{\omega_0,\Omega,\kappa})}\leq2\kappa\|\hat x^{\star(i)}\|_{L^{\infty}(\mathcal I^+_{\omega_0,\Omega,\kappa})}$. Combining \eqref{eq:x_star_phi_derivative_bound1} and \eqref{eq:x_star_phi_derivative_bound2} gives
 \begin{equation}
     \label{eq:x_star_phi_derivative_final_bound}
    \|\hat x^{\star(s)}_\phi\|_{L^1(\mathbb R)}\leq\gamma^+_{\hat x^\star,s,\omega_0,\Omega}(\kappa)\frac{(2s+1)!}{3s!}(1+\kappa^{-1})^{s-1}.
 \end{equation}
 \hspace*{0.5cm}We now derive an upper bound for the tail energy $\|P_{\mathcal B(\omega_0,\Omega)x^\star}\|_{L^2(\mathbb R\setminus\mathcal I_T)}$:
 \begin{align*}
    \|P_{\mathcal B(\omega_0,\Omega)}x^\star\|_{L^2(\mathbb R\setminus\mathcal I_T)}-\|x^\star_\phi\|_{L^2(\mathbb R\setminus\mathcal I_T)}&=\sqrt{\int_{|t|>T/2}|(P_{\mathcal B(\omega_0,\Omega)}x^\star)(t)|^2dt}-\sqrt{\int_{|t|>T/2}|x^\star_\phi(t)|
     ^2dt}\\&\leq\,\sqrt{\int_{|t|>T/2}|(P_{\mathcal B(\omega_0,\Omega)}x^\star)(t)-x^\star_\phi(t)|^2dt}\\&\leq\,\sqrt{\int_{\mathbb R}|(P_{\mathcal B(\omega_0,\Omega)}x^\star)(t)-x^\star_\phi(t)|^2dt}\\&=\,\sqrt{\frac{1}{2\pi}\int_{\mathbb R}|\hat x^\star(\omega)\boldsymbol{1}_{\omega\in\mathcal B(\omega_0,\Omega)}-\hat x^\star(\omega)\phi(\omega)|^2d\omega}\\&=\,\sqrt{\frac{1}{2\pi}\int_{\omega\in\mathcal I^+_{\omega_0,\Omega,\kappa}}|\hat x^\star(\omega)\phi(\omega)|^2d\omega}\\&\leq\,\sqrt{\frac{\kappa}{\pi}}\|\hat x^\star\|_{L^{\infty}(\mathcal I^+_{\omega_0,\Omega,\kappa})}
 \end{align*}
 The first inequality follows from the reverse triangle inequality \eqref{eq:reverse_triangle_inequality}.
Hence,
 \begin{align}
    \label{eq:P_B_x_star_tail_energy_bound} 
     \|P_{\mathcal B(\omega_0,\Omega)}x^\star\|_{L^2(\mathbb R\setminus\mathcal I_T)}&\leq\,\gamma^+_{\hat x^\star,s,\omega_0,\Omega}(\kappa)\frac{(2s+1)!}{3\pi s!\sqrt{2s-1}}(2+2\kappa^{-1})^{s-1}T^{1/2-s}+\sqrt{\frac{\kappa}{\pi}}\|\hat x^\star\|_{L^{\infty}(\mathcal I^+_{\omega_0,\Omega,\kappa})}\nonumber\\&\leq\,\gamma^+_{\hat x^\star,s,\omega_0,\Omega}(\kappa)\left[\frac{(2s+1)!}{3\pi s!\sqrt{2s-1}}(2+2\kappa^{-1})^{s-1}T^{1/2-s}+\sqrt{\frac{\kappa}{\pi}}\right]\nonumber\\&\leq\,\gamma_{\hat x^\star,s,\omega_0,\Omega}^+(\kappa)Q_{s,T}(\kappa), \forall\kappa>0
 \end{align}
 where $Q_{s,T}(\kappa)$ is defined in \eqref{eq:Qst_ft}.
Since \eqref{eq:P_B_x_star_tail_energy_bound} holds for all $\kappa>0$,
 \begin{equation}
     \|P_{\mathcal B(\omega_0,\Omega)}x^\star\|_{L^2(\mathbb R\setminus\mathcal I_T)}\leq\inf\limits_{\kappa>0}\gamma^+_{\hat x^\star,s,\omega_0,\Omega}(\kappa)Q_{s,T}(\kappa).
 \end{equation}
 \hspace*{15.5cm}Q.E.D.
 }
 \subsection{Proof of Corollary $\ref{corollary:bounded_tail_energy_special_case}$}\label{proof:corollary_bounded_tail_energy_special_case}
 {Let $Z:=\frac{(2s+1)!2^{s-1}}{3\pi s!\sqrt{2s-1}}T^{1/2-s}$. Then, $Q_{s,T}(\kappa)=Z(1+\kappa^{-1})^{s-1}+\sqrt{\frac{\kappa}{\pi}}$. Although a closed-form expression for the minimum of $Q_{s,T}(\kappa)$ is not available, we know that it decreases as $Z$ decreases, and the corresponding minimizer also becomes smaller. It is therefore reasonable to choose $\kappa$ as a function of $Z$ such that smaller values of $Z$ correspond to smaller values of $\kappa$. In this way, we can obtain a useful upper bound for $Q_{s,T}(\kappa)$. Specifically, we set $\kappa=2Z^{1/s}$ and assume $\kappa<1$. Then,
 \begin{align}
     Q_{s,T}(2Z^{1/s})=Z(1+(2Z^{1/s})^{-1})^{s-1}+\sqrt{\frac{2Z^{1/s}}{\pi}}&\leq Z(2(2Z^{1/s})^{-1})^{s-1}+\sqrt{\frac{2}{\pi}}Z^{1/2s}\nonumber\\&\leq Z^{1/s}+Z^{1/2s}\nonumber\\&\leq2Z^{1/2s}\nonumber\\&=2\left(\frac{2^{s-1}(2s+1)!}{3\pi s!\sqrt{2s-1}}\right)^{\frac{1}{2s}}T^{\frac{1}{4s}-\frac{1}{2}}.
 \end{align}
 The margin is $\kappa_T=2Z^{1/s}=\left(\frac{2^{2s-1}(2s+1)!}{3\pi s!\sqrt{2s-1}}\right)^{1/s}T^{\frac{1}{2s}-1}$.
 The requirement $\kappa=\kappa_T<1$ is satisfied whenever 
 \begin{align}
     T>\left(\frac{(2s+1)!2^{2s-1}}{3\pi s!\sqrt{2s-1}}\right)^{\frac{2}{2s-1}}.
 \end{align}
\hspace*{15.5cm}Q.E.D.
}
\subsection{Proof of Corollary $\ref{corollary:noise_perturbation_analysis}$}\label{proof:corollary_noise_perturbation_analysis}
{$A^{\mathrm{est}}_{\mathcal B(\omega_0,\Omega)}(y_{\eta})-A_{\mathcal B(\omega_0,\Omega)}(x^\star)=\Delta\sqrt{\sum\limits^{\lfloor\frac{2c}{\pi}\rfloor}_{n=0}\Big|C^{(m)}_n(y,\omega_0,\Omega)+C^{(m)}_n(\eta,\omega_0,\Omega)\Big|^2}-A_{\mathcal B(\omega_0,\Omega)}(x^\star)$\\
If $A^{\mathrm{est}}_{\mathcal B(\omega_0,\Omega)}(y_{\eta})-A_{\mathcal B(\omega_0,\Omega)}(x^\star)<0$, then
\begin{align*}
    0&>\Delta\sqrt{\sum\limits^{\lfloor\frac{2c}{\pi}\rfloor}_{n=0}\Big|C^{(m)}_n(y,\omega_0,\Omega)+C^{(m)}_n(\eta,\omega_0,\Omega)\Big|^2}-A_{\mathcal B(\omega_0,\Omega)}(x^\star)\\&\geq\Delta\sqrt{\sum\limits^{\lfloor\frac{2c}{\pi}\rfloor}_{n=0}\Big|C^{(m)}_n(y,\omega_0,\Omega)\Big|^2}-A_{\mathcal B(\omega_0,\Omega)}(x^\star)-\Delta\sqrt{\sum\limits^{\lfloor\frac{2c}{\pi}\rfloor}_{n=0}\Big|C^{(m)}_n(\eta,\omega_0,\Omega)\Big|^2}.
\end{align*}
The second inequality follows from the reverse triangle inequality \eqref{eq:reverse_triangle_inequality}.
If $A^{\mathrm{est}}_{\mathcal B(\omega_0,\Omega)}(y_{\eta})-A_{\mathcal B(\omega_0,\Omega)}(x^\star)\geq0$, then
\begin{align*}
    0&\leq\Delta\sqrt{\sum\limits^{\lfloor\frac{2c}{\pi}\rfloor}_{n=0}\Big|C^{(m)}_n(y,\omega_0,\Omega)+C^{(m)}_n(\eta,\omega_0,\Omega)\Big|^2}-A_{\mathcal B(\omega_0,\Omega)}(x^\star)\\&\leq\Delta\sqrt{\sum\limits^{\lfloor\frac{2c}{\pi}\rfloor}_{n=0}\Big|C^{(m)}_n(y,\omega_0,\Omega)\Big|^2}-A_{\mathcal B(\omega_0,\Omega)}(x^\star)+\Delta\sqrt{\sum\limits^{\lfloor\frac{2c}{\pi}\rfloor}_{n=0}\Big|C^{(m)}_n(\eta,\omega_0,\Omega)\Big|^2}.
\end{align*}
Hence, we obtain 
\begin{align}
    \label{eq:band_L^2_estimator_intermediate_bound_noise}
    \Big|A^{\mathrm{est}}_{\mathcal B(\omega_0,\Omega)}(y_{\eta})-A_{\mathcal B(\omega_0,\Omega)}(x^\star)\Big|&\leq\Big|A^{\mathrm{est}}_{\mathcal B(\omega_0,\Omega)}(y)-A_{\mathcal B(\omega_0,\Omega)}(x^\star)\Big|+\Delta\sqrt{\sum\limits^{\lfloor\frac{2c}{\pi}\rfloor}_{n=0}\Big|C^{(m)}_n(\eta,\omega_0,\Omega)\Big|^2}.
\end{align}
By the triangle inequality and the Cauchy-Schwarz inequality,
\begin{align}
    \label{eq:finite_sample_band_coefficient_noise_bound}
    \Delta\sqrt{\sum\limits^{\lfloor\frac{2c}{\pi}\rfloor}_{n=0}\Big|C^{(m)}_n(\eta,\omega_0,\Omega)\Big|^2}&=\Delta\sqrt{\sum\limits^{\lfloor\frac{2c}{\pi}\rfloor}_{n=0}\Big|\sum\limits^m_{j=1}\eta_j\psi_n(c,t_j)e^{-i\omega_0t_j}\Big|^2}\nonumber\\&\leq\Delta\sqrt{\sum\limits^{\lfloor\frac{2c}{\pi}\rfloor}_{n=0}\Big(\sum\limits^m_{j=1}\big|\eta_j\psi_n(c,t_j)e^{-i\omega_0t_j}\big|\Big)^2}\nonumber\\&\leq\Delta\sqrt{\sum\limits^{\lfloor\frac{2c}{\pi}\rfloor}_{n=0}\Big(\sum\limits^m_{j=1}\big|\eta_j\big|^2\Big)\Big(\sum\limits^m_{j=1}\big|\psi_n(c,t_j)e^{-i\omega_0t_j}\big|^2\Big)}=\Delta\varepsilon\sqrt{\sum\limits^{\lfloor\frac{2c}{\pi}\rfloor}_{n=0}\sum\limits^m_{j=1}\big|\psi_n(c,t_j)\big|^2}.
\end{align}
Using a similar argument as in $\ref{proof:lemma_difference_C_m_C_inf}$, we bound the energy of discrete sampled PSWFs within $\mathcal I_T$ as follows.
\begin{align}
    \sum\limits^m_{j=1}|\psi_n(c,t_j)|^2&\leq\frac{1}{\Delta}\Big(\int_{|t|\leq\frac{T}{2}+\frac{\Delta}{2}}|\psi_n(c,t)|^2dt+\frac{\Delta^2}{8}\int_{|t|\leq\frac{T}{2}+\frac{\Delta}{2}}|\frac{d^2}{dt^2}\psi_n^2(c,t)|dt\Big)\nonumber\\&\leq\frac{1}{\Delta}\Big(\lambda_n(c)+\int_{t\in[-\frac{T}{2}-\frac{\Delta}{2},-\frac{T}{2}]\cup[\frac{T}{2},\frac{T}{2}+\frac{\Delta}{2}]}|\psi_n(c,t)|^2dt+\frac{\Delta^2}{8}\int_{\mathbb R}|\frac{d^2}{dt^2}\psi_n^2(c,t)|dt\Big)\nonumber\\&\leq\frac{1}{\Delta}\Big(\lambda_n(c)+\int_{t\in[-\frac{T}{2}-\frac{\Delta}{2},-\frac{T}{2}]\cup[\frac{T}{2},\frac{T}{2}+\frac{\Delta}{2}]}|\psi_n(c,t)|^2dt+\frac{\Omega^2\Delta^2}{2}\Big)
\end{align}
Since $|\psi_n(c,t)|=\frac{1}{2\pi}|\int^{\Omega}_{-\Omega}\hat\psi_n(c,\omega)d\omega|\leq\frac{1}{2\pi}\|\hat\psi_n\|_{L^2(\mathbb R)}\|\boldsymbol{1}(\omega\in\mathcal I_{\Omega})\|_{L^2(\mathbb R)}=\sqrt{\frac{\Omega}{\pi}}$, we obtain 
\begin{align}
    \label{eq:PSWF_finite_interval_energy_bound1}
    \sum\limits^m_{j=1}|\psi_n(c,t_j)|^2\leq\frac{1}{\Delta}\Big(\lambda_n(c)+\frac{\Delta\Omega}{\pi}+\frac{\Omega^2\Delta^2}{2}\Big).
\end{align}
We also have 
\begin{align}
    \label{eq:PSWF_finite_interval_energy_bound2}
    \sum\limits^m_{j=1}|\psi_n(c,t_j)|^2&\leq\frac{1}{\Delta}\Big(\int_{|t|\leq\frac{T}{2}+\frac{\Delta}{2}}|\psi_n(c,t)|^2dt+\frac{\Delta^2}{8}\int_{|t|\leq\frac{T}{2}+\frac{\Delta}{2}}|\frac{d^2}{dt^2}\psi^2_n(c,t)|dt\Big)\nonumber\\&\leq\frac{1}{\Delta}\Big(1+\frac{\Omega^2\Delta^2}{2}\Big).
\end{align}
Combining \eqref{eq:PSWF_finite_interval_energy_bound1} and \eqref{eq:PSWF_finite_interval_energy_bound2} yields
\begin{equation}
    \label{eq:PSWF_finite_interval_energy_final_bound}
    \sum\limits^m_{j=1}|\psi_n(c,t_j)|^2\leq\frac{1}{\Delta}\Big(\frac{\Omega^2\Delta^2}{2}+\min\{1,\lambda_n(c)+\frac{\Omega\Delta}{\pi}\}\Big).
\end{equation}
Finally, combining \eqref{eq:band_L^2_estimator_intermediate_bound_noise}, \eqref{eq:finite_sample_band_coefficient_noise_bound}, and \eqref{eq:PSWF_finite_interval_energy_final_bound}, we obtain
\begin{align}
    \label{eq:band_L^2_estimator_final_bound_noise}
    &\Big|A^{\mathrm{est}}_{\mathcal B(\omega_0,\Omega)}(y_{\eta})-A_{\mathcal B(\omega_0,\Omega)}(x^\star)\Big|\nonumber\\\leq&\Big|A^{\mathrm{est}}_{\mathcal B(\omega_0,\Omega)}(y)-A_{\mathcal B(\omega_0,\Omega)}(x^\star)\Big|+\varepsilon\sqrt{\Delta\sum\limits^{\lfloor\frac{2c}{\pi}\rfloor}_{n=0}\Big(\frac{\Omega^2\Delta^2}{2}+\min\{1,\lambda_n(c)+\frac{\Omega\Delta}{\pi}\}\Big)}.
\end{align}
\hspace*{15.5cm}Q.E.D.
}


\section{Supplementary Figures}
{\begin{figure}[H]
    \centering
    \begin{subfigure}{0.32\textwidth}
        \centering
        \includegraphics[width=\linewidth]{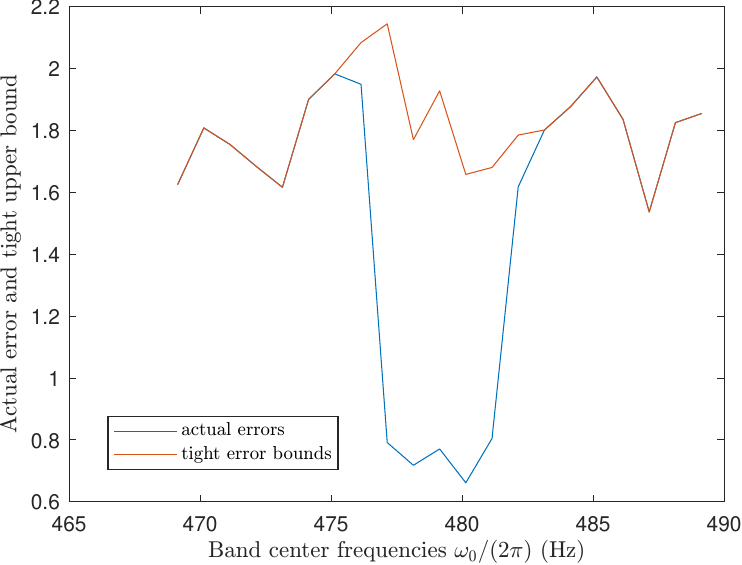}
        \caption{}
        \label{fig:noise_B_splines_1}
    \end{subfigure}
    \begin{subfigure}{0.32\textwidth}
        \centering
        \includegraphics[width=\linewidth]{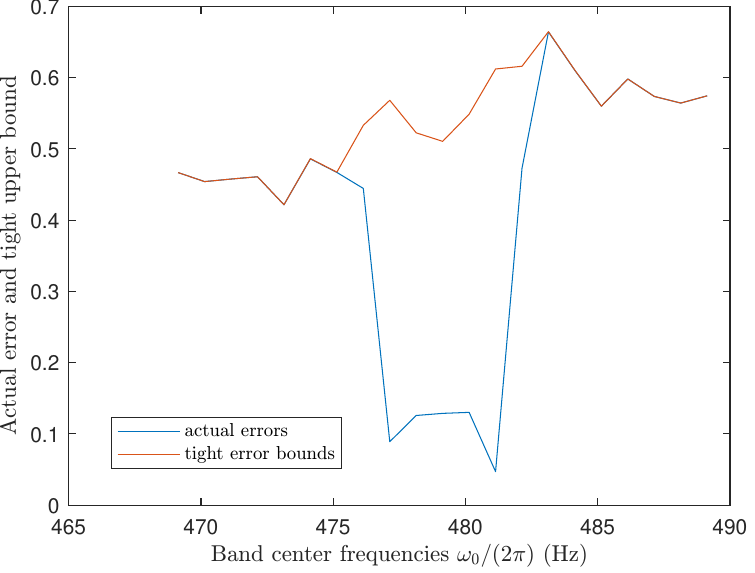}
        \caption{}
        \label{fig:noise_B_splines_2}
    \end{subfigure}
    \begin{subfigure}{0.32\textwidth}
        \centering
        \includegraphics[width=\linewidth]{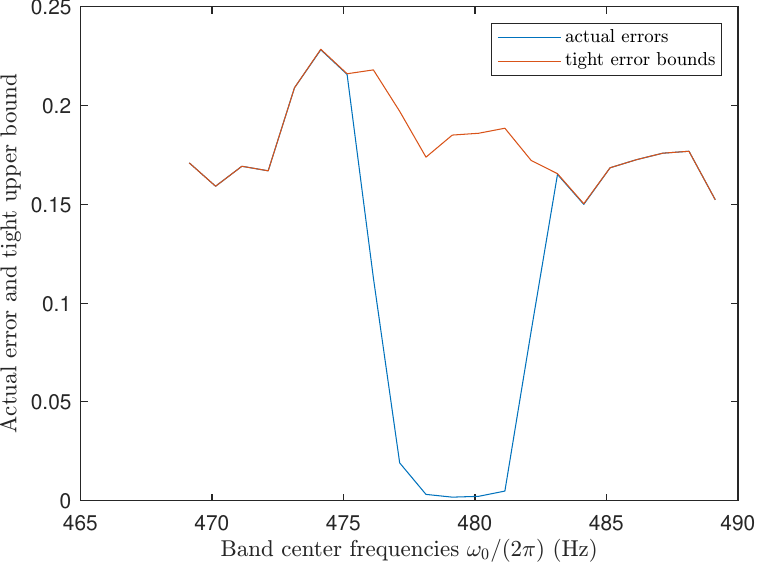}
        \caption{}
        \label{fig:noise_B_splines_3}
    \end{subfigure}
    \begin{subfigure}{0.32\textwidth}
        \centering
        \includegraphics[width=\linewidth, height=3.8cm]{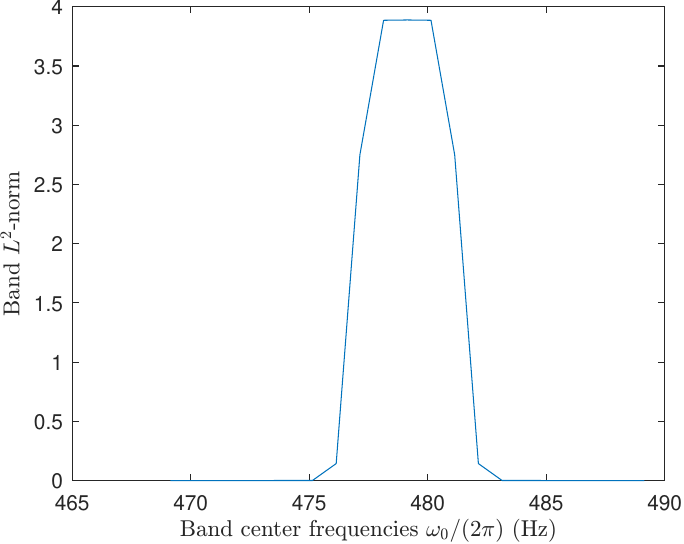}
        \caption{}
        \label{fig:noise_B_splines_4}
    \end{subfigure}
    \begin{subfigure}{0.32\textwidth}
        \centering
        \includegraphics[width=\linewidth, height=3.8cm]{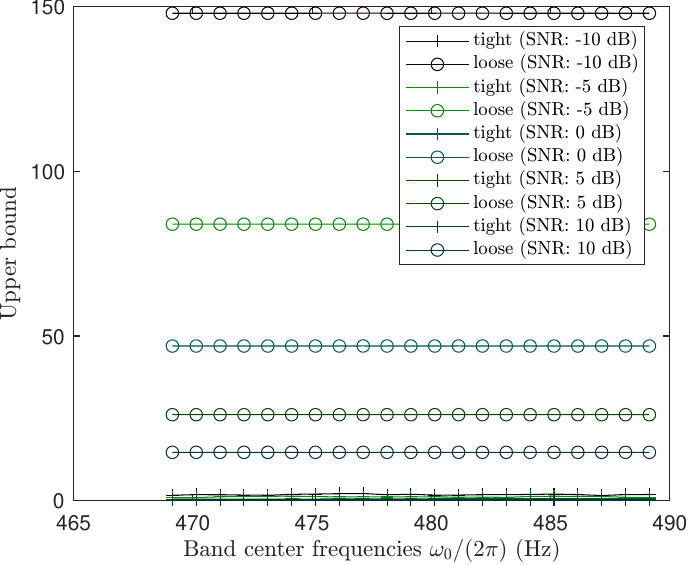}
        \caption{}
        \label{fig:noise_B_splines_5}
    \end{subfigure}
    \begin{subfigure}{0.32\textwidth}
        \centering
        \includegraphics[width=\linewidth, height=3.8cm]{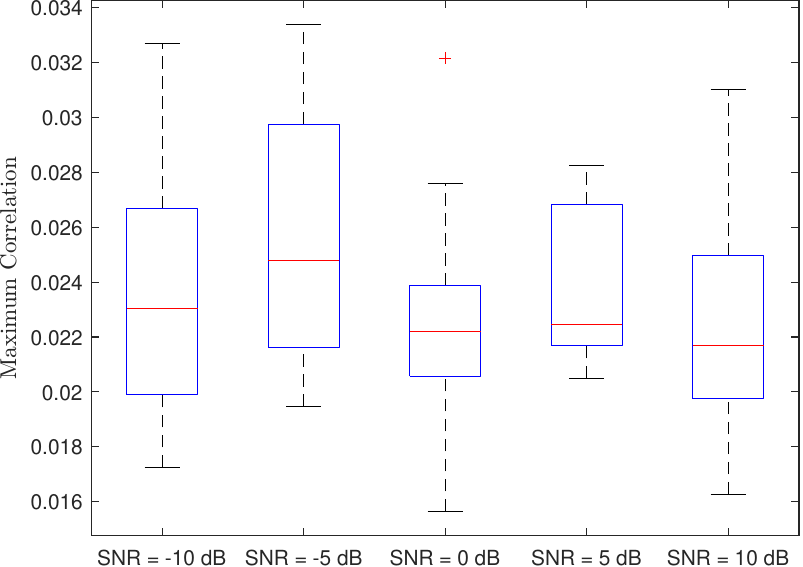}
        \caption{}
        \label{fig:noise_B_splines_6}
    \end{subfigure}
    \caption{B-spline mixtures. The six panels have the same interpretation as in the Gaussian-mixture case. Overall, the B-spline-mixture case exhibits behavior similar to the Gaussian-mixture case.}
    \label{fig:noise_B_splines}
\end{figure}
}


\section*{Data availability}

The numerical experiments in this study use synthetically generated
signals. The MATLAB scripts for reproducing the reported experiments
have been deposited as a private Mendeley Data draft for peer review.
The dataset will be made publicly available upon publication of the
article, and the final public DOI will be added to this statement.


\section*{Acknowledgements}
{This work was supported by the National Science and Technology Council of Taiwan under the contract NSTC 114-2221-E-002 -126 -MY3, and by a scholarship from the German Academic Exchange Service (DAAD). The authors also gratefully acknowledge Prof. Holger Rauhut of LMU Munich, Germany, for many insightful discussions and constructive suggestions on this work, particularly with regard to the rigor of the theoretical framework and the clarity of the mathematical arguments.} 


\bibliographystyle{elsarticle-num} 
\bibliography{references.bib}

@article{PSWF1,
  author  = {Slepian, David and Pollak, Henry O.},
  title   = {Prolate Spheroidal Wave Functions, {Fourier} Analysis and Uncertainty {I}},
  journal = {Bell System Technical Journal},
  volume  = {40},
  number  = {1},
  pages   = {43--63},
  year    = {1961}
}

@article{PSWF2,
  author  = {Landau, Henry J. and Pollak, Henry O.},
  title   = {Prolate Spheroidal Wave Functions, {Fourier} Analysis and Uncertainty {III}: The Dimension of the Space of Essentially Time- and Band-Limited Signals},
  journal = {Bell System Technical Journal},
  volume  = {41},
  number  = {4},
  pages   = {1295--1336},
  year    = {1962}
}

@article{PSWF3,
  author  = {Slepian, David},
  title   = {Prolate Spheroidal Wave Functions, {Fourier} Analysis, and Uncertainty {V}: The Discrete Case},
  journal = {Bell System Technical Journal},
  volume  = {57},
  number  = {5},
  pages   = {1371--1430},
  year    = {1978}
}

@article{PSWF4,
  author  = {Wang, Li-Lian},
  title   = {A Review of Prolate Spheroidal Wave Functions from the Perspective of Spectral Methods},
  journal = {Journal of Mathematical Study},
  volume  = {50},
  number  = {2},
  pages   = {101--143},
  year    = {2017}
}

@book{PSWF5,
  author    = {Flammer, Carson},
  title     = {Spheroidal Wave Functions},
  publisher = {Dover Publications},
  year      = {2014}
}

@inproceedings{PSWF6,
  author       = {Zhiyue Lin and Richard W. McCallum and Hongyu Wang},
  title        = {Computation and Performance of the Prolate-Spheroidal Wave Function Window in Spectral Estimation},
  booktitle    = {Proceedings of the 1996 IEEE International Conference on Acoustics, Speech, and Signal Processing ({ICASSP})},
  year         = {1996},
  volume       = {5},
  pages        = {2976--2978},
  doi          = {10.1109/ICASSP.1996.550179}
}

@inproceedings{PSWF7,
  author    = {Ding, Jian-Jiun and Pei, Soo-Chang},
  title     = {Reducing Sampling Error by Prolate Spheroidal Wave Functions and Fractional {Fourier} Transform},
  booktitle = {Proceedings of the {ICASSP} 2005},
  volume    = {4},
  pages     = {217--220},
  year      = {2005}
}

@article{PSWF8,
  author  = {Bonami, Aline and Karoui, Abderrazek},
  title   = {Approximations in {Sobolev} Spaces by Prolate Spheroidal Wave Functions},
  journal = {Applied and Computational Harmonic Analysis},
  volume  = {42},
  number  = {3},
  pages   = {361--377},
  year    = {2017}
}

@article{PSWF9,
  author       = {Abderrazek Karoui and Taher Moumni},
  title        = {Spectral analysis of the finite Hankel transform and circular prolate spheroidal wave functions},
  journal      = {Journal of Computational and Applied Mathematics},
  year         = {2009},
  volume       = {233},
  number       = {2},
  pages        = {315--333},
  doi          = {10.1016/j.cam.2009.07.037}
}

@article{PSWF10,
  author       = {Yan Tian},
  title        = {Superconvergence and fast implementation of the barycentric prolate differentiation},
  journal      = {Journal of Computational and Applied Mathematics},
  year         = {2022},
  volume       = {410},
  pages        = {114191},
  doi          = {10.1016/j.cam.2022.114191}
}

@article{Multitaper1,
  author  = {Thomson, David J.},
  title   = {Spectrum Estimation and Harmonic Analysis},
  journal = {Proceedings of the IEEE},
  volume  = {70},
  number  = {9},
  pages   = {1055--1096},
  year    = {1982},
  doi     = {10.1109/PROC.1982.12433}
}

@article{Multitaper2,
  author  = {Karnik, Santhosh and Romberg, Justin and Davenport, Mark A.},
  title   = {{Thomson}'s Multitaper Method Revisited},
  journal = {IEEE Transactions on Information Theory},
  volume  = {68},
  number  = {7},
  pages   = {4864--4891},
  year    = {2022},
  doi     = {10.1109/TIT.2022.3151415}
}

@article{Multitaper3,
  author       = {Thomas P. Bronez},
  title        = {On the Performance Advantage of Multitaper Spectral Analysis},
  journal      = {IEEE Transactions on Signal Processing},
  year         = {1992},
  volume       = {40},
  pages        = {2941--2946},
  doi          = {10.1109/78.175738}
}

@article{Multitaper4,
  author       = {Joakim And{\'e}n and Jos{\'e} Luis Romero},
  title        = {Multitaper Estimation on Arbitrary Domains},
  journal      = {SIAM Journal on Imaging Sciences},
  year         = {2020},
  volume       = {13},
  number       = {3},
  pages        = {1565--1594},
  doi          = {10.1137/19M1278338}
}

@article{Welch1,
  author  = {Welch, Peter D.},
  title   = {The Use of Fast Fourier Transform for the Estimation of Power Spectra: A Method Based on Time Averaging Over Short, Modified Periodograms},
  journal = {IEEE Transactions on Audio and Electroacoustics},
  volume  = {15},
  number  = {2},
  pages   = {70--73},
  year    = {1967},
  doi     = {10.1109/TAU.1967.1161901}
}

@book{Welch2,
  author    = {Stoica, Petre and Moses, Randolph L.},
  title     = {Spectral Analysis of Signals},
  publisher = {Prentice Hall},
  year      = {2005}
}

@book{Spectralanalysis,
  author    = {Percival, Donald B. and Walden, Andrew T.},
  title     = {Spectral Analysis for Physical Applications: Multitaper and Conventional Univariate Techniques},
  publisher = {Cambridge University Press},
  year      = {1993}
}

@article{Window,
  author  = {Harris, Fredric J.},
  title   = {On the Use of Windows for Harmonic Analysis with the Discrete {Fourier} Transform},
  journal = {Proceedings of the IEEE},
  volume  = {66},
  number  = {1},
  pages   = {51--83},
  year    = {1978},
  doi     = {10.1109/PROC.1978.10837}
}

@article{Integratedperiodogram1,
  author       = {Ulf Grenander and Murray Rosenblatt},
  title        = {Statistical Spectral Analysis of Time Series Arising from Stationary Stochastic Processes},
  journal      = {The Annals of Mathematical Statistics},
  year         = {1953},
  volume       = {24},
  number       = {4},
  pages        = {537--558},
  doi          = {10.1214/aoms/1177728913}
}

@article{Integratedperiodogram2,
  author  = {Ibragimov, Ildar A.},
  title   = {On Estimation of the Spectral Function of a Stationary {Gaussian} Process},
  journal = {Theory of Probability and Its Applications},
  volume  = {8},
  number  = {4},
  pages   = {366--401},
  year    = {1963}
}

@article{Integratedperiodogram3,
  author       = {M. S. Ginovyan},
  title        = {Asymptotically Efficient Nonparametric Estimation of Nonlinear Spectral Functionals},
  journal      = {Acta Applicandae Mathematicae},
  year         = {2003},
  volume       = {78},
  pages        = {145--154},
  doi          = {10.1023/A:1025708727313}
}

@article{Integratedperiodogram4,
  author       = {Mamikon S. Ginovyan},
  title        = {Efficient Estimation of Spectral Functionals for Continuous-Time Stationary Models},
  journal      = {Acta Applicandae Mathematicae},
  year         = {2011},
  volume       = {115},
  number       = {2},
  pages        = {233--254},
  doi          = {10.1007/s10440-011-9617-7}
}

@article{Integratedperiodogram5,
  author       = {Mamikon S. Ginovyan and Artur A. Sahakyan},
  title        = {Estimation of Spectral Functionals for {L}{\'e}vy-Driven Continuous-Time Linear Models with Tapered Data},
  journal      = {Electronic Journal of Statistics},
  year         = {2019},
  volume       = {13},
  number       = {1},
  pages        = {255--283},
  doi          = {10.1214/18-EJS1525}
}

@article{Integratedperiodogram6,
  author  = {Ginovyan, Mamikon S. and Sahakyan, Artur A.},
  title   = {Statistical Inference for Stationary Linear Models with Tapered Data},
  journal = {Statistics Surveys},
  volume  = {15},
  pages   = {154--194},
  year    = {2021}
}

@article{Integratedperiodogram7,
  author       = {Sourav Das and Suhasini Subba Rao and Junho Yang},
  title        = {Spectral Methods for Small Sample Time Series: A Complete Periodogram Approach},
  journal      = {Journal of Time Series Analysis},
  year         = {2021},
  volume       = {42},
  number       = {5-6},
  pages        = {597--621},
  doi          = {10.1111/jtsa.12584}
}

@inproceedings{Smoothstep1,
  author    = {Hazimeh, Hussein and Ponomareva, Natalia and Mol, Petros and Tan, Zhenyu and Mazumder, Rahul},
  title     = {The Tree Ensemble Layer: Differentiability Meets Conditional Computation},
  booktitle = {Proceedings of the 37th International Conference on Machine Learning ({ICML})},
  pages={4138--4148},
  series    = {Proceedings of Machine Learning Research},
  year      = {2020}
}

@misc{Smoothstep2,
  author       = {Kolouri, Soheil and Nadjahi, Kimia and Simsekli, Umut and Shahrampour, Shahin},
  title        = {Generalized Sliced Distances for Probability Distributions},
  howpublished = {arXiv preprint},
  year         = {2020},
  eprint       = {2002.12537},
  archivePrefix= {arXiv},
  primaryClass = {stat.ML}
}

@book{Jacobi,
  author    = {Szeg{\"o}, G{\'a}bor},
  title     = {Orthogonal Polynomials},
  series    = {American Mathematical Society Colloquium Publications},
  volume    = {23},
  edition   = {4},
  publisher = {American Mathematical Society},
  year      = {1975}
}

@book{Numericalanalysis,
  author    = {Scott, L. Ridgway},
  title     = {Numerical Analysis},
  publisher = {Princeton University Press},
  year      = {2011},
  isbn      = {9780691146867}
}

@book{FT,
  title={Table of integrals, series, and products},
  author={Gradshteyn, Izrail Solomonovich and Ryzhik, Iosif Moiseevich},
  year={2014},
  publisher={Academic press}
}

@article{Bsplines,
  author  = {Unser, Michael},
  title   = {Splines: A Perfect Fit for Signal and Image Processing},
  journal = {IEEE Signal Processing Magazine},
  volume  = {16},
  number  = {6},
  pages   = {22--38},
  year    = {1999},
  doi     = {10.1109/79.799930}
}

@book{Noise1,
  author    = {Foucart, Simon and Rauhut, Holger},
  title     = {A Mathematical Introduction to Compressive Sensing},
  series    = {Applied and Numerical Harmonic Analysis},
  publisher = {Birkh{\"a}user},
  year      = {2013}
}

@book{Noise2,
  author    = {Christensen, Ole},
  title     = {An Introduction to Frames and Riesz Bases},
  edition   = {2},
  series    = {Applied and Numerical Harmonic Analysis},
  publisher = {Birkh{\"a}user},
  year      = {2016}
}

@book{Noise3,
  author    = {Aster, Richard C. and Borchers, Brian and Thurber, Clifford H.},
  title     = {Parameter Estimation and Inverse Problems},
  publisher = {Elsevier},
  year      = {2018}
}

@article{Application1,
  author  = {Benwell, Christopher S. Y. and Davila-P{\'e}rez, Paula and Fried, Peter J. and Jones, Richard N. and Travison, Thomas G. and Santarnecchi, Emiliano and Pascual-Leone, Alvaro and Shafi, Mouhsin M.},
  title   = {{EEG} Spectral Power Abnormalities and Their Relationship with Cognitive Dysfunction in Patients with {Alzheimer}'s Disease and Type 2 Diabetes},
  journal = {Neurobiology of Aging},
  volume  = {85},
  pages   = {83--95},
  year    = {2020},
  doi     = {10.1016/j.neurobiolaging.2019.10.004}
}

@article{Application2,
  author  = {Pfurtscheller, Gert and Lopes da Silva, F. H.},
  title   = {Event-Related {EEG}/{MEG} Synchronization and Desynchronization: Basic Principles},
  journal = {Clinical Neurophysiology},
  volume  = {110},
  number  = {11},
  pages   = {1842--1857},
  year    = {1999},
  doi     = {10.1016/S1388-2457(99)00141-8}
}

@article{Application3,
  author  = {Shaffer, Fred and Ginsberg, Jay P.},
  title   = {An Overview of Heart Rate Variability Metrics and Norms},
  journal = {Frontiers in Public Health},
  volume  = {5},
  pages   = {258},
  year    = {2017},
  doi     = {10.3389/fpubh.2017.00258}
}

@book{Application4,
  author       = {P. P. Vaidyanathan},
  title        = {Multirate Systems and Filter Banks},
  publisher    = {Prentice Hall},
  address      = {Englewood Cliffs, NJ},
  year         = {1993},
  isbn         = {9780136057185}
}

@article{Application5,
  author  = {Muzaffar, Muhammad Umair and Sharqi, Rula},
  title   = {A Review of Spectrum Sensing in Modern Cognitive Radio Networks},
  journal = {Telecommunication Systems},
  volume  = {85},
  number  = {2},
  pages   = {347--363},
  year    = {2024},
  doi     = {10.1007/s11235-023-01079-1}
}

@article{Uncertainty,
  author  = {Donoho, David L. and Stark, Philip B.},
  title   = {Uncertainty Principles and Signal Recovery},
  journal = {SIAM Journal on Applied Mathematics},
  volume  = {49},
  number  = {3},
  pages   = {906--931},
  year    = {1989}
}

@article{Wavelet1,
  author  = {Daubechies, Ingrid},
  title   = {The Wavelet Transform, Time-Frequency Localization and Signal Analysis},
  journal = {IEEE Transactions on Information Theory},
  volume  = {36},
  number  = {5},
  pages   = {961--1005},
  year    = {1990},
  doi     = {10.1109/18.57199}
}

@book{Wavelet2,
  author    = {Mallat, St{\'e}phane},
  title     = {A Wavelet Tour of Signal Processing},
  publisher = {Academic Press},
  year      = {1999}
}

@article{Filterbank,
  author       = {J{\'e}r{\^o}me Antoni},
  title        = {Orthogonal-like fractional-octave-band filters},
  journal      = {The Journal of the Acoustical Society of America},
  year         = {2010},
  volume       = {127},
  number       = {2},
  pages        = {884--895},
  doi          = {10.1121/1.3273888}
}

@article{Energydetection,
  author       = {Harry Urkowitz},
  title        = {Energy Detection of Unknown Deterministic Signals},
  journal      = {Proceedings of the IEEE},
  year         = {1967},
  volume       = {55},
  number       = {4},
  pages        = {523--531},
  doi          = {10.1109/PROC.1967.5573}
}

@book{Sobolev,
  author    = {Adams, Robert A. and Fournier, John J. F.},
  title     = {Sobolev Spaces},
  publisher = {Elsevier},
  year      = {2003}
}

@book{Fourier,
  author    = {Stein, Elias M. and Shakarchi, Rami},
  title     = {Fourier Analysis: An Introduction},
  publisher = {Princeton University Press},
  year      = {2011}
}

@book{Lp,
  author    = {Axler, Sheldon},
  title     = {Measure, Integration \& Real Analysis},
  publisher = {Springer},
  year      = {2020},
  doi       = {10.1007/978-3-030-33143-6}
}

@book{Chebfun,
  editor    = {Driscoll, Tobin A. and Hale, Nicholas and Trefethen, Lloyd N.},
  title     = {Chebfun Guide},
  publisher = {Pafnuty Publications},
  address   = {Oxford},
  year      = {2014}
}

@misc{Signalprocessing,
  author       = {{The MathWorks, Inc.}},
  title        = {Signal Processing Toolbox},
  howpublished = {\url{https://www.mathworks.com/products/signal.html}},
  year         = {2024},
  note         = {Used with {MATLAB} {R}2024b. Accessed: Dec.\ 2025.}
}

@misc{Matlab2024b,
  author       = {{The MathWorks, Inc.}},
  title        = {{MATLAB} ({R}2024b)},
  howpublished = {\url{https://www.mathworks.com/products/matlab.html}},
  year         = {2024},
  note         = {Version 24.2 ({R}2024b). Accessed: Dec.\ 2025.}
}


\end{document}